\def\version{November 25, 2014}
\def\macrosPb{} % 12pt
\def\macrosHarxiv{} % arxiv hyperref
  \def\boldsymbol{\pmb}
  \DeclareMathAlphabet{\mathcal}{OMS}{cmsy}{m}{n}
\def\UseSection{%%
        \numberwithin{equation}{section}
	\theoremstyle{plain}% default theorem style 
        \newtheorem{theorem}    {Theorem}[section]
        \DefineTheorems % Use this to define other environments to be 
        		% numbered as ``theorem.''
}
\def\DefineTheorems{%%
	
	\newtheorem{lemma}      [theorem] {Lemma}
	
	\newtheorem{prop}       [theorem] {Proposition}
	
	\newtheorem{cor}        [theorem] {Corollary}

	\theoremstyle{definition}% ``defn'' theorem style 
	\newtheorem{defn}       [theorem] {Definition}
	
	\newtheorem{example}       [theorem] {Example}
	\newtheorem{rk} 	[theorem] {Remark}
	\theoremstyle{definition}% ``remark'' theorem style 

}
\newcommand{\bt}   {\begin{theorem}}
\newcommand{\et}   {\end  {theorem}}
\newcommand{\bl}   {\begin{lemma}}
\newcommand{\el}   {\end  {lemma}}
\newcommand{\bp}   {\begin{prop}}
\newcommand{\ep}   {\end  {prop}}
\newcommand{\bc}   {\begin{cor}}
\newcommand{\ec}   {\end  {cor}}
\newcommand{\bd}   {\begin{defn}}
\newcommand{\ed}   {\end  {defn}}
\newcommand{\ba}   {\begin{array}}
\newcommand{\ea}   {\end  {array}}
\newcommand{\be}   {\begin{enumerate}}
\newcommand{\ee}   {\end  {enumerate}}
\newcommand{\bi}   {\begin{itemize}}
\newcommand{\ei}   {\end  {itemize}}
\def\eq#1\en{\begin{equation}#1\end{equation}}  
\def\eqsplit#1\ensplit{
	\begin{equation}\begin{split}#1\end{split}\end{equation}
	}
\def\eqalign#1\enalign{
	\begin{align}#1\end{align}
	}
\def\eqmul#1\enmul{
	\begin{multline}#1\end{multline}
	}
\newcommand{\eqarrstar} {\begin{eqnarray*}} 
\newcommand{\enarrstar} {\end{eqnarray*}} 
\newcommand{\eqarray}   {\begin{eqnarray}} 
\newcommand{\enarray}   {\end{eqnarray}} 
\newcommand{\nnb}	{\nonumber \\} 
\newcommand{\lbeq}[1]  {\label{e:#1}}
\newcommand{\refeq}[1] {\eqref{e:#1}}    % AMS-LaTeX trick!
\newcommand{\labelcounter}[2]{{%
	\stepcounter{#1}%	First, increase the ``countC'' by one.
	\protected@write\@auxout{}%
	{\string\newlabel{#2}{{\csname the#1\endcsname}{\thepage}}}%
		% Then write out the contents of ``countC'' together with 
		% the page number to aux file.  This is what ``label'' 
		% usually does. 
	{\ref{#2}}%	Finally, make sure to refer to this label, 
		%	when defined. 
	}}
\newcommand{\Cbold} {{\mathbb C}}  
\newcommand{\Ebold} {{\mathbb E}}
\newcommand{\Nbold} {{\mathbb N}}
\newcommand{\Rbold} {{\mathbb R}}
\newcommand{\Zbold} {{\mathbb Z}}
\newcommand{\Acal}   {\mathcal{A}} 
\newcommand{\Bcal}   {\mathcal{B}} 
\newcommand{\Ccal}   {\mathcal{C}}
\newcommand{\Fcal}   {\mathcal{F}} 
\newcommand{\Hcal}   {\mathcal{H}}
\newcommand{\Ncal}   {\mathcal{N}} 
\newcommand{\Pcal}   {\mathcal{P}}
\newcommand{\Rcal}   {\mathcal{R}}
\newcommand{\Scal}   {\mathcal{S}} 
\newcommand{\Ucal}   {\mathcal{U}}
\newcommand{\Zd}    {{ {\Zbold}^d }}
\newcommand{\spose}[1] {{\hbox to 0pt{#1\hss}} }
\newcommand{\ltapprox} {\mathrel{\spose{\lower 3pt\hbox{$\mathchar"218$}}
 \raise 2.0pt\hbox{$\mathchar"13C$}}}
\newcommand{\gtapprox} {\mathrel{\spose{\lower 3pt\hbox{$\mathchar"218$}}
 \raise 2.0pt\hbox{$\mathchar"13E$}}}
\definecolor{at}{rgb}{0.0, 0.5, 0.0} % green, darker than the standard green.
\renewcommand{\to} {\rightarrow}
\newcommand{\sumtwo}[2]{\sum_{ \mbox{ \scriptsize
    $\begin{array}{c}
                        {#1} \\ {#2}
                        \end{array} $ }
    }
}
\newcommand{\R}{\Rbold}
\newcommand{\N}{\Nbold}
\newcommand{\C}{\mathbb{C}}
\newcommand{\Lambdabold}{\boldsymbol{\Lambda}}
\newcommand{\1}{\mathbbm{1}}
\newcommand{\Cbf}{\boldsymbol{C}}
\newcommand{\Abf}{\boldsymbol{A}}
\newcommand{\psib}{\bar\psi}
\newcommand{\Ex}{\mathbb{E}}
\newcommand{\Econstg}{\alpha_{G}}
\newcommand{\pair}[1]{\langle #1 \rangle}
\newcommand{\Phipoltil}{\widetilde{\Pi}}
\newcommand{\units}{\Ucal}
\newcommand{\concat}{\circ}
\newcommand{\Ttimes}{T}
\newcommand{\h}{\mathfrak{h}}
\newcommand{\poly}{P}
\newcommand{\species}{\mathbf{s}}
\newcommand{\sgn}{\mathrm{sgn}}
  \newcommand{\texorpdfstring}[2]{#1}
\title  {
       A renormalisation group method.
       \\
       I.
       Gaussian integration and normed algebras
        }
\author{
David C. Brydges\thanks{Department of Mathematics,
University of British Columbia,
Vancouver, BC, Canada V6T 1Z2.
E-mail: {\tt db5d@math.ubc.ca}, {\tt slade@math.ubc.ca}.}\;
 and Gordon Slade$^*$
}
\date\version
\begin{document}

\maketitle

\begin{abstract}
This paper is the first in a series devoted to the development of a
rigorous renormalisation group method for lattice field theories
involving boson fields, fermion fields, or both.  Our immediate
motivation is a specific model, involving both boson and fermion
fields, which arises as a representation of the continuous-time weakly
self-avoiding walk.  In this paper, we define normed algebras suitable
for a renormalisation group analysis, and develop methods for
performing analysis on these algebras.  We also develop the theory of
Gaussian integration on these normed algebras, and prove estimates for
Gaussian integrals.  The concepts and results developed here provide a
foundation for the continuation of the method presented in subsequent
papers in the series.
\end{abstract}

\section{Introduction}
\label{sec:intro}

This paper is the first in a series devoted to the development of a
rigorous renormalisation group method.  We develop the method with the
specific goal of providing the necessary ingredients for our analysis
of the critical behaviour of the continuous-time weakly self-avoiding
walk in dimension~4 \cite{BBS-saw4-log,BBS-saw4}, via its
representation as a supersymmetric field theory involving both boson
and fermion fields \cite{BIS09}.  However, our approach is more
general, and also applies in other settings, including purely bosonic
or purely fermionic field theories.  In particular, it is applied to
the 4-dimensional $n$-component $|\varphi|^4$ model in
\cite{BBS-phi4-log}.  Other approaches to the rigorous renormalisation
group are discussed in \cite{Bryd09}.

In the renormalisation group approach, we are interested in performing
a Gaussian integral with respect to a positive-definite covariance
operator $C$.  The integration is performed progressively: the
covariance is decomposed as a sum of positive-definite terms
$C=C_1+C'$ and the original integral is equal to a convolution of
Gaussian integrals with respect to $C_1$ and $C'$.  A proof that
decomposition of the covariance corresponds to convolution of Gaussian
integrals can be found for our context in \cite{BI03d}, but we will
give a self-contained proof here within our current formalism and
notation.

In order to perform analysis with Gaussian integrals, it is necessary
to define suitable norms.  In this paper, we define an algebra $\Ncal$
and the $T_\phi$ \emph{semi-norm} on $\Ncal$, and prove that the
$T_\phi$ semi-norm obeys an essential product property.
We prove several estimates for the $T_\phi$ semi-norm,
which are essential for our renormalisation group method,
including estimates for Gaussian integrals.  In addition, as an
example of use of the $T_\phi$ semi-norm, and as preparation for more
detailed estimates obtained in \cite{BS-rg-IE}, we prove a preliminary
estimate for the self-avoiding walk interaction.

The concepts
and results from this paper that are needed in subsequent papers in
the series are summarised in Section~\ref{sec:gint}, which pertains
to Gaussian integration, and in Section~\ref{sec:Tphi-props}, which
pertains to norms and norm estimates.
Most of the proofs are deferred to later sections.

\section{Gaussian integration}
\label{sec:gint}

\subsection{Fields and the algebra \texorpdfstring{$\Ncal$}{Ncal}}
\label{sec:Ncal}

Given a finite set $\Lambdabold$, and $p\in\N$, let $\Lambdabold^p$
denote the $p$-fold cartesian product of $\Lambdabold$ with itself, so
that elements of $\Lambdabold^p$ are sequences of elements of
$\Lambdabold$ of \emph{length} $p$.  We define
$\Lambdabold^0=\{\varnothing\}$ to be the set whose element is the
empty sequence.  Then $\Lambdabold^{*} = \sqcup_{p=0}^\infty
\Lambdabold^p$ is the set of arbitrary finite sequences of elements of
$\Lambdabold$, of any length, including zero.  We typically denote the
length of $z \in \Lambdabold^{*}$ as $p=p (z)$ or $q=q(z)$, and, for
$z \in \Lambdabold^{*}$, we write $z! = p(z)!$.  For $z', z'' \in
\Lambdabold^{*}$ we define the \emph{concatenation} $z'\concat z''$ to
be the sequence in $\Lambdabold^{*}$ whose elements are the elements
of $z'$ followed by the elements of $z''$.

Let $\Lambdabold_b$ be any finite set.  An element of
$\R^{\Lambdabold_b}$ is called a \emph{boson field}, and can be
written as $\phi = (\phi_{x},\;x \in \Lambdabold_{b})$.  Let
$\Rcal=\Rcal(\Lambdabold_b)$ denote the ring of smooth functions from
$\R^{\Lambdabold_b}$ to $\C$.  Here \emph{smooth} means having at
least $p_\Ncal$ continuous derivatives, where $p_\Ncal$ is a parameter
at our disposal.

Let $\Lambdabold_f$ be a finite set and let $\Lambdabold =
\Lambdabold_b \sqcup \Lambdabold_f$.  The \emph{fermion field} $\psi =
(\psi_{y}, y \in \Lambdabold_{f})$ is a set of anticommuting
generators for an algebra $\Ncal=\Ncal (\Lambdabold)$ over the ring
$\Rcal$.  In particular, $\psi_y^2=0$ for all $y \in \Lambdabold_f$.
By definition, $\Ncal$ consists of elements $F$ of the form
\begin{equation}
    \label{e:K}
    F
=
    \sum_{y \in \Lambdabold_f^*} \frac{1}{y!} F_y \psi^y
,
\end{equation}
where each coefficient $F_{y}$ is an element of $\Rcal$, and
\begin{equation}
\lbeq{psiy}
    \psi^y = \begin{cases}
    1 & \text{if $q(y)=0$}
    \\
    \psi_{y_1}\cdots \psi_{y_q} & \text{if $q \geq 1$ and $y=(y_1,\ldots,y_q)$}.
    \end{cases}
\end{equation}
We always require $F_{y}$ to be antisymmetric under permutation of the
components of $y$; this ensures that the representation \refeq{K} is
unique.  We denote $F_{y}$ evaluated at $\phi$ by $F_{y} (\phi)$, and
write $F(\phi)=\sum_{y \in \Lambdabold_f^*} \frac{1}{y!} F_y(\phi)
\psi^y$.  Given $x\in \Lambdabold_b^*$, we define $\phi^x$ in the same
way as \refeq{psiy}.

\begin{defn}
\label{def:Npoly} For $A$ a nonnegative integer, we say that $F \in
\Ncal$ is a \emph{polynomial of degree} $A$ if there are coefficients
$F_{x,y}\in\C$ such that $F(\phi) = \sum_{x,y: p(x)+q(y) \le A}
\frac{1}{x!y!} F_{x,y} \phi^x \psi^y$, with $F_{x,y}\neq 0$ for some
$x,y$ with $p(x)+q(y)=A$.
\end{defn}

Polynomial elements of $\Ncal$ play an important role in our analysis.
An example of a polynomial of degree 2 is $\phi_w\phi_x +
\psi_y\psi_z$, for some $w,x\in \Lambdabold_b$ and
$y,z\in\Lambdabold_f$.

\subsection{Fermionic Gaussian integration}

Let $\Lambdabold'_{b}$ and $\Lambdabold'_{f}$ be sets, with an order
specified on the elements of $\Lambdabold'_f$.  We
integrate over fields labelled by elements of these sets, starting
in this section
with the fermion fields labelled by $\Lambdabold'_{f}$, and then
in Section~\ref{sec:bgi} with
the boson fields with labels in $\Lambdabold'_{b}$.

We define the monomial $\psi^{\Lambdabold'_{f}}$
to be the product of the generators in the specified order.  Let
$\Lambdabold' = \Lambdabold'_{b}\sqcup \Lambdabold'_{f}$.  We write $F
\in \Ncal (\Lambdabold\sqcup \Lambdabold'_b)$ for the algebra $\Ncal$
with fermion fields indexed by $\Lambdabold_f$ and boson fields
indexed by $\Lambdabold_b \sqcup \Lambdabold_b'$, and $F \in \Ncal
(\Lambdabold\sqcup \Lambdabold')$ for the algebra $\Ncal$ with fermion
fields indexed by $\Lambdabold_f\sqcup \Lambdabold_f'$ and boson
fields indexed by $\Lambdabold_b \sqcup \Lambdabold_b'$.

\begin{defn}
\label{def:grassman-integration}
The \emph{Grassmann integral} is the
linear map $\int_{\Lambdabold'_{f}}:\Ncal
(\Lambdabold\sqcup\Lambdabold')\rightarrow \Ncal
(\Lambdabold\sqcup\Lambdabold'_{b})$ uniquely defined
by the conditions:
\\
(a) for all $F \in \Ncal (\Lambdabold\sqcup
\Lambdabold'_b)$, $\int_{\Lambdabold'_{f}} F\psi^{y'} = 0$
whenever the elements of $y'\in (\Lambdabold_f')^*$ do not
form an enumeration of $\Lambdabold_f'$,
and
\\
(b) $\int_{\Lambdabold'_{f}} F\psi^{\Lambdabold'_{f}} =
%\NGrass
F$ for all $F \in \Ncal (\Lambdabold\sqcup
\Lambdabold'_b)$.
\end{defn}

The classic reference for Grassmann integration is
\cite{Bere66}; accessible and more modern treatments can be found in
\cite{CSS13,FKT02,Salm99}.

Given an antisymmetric invertible $\Lambdabold'_{f} \times
\Lambdabold'_{f}$ matrix $\Abf_{f}$, let
\begin{equation}
\label{e:Sfdef}
    S_{f}
=
    \frac{1}{2}\sum_{u,v \in \Lambdabold'_{f}}
    \Abf_{f;u,v} \psi_{u} \psi_{v}
.
\end{equation}
Since the generators anti-commute and since $\Lambdabold'_{f}$ is
finite, the series $\sum_{n=0}^\infty \frac{1}{n!} (-S_{f})^{n}$
terminates after finitely many terms, and therefore defines an element
of $\Ncal (\Lambdabold')$, and hence also of $\Ncal
(\Lambdabold\sqcup\Lambdabold')$.  We denote this element by
$e^{-S_{f}}$.  Let $\Cbf_{f}$ be the inverse of $\Abf_{f}$.  The
Grassmann analogue of Gaussian integration is the linear map
$\Ex_{\Cbf_f}: \Ncal (\Lambdabold\sqcup\Lambdabold') \to \Ncal
(\Lambdabold\sqcup\Lambdabold'_{b})$ defined by
\begin{equation}
    \label{e:ECf}
    \Ex_{\Cbf_f} F = N_{f}\int_{\Lambdabold'_{f}}e^{-S_{f}}F
    ,
    \quad\quad
     F \in \Ncal (\Lambdabold\sqcup\Lambdabold' )
,
\end{equation}
where $N_{f}$ is a normalisation constant such that $ \Ex_{C_f}
1=1$. It is a consequence of \cite[(3.16)]{Bere66} that
\begin{equation}
\label{e:Pfaff}
    N_{f} =
    (\det \Cbf_{f})^{1/2}
.
\end{equation}
The choice of square root depends on the order we have chosen
for $\Lambda'_{f}$.  We will be specific below in a less general
setting.

\subsection{Bosonic Gaussian integration}
\label{sec:bgi}

Given a real symmetric positive-definite $\Lambdabold'_{b}
\times \Lambdabold'_{b}$ matrix $\Abf_b$, and given $\phi \in
\R^{\Lambdabold_b'}$, let
\begin{equation}
\label{e:Sbdef}
    S_{b}
=
    \frac{1}{2}\sum_{u,v \in \Lambdabold'_{b}}
    \Abf_{b;u,v} \phi_{u} \phi_{v}
.
\end{equation}
The matrix $\Abf_b$ has positive eigenvalues and therefore an
inverse matrix $\Cbf_b$ exists.
The Gaussian expectation $\Ex_{\Cbf_b} :
\Ncal(\Lambdabold\sqcup\Lambdabold'_{b}) \to \Ncal (\Lambdabold)$ is
the linear map defined as follows.  Let $D\phi$ be Lebesgue measure on
$\R^{\Lambdabold_b'}$.  For $F \in
\Rcal(\Lambdabold_b\sqcup\Lambdabold'_b)$, we define
\begin{equation}
    \label{e:ECb}
    \Ex_{\Cbf_b} F =   N_{b}\int_{\R^{\Lambdabold'_{b}}} e^{-S_{b}}F
    \,D\phi ,
\end{equation}
where $N_{b}$ is chosen so that $\Ex_{C_b} 1=1$.  It is a standard
fact about Gaussian integrals that $N_{b}$ is given by the
positive square root
\begin{equation}
\label{e:Nb}
    N_b = \left( \det (2\pi \Cbf_b) \right)^{-1/2}.
\end{equation}
Of course $\Ex_{\Cbf_b}$ is only defined on elements of $\Ncal
(\Lambdabold\sqcup\Lambdabold'_{b})$ which are such that the growth of
the coefficients at infinity is not too rapid.  For $F=\sum_{y \in
\Lambdabold_f} \frac{1}{y!}F_y \psi^y \in
\Ncal(\Lambdabold\sqcup\Lambdabold'_{b})$, we define
\begin{equation}
    \Ex_{\Cbf_b} F = \sum_{y \in \Lambdabold_f}\frac{1}{y!}
    (\Ex_{\Cbf_b} F_y) \psi^y.
\end{equation}

\subsection{Combined bosonic-fermionic Gaussian integration on \texorpdfstring{$\Ncal$}{Ncal}}
\label{sec:Grass}

Let $\Cbf$ denote the pair $\Cbf_b,\Cbf_f$.
Given matrices
$\Abf_{f}$ and $\Abf_{b}$ as above, we define the
\emph{combined bosonic-fermionic expectation} to be
the linear map $\Ex_{\Cbf} : \Ncal (\Lambdabold\sqcup \Lambdabold') \to \Ncal
(\Lambdabold)$ given by
\begin{equation}
\label{e:Ecomb}
    \Ex_{\Cbf} = \Ex_{\Cbf_{b}} \Ex_{\Cbf_{f}},
\end{equation}
where $\Ex_{\Cbf_{b}}$ acts only on bosons, and $\Ex_{\Cbf_f}$ acts
only on fermions.
By linearity,
the action of $\Ex_{\Cbf}$ is determined by its action on $KF$ where
$K \in \Ncal(\Lambdabold \sqcup \Lambdabold_b')$ and $F$ is a monomial
in the generators indexed by $\Lambdabold_f'$.
The map $\Ex_{\Cbf}$ is defined, for such $K,F$, by
\begin{align}
    \label{e:ECbf}
    \Ex_{\Cbf} KF
    = ( \Ex_{\Cbf_{b}}K)( \Ex_{\Cbf_{f}}F)
    &=
    \Big(N_b
    \int_{\R^{\Lambdabold'_{b}}}
    e^{-S_b }K \,D\phi\Big)
    \Big(N_{f}\int_{\Lambdabold'_{f}}
    e^{-S_{f}}F\Big)
    .
\end{align}
On the right-hand side, the boson and fermion fields corresponding
to $\Lambdabold'$ have been integrated out, leaving dependence only
on the fields corresponding to $\Lambdabold$.

\subsection{The Laplacian}

It is ordinary calculus to differentiate a function
$f\in \Rcal(\Lambdabold_b)$ with
respect to the components $\phi_u$ of the boson field,
for $u \in \Lambdabold_b$.
The following definition extends this calculus by
providing the standard Grassmann analogue of
differentiation with respect to the fermion field
(see, e.g., \cite{Bere66,FKT02,Salm99}).

\begin{defn}
\label{def:iu}
For $u \in \Lambdabold_f$, the linear map $i_{u} : \Ncal (\Lambdabold)
\rightarrow \Ncal (\Lambdabold )$ is defined
uniquely by the conditions:
\\
(a) $i_{u} (f \psi^y)= f i_u \psi^y$ for $ f \in \Rcal(\Lambdabold_b)$,
$y \in \Lambdabold_f^*$,
\\
(b) $i_u$ acts as an anti-derivation on products of factors of $\psi$,
namely $i_u(\psi^{y_1}\psi^{y_2}) = (i_u\psi^{y_1})\psi^{y_2}
+ (-1)^{p_1}\psi^{y_1}(i_u\psi^{y_2})$, for $y_1,y_2\in \Lambdabold_f^*$
and $p_1$ the length of $y_1$,
and
\\
(c) $i_u \psi_v = \delta_{u,v}$ for $u,v\in\Lambdabold$, where the
right-hand side is the Kronecker delta.
\\
It is natural, and also standard, to write
\eq
    i_u = \frac{\partial}{\partial \psi_{u}}.
\en
By (b) and (c), these operators anti-commute with each
other: $i_ui_v = -i_vi_u$.
\end{defn}

Suppose that there is a bijection $x \mapsto x' =x'(x)$ between a
subset of $\Lambdabold$ and $\Lambdabold '$.  The elements of
$\Lambdabold$ where the bijection is not defined are called
\emph{external}; they do not participate in any integrations.  We
extend the matrices $\Cbf_b,\Cbf_f$ to $\Lambdabold_b \times
\Lambdabold_b$ and $\Lambdabold_f \times \Lambdabold_f$, respectively,
by setting $\Cbf_{b,u',v'}= \Cbf_{f,u',v'}=0$ when $u'$ or $v'$ is
undefined.  We write $\Cbf$ for the pair $\Cbf_b,\Cbf_f$.  The
Laplacian operator $\Delta_{\Cbf} : \Ncal (\Lambdabold ) \rightarrow
\Ncal (\Lambdabold )$ is then defined by
\begin{equation}
\label{e:LapC}
    \Delta_{\Cbf}
=
    \sum_{u,v \in \Lambdabold_{b}}
    \Cbf_{b;u,v}
    \frac{\partial}{\partial \phi_{u}}
    \frac{\partial}{\partial \phi_{v}}
    +
    \sum_{u,v \in \Lambdabold_{f}}
    \Cbf_{f;u,v}
    \frac{\partial}{\partial \psi_{u}}
    \frac{\partial}{\partial \psi_{v}}
    ,
\end{equation}
where the first term on the right-hand side acts only on the
coefficients $F_y(\phi)$ of $F \in \Ncal$, while the second acts
only on the fermionic part $\psi^y$.

\subsection{Gaussian integration and the convolution property}
\label{sec:gi-def}

\begin{example}
\label{ex:conv}
For a bounded function $f$ defined on $\Rbold$ and a probability
measure $\mu$ on $\R$, we can define the convolution $\mu \star f (x)
= \int f (x+y)d\mu (y)$. The map $f \mapsto \mu \star f $ is the
composition of the map $(\theta f) (x,y) = f (x + y)$ followed by
integrating $y$ with respect to $\mu$. The map $\theta$ maps a
function of one variable to a function of two variables.
\end{example}

The following definition implements the construction of
Example~\ref{ex:conv} in the context of the algebra $\Ncal$.  To avoid
simultaneously using $\phi$ to denote a function on $\Lambdabold_{b}$
and a function on the larger space $\Lambdabold_{b} \sqcup
\Lambdabold'_{b}$, we replace $\phi:\Lambdabold_{b} \sqcup
\Lambdabold'_{b}\rightarrow \R$
by
the notation
\begin{equation}\label{e:phi-extended}
    \phi\sqcup\xi:\Lambdabold_{b} \sqcup \Lambdabold'_{b}
\rightarrow
    \R
,
\end{equation}
where $(\phi\sqcup\xi)_{x} = \phi_{x}$ and $(\phi\sqcup\xi)_{x'} =
\xi_{x'}$.
The algebra $\Ncal(\Lambdabold)$ is a subset of $\Ncal(\Lambdabold
\sqcup \Lambdabold')$.

\begin{defn}
\label{def:theta-new} Given $t\in \R$, we define the algebra
homomorphism $\theta_{t} : \Ncal (\Lambdabold) \rightarrow \Ncal
(\Lambdabold \sqcup \Lambdabold')$ to be the unique algebra
homomorphism which obeys:
\\
(a)
the action on generators  $\theta_{t} \psi_{y} = \psi_{y} +
t\psi_{y'}$, for $y \in \Lambdabold_f$, and
\\
(b)
the action on coefficients
$(\theta_{t} f) (\phi,\xi) = f (\phi + t\xi)$, for $f \in \Rcal(\Lambdabold_b)$.
\\
If $x'$ or $y'$ is not defined, as discussed above \refeq{LapC},
then the associated $\xi_{x'}$,
$\psi_{y'}$ is set equal to zero.
Also, on the right-hand side in (b), we interpret $(\phi+t\xi)_x$
as $\phi_x + t\xi_{x'(x)}$.
We write $\theta =\theta_{1}$.
\end{defn}

The following proposition states a convolution property of Gaussian
integrals that is at the heart of the renormalisation group method.

\begin{prop}
\label{prop:conv} For covariances $\Cbf_1,\Cbf_2$ and for
$F \in \Ncal(\Lambdabold)$ such that both sides
of \refeq{conv} are well-defined,
\begin{equation}
\label{e:conv}
    (\Ex_{\Cbf_2} \theta \circ \Ex_{\Cbf_1} \theta ) F
    =
    \Ex_{\Cbf_2+\Cbf_1}\theta F.
\end{equation}
Moreover, if $P \in \Ncal(\Lambdabold)$ is a polynomial of finite
degree, as in Definition~\ref{def:Npoly}, then
\begin{equation}
\label{e:ELap}
    \Ex_{\Cbf} \theta P
    =
    e^{\frac{1}{2} \Delta_{\Cbf}} P.
\end{equation}
\end{prop}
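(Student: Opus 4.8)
The plan is to establish the polynomial identity \refeq{ELap} first, as a finite-dimensional fact of Gaussian calculus, and then to deduce the convolution property \refeq{conv} from it together with the classical statement that convolution of centred Gaussians adds covariances. I would prove \refeq{ELap} by induction on the degree $A$ of $P$. The case $A=0$ is just $\Ex_\Cbf 1=1$ together with the fact that $e^{\frac12\Delta_\Cbf}$ fixes constants. For the inductive step it suffices, by linearity, to treat $P=\phi_w Q$ and $P=\psi_z Q$ with $Q$ of degree $A-1$. Since $\theta$ is an algebra homomorphism, $\theta(\phi_w Q)=(\phi_w+\xi_{w'})\,\theta Q$ and $\theta(\psi_z Q)=(\psi_z+\psi_{z'})\,\theta Q$, with $\xi_{w'}$ (resp.\ $\psi_{z'}$) set to zero when $w$ (resp.\ $z$) is external. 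The unprimed factors $\phi_w,\psi_z$ are inert under $\Ex_\Cbf$ and pull out, while the primed factors are removed by integration by parts,
\[
  \Ex_{\Cbf_b}[\xi_{w'}R]=\sum_{v'}\Cbf_{b;w',v'}\,\Ex_{\Cbf_b}\Big[\tfrac{\partial R}{\partial\xi_{v'}}\Big],
  \qquad
  \Ex_{\Cbf_f}[\psi_{z'}R]=\sum_{v'}\Cbf_{f;z',v'}\,\Ex_{\Cbf_f}\Big[\tfrac{\partial R}{\partial\psi_{v'}}\Big],
\]
the second being the Grassmann analogue of the first.

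Applying these with $R=\theta Q$ and using the chain-rule identities $\tfrac{\partial}{\partial\xi_{v'}}\theta=\theta\,\tfrac{\partial}{\partial\phi_v}$ and $\tfrac{\partial}{\partial\psi_{v'}}\theta=\theta\,\tfrac{\partial}{\partial\psi_v}$ (the fermionic one because both sides are anti-derivations agreeing on generators, $\theta$ being a parity-preserving homomorphism), the inductive hypothesis applied to $Q$, $\tfrac{\partial}{\partial\phi_v}Q$ and $\tfrac{\partial}{\partial\psi_v}Q$ reduces \refeq{ELap} for $\phi_w Q$ and $\psi_z Q$ to the operator identities $e^{\frac12\Delta_\Cbf}\phi_w=\big(\phi_w+\sum_v\Cbf_{b;w,v}\tfrac{\partial}{\partial\phi_v}\big)e^{\frac12\Delta_\Cbf}$ and $e^{\frac12\Delta_\Cbf}\psi_z=\big(\psi_z+\sum_v\Cbf_{f;z,v}\tfrac{\partial}{\partial\psi_v}\big)e^{\frac12\Delta_\Cbf}$, where $\phi_w,\psi_z$ act by left multiplication and $\Cbf$ is the extended covariance of \refeq{LapC}. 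These follow from the commutators $[\tfrac12\Delta_\Cbf,\phi_w]=\sum_v\Cbf_{b;w,v}\tfrac{\partial}{\partial\phi_v}$ and $[\tfrac12\Delta_\Cbf,\psi_z\cdot]=\sum_v\Cbf_{f;z,v}\tfrac{\partial}{\partial\psi_v}$, computed from the Leibniz rules of Definition~\ref{def:iu} and the (anti)symmetry of $\Cbf_b$ (resp.\ $\Cbf_f$), together with the fact that each commutator again commutes with $\Delta_\Cbf$, so that $e^A B=(B+[A,B])e^A$ applies. (Alternatively \refeq{ELap} follows from Wick's theorem, since $\Ex_\Cbf\theta P$ and $e^{\frac12\Delta_\Cbf}P$ expand as the same sum over pairings of the factors of $P$; the induction is cleaner to write down.)

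To pass to \refeq{conv}, I would write $\Ex_{\Cbf_i}=\Ex_{\Cbf_{i,b}}\Ex_{\Cbf_{i,f}}$ and split $\theta$ into its action on boson coefficients ($\phi\mapsto\phi+\xi$) and on fermion generators ($\psi_y\mapsto\psi_y+\psi_{y'}$); these two halves act on disjoint data, and each $\Ex_{\Cbf_{i,b}}$ commutes with every $\Ex_{\Cbf_{j,f}}$ and with the fermionic half of $\theta$. Under the hypothesis that both sides of \refeq{conv} are well-defined --- which I read as absolute integrability of the relevant integrands, so that Fubini applies --- the left side therefore factors as the composition of a purely bosonic chain and a purely fermionic chain, and it suffices to prove $\Ex_{\Cbf_{2,b}}\theta\,\Ex_{\Cbf_{1,b}}\theta=\Ex_{\Cbf_{1,b}+\Cbf_{2,b}}\theta$ and $\Ex_{\Cbf_{2,f}}\theta\,\Ex_{\Cbf_{1,f}}\theta=\Ex_{\Cbf_{1,f}+\Cbf_{2,f}}\theta$ separately. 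The bosonic identity is the classical statement that the Gaussian expectation of $g(\phi+\xi_1+\xi_2)$, with $\xi_1,\xi_2$ independent and centred of covariances $\Cbf_{1,b},\Cbf_{2,b}$, equals the Gaussian expectation of $g(\phi+\zeta)$ with $\zeta$ centred of covariance $\Cbf_{1,b}+\Cbf_{2,b}$; this is proved by Fubini and completing the square in the product Gaussian density. For the fermionic identity, every element of $\Ncal$ is a polynomial in the finitely many anticommuting generators with $\Rcal$-valued coefficients, which are inert under $\Ex_{\Cbf_f}$ and $\theta$, so \refeq{ELap} gives $\Ex_{\Cbf_{i,f}}\theta=e^{\frac12\Delta_{\Cbf_{i,f}}}$; since $\Cbf\mapsto\Delta_\Cbf$ is linear and any two of these constant-coefficient operators commute, $e^{\frac12\Delta_{\Cbf_{2,f}}}e^{\frac12\Delta_{\Cbf_{1,f}}}=e^{\frac12\Delta_{\Cbf_{1,f}+\Cbf_{2,f}}}=\Ex_{\Cbf_{1,f}+\Cbf_{2,f}}\theta$. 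Combining the two pieces gives \refeq{conv}.

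The main obstacle I anticipate is the fermionic bookkeeping: one must keep the sign in the Grassmann integration-by-parts formula, the ordering convention for the two derivatives $\tfrac{\partial}{\partial\psi_u}\tfrac{\partial}{\partial\psi_v}$ in \refeq{LapC}, and the square-root normalisation $N_f=(\det\Cbf_f)^{1/2}$ (pinned by $\Ex_\Cbf 1=1$) mutually consistent, so that the fermionic commutator and integration-by-parts identities emerge with exactly the signs required. A secondary point is the careful justification of the Fubini interchanges in the bosonic step from the bare ``well-defined'' hypothesis. Neither is conceptual, but both are precisely where an unnoticed sign error or a missing integrability hypothesis would break the argument.
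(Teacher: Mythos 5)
Your argument is correct in outline, but it reaches both halves of the proposition by a genuinely different route than the paper. For \refeq{ELap} the paper does not induct on degree: it shows (Lemma~\ref{lem:*heat-eq}) that $F_t=\Ex_{t\Cbf}\theta F$ solves the heat equation $\frac{d}{dt}F_t=\frac12\Delta_{\Cbf}F_t$ --- using exactly the two integration-by-parts inputs you invoke, namely \refeq{Gibp} and Lemma~\ref{lem:fibp} --- and then identifies $\Ex_{\Cbf}\theta P$ with $e^{\frac12\Delta_{\Cbf}}P$ by uniqueness of solutions of a finite-dimensional linear ODE system. Your induction with $e^AB=(B+[A,B])e^A$ is a legitimate alternative that avoids the $t$-parameter family, at the cost of more operator-algebra bookkeeping; the heat-equation route concentrates all the sign-sensitivity into a single derivative computation and then lets linear-ODE uniqueness do the rest. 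For \refeq{conv} both you and the paper split into a bosonic and a fermionic statement and dispose of the fermionic one via \refeq{ELap} (every element of $\Ncal$ being a polynomial in the finitely many generators); for the bosonic one the paper reduces to $f(\phi)=e^{i\phi\cdot\eta}$ and multiplies Fourier transforms, where you complete the square --- equivalent classical arguments. One concrete caution on the point you flag yourself: if $\frac{\partial}{\partial\psi_u}\frac{\partial}{\partial\psi_v}$ in \refeq{LapC} is read as ``apply $\frac{\partial}{\partial\psi_v}$ first,'' then with $\Cbf_f$ antisymmetric the commutator evaluates to $[\frac12\Delta_{\Cbf},\psi_z\,\cdot\,]=-\sum_v\Cbf_{f;z,v}\frac{\partial}{\partial\psi_v}$, the opposite sign to what you wrote and to what Lemma~\ref{lem:fibp} feeds into your induction; your identity holds under the opposite ordering convention. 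So the consistency check you defer is genuinely load-bearing and must be carried out, but it is a matter of fixing conventions, not a flaw in the strategy.
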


The identity \refeq{conv} follows immediately from \refeq{ELap} for
polynomial $F$, but \refeq{conv} holds more generally.  A proof of
Proposition~\ref{prop:conv} is given in Section~\ref{sec:Gihe}.  The
convolution property \refeq{conv} is standard (see, e.g., \cite{FKT02}
for the purely fermionic version), but our proof follows the approach
in \cite{BI03d} which extends the familiar connection \refeq{ELap}
between Gaussian integration and the Laplacian to the mixed
bosonic-fermionic integral.

The formulas \eqref{e:ELap} and \eqref{e:LapC} compute
moments. For example, if we take $P=\phi_{u} \phi_{v}$ and after
evaluation of \eqref{e:ELap} set $\phi=0$, the result is
$\Ex_{\Cbf} \xi_{u} \xi_{v} = \Cbf_{b;u,v}$. Similarly, by taking
$P=\psi_{u} \psi_{v}$, we obtain $\Ex_{\Cbf} \psi_{u} \psi_{v} = - \Cbf_{f;
u,v}$.  Thus \eqref{e:ELap} is a generalisation of Wick's
theorem (see, e.g., \cite[Lemma~2.3]{BIS09}), which is the standard
formula for moments of a Gaussian measure.

\subsection{Conjugate fermion field}
\label{sec:cff}

Suppose that
$\Lambdabold_f'$ has even cardinality $2M_f$, so
the Grassmann generators can be written in a list as
$\bar\psi_1,\psi_1,\ldots, \bar\psi_{M_f},\psi_{M_f}$, or, more compactly, as
$(\bar\psi_k,\psi_k)_{k=1,\ldots, {M_f}}$.
For the Grassmann generators, there is not a notion of complex
conjugation, so here the bars are used only as a notational
device to list the generators
in pairs.  However, we will still refer to the pairs of generators
as conjugate generators (and see Section~\ref{sec:df} below).
We use the order $\bar\psi_1, \psi_1, \bar\psi_2,\psi_2, \ldots
\bar\psi_{M_f},\psi_{M_f}$ for the generators
in the definition of Grassmann
integration in Definition~\ref{def:grassman-integration}.

Let $A_f$ be an invertible symmetric ${M_f}\times {M_f}$ matrix, with $A_f^{-1}=C_f$.
We define the matrix $\Abf_f$ and its inverse matrix $\Cbf_f$ by
\begin{equation}
\label{e:Afmat}
    \Abf_{f}
=
    \left(
    \begin{array}{cc}
    0&A_f\\
    -A_f^T&0
    \end{array}
    \right)
,
\quad\quad
    \Cbf_{f}
=
    \left(
    \begin{array}{cc}
    0&-C_f^T\\
    C_f&0
    \end{array}
    \right)
,
\end{equation}
with the rows and columns labelled by $\psi_1,\ldots\,
\psi_M,\bar\psi_1,\ldots,
\bar\psi_M$.
Then $S_f$ of \refeq{Sfdef} becomes
\eq
\label{e:Sfsym}
    S_{f}
    =
    \sum_{k,l=1}^{M_f} A_{f;k,l} \psi_k\bar\psi_l
\en
and the normalisation constant $N_f$ of \refeq{Pfaff} is
\eq
\label{e:Pfaffsusy}
    N_f = (\det \Cbf_f)^{1/2} = \det C_f.
\en

For $F$ a monomial in the Grassmann generators, let
$J_F = \Ex_{\Cbf_{f}} F$.
The evaluation of the Grassmann integral $J_F$ is standard
(see, e.g., \cite[Lemma~B.7]{Salm99} or \cite[Proposition~4.1]{BIS09}).
In particular, $J_F=1$ when $F=1$,
$J_F=0$ when
$F=\prod_{r=1}^p\psib_{i_r}\prod_{s=1}^q \psi_{j_s}$ with $p\neq q$,  and
\begin{equation}
\label{e:JF}
    J_F
    =
    \det  C_{f;k_1,\ldots,k_p;l_1,\ldots,l_p},
\end{equation}
when $F=\psib_{k_1}\psi_{l_1}\cdots\psib_{k_p}\psi_{l_p}$,
where $C_{f;k_1,\ldots,k_p;l_1,\ldots,l_p}$
is the $p\times p$ matrix whose $r,s$ element
is $C_{f;k_r,l_s}$.
In particular,
\eq
\label{e:Epsipsi}
    \Ex_{\Cbf_{f}} \bar\psi_k \psi_l = C_{f;k,l},
\en
and $C_f$ is the covariance of the conjugate fermion field.

Conjugate fermion fields will be needed in Proposition~\ref{prop:EK} below.

\subsection{Complex boson field}
\label{sec:cbf}

We now discuss a way to accommodate complex boson fields within the
formalism.  The boson field $\phi$ may include several species of
fields, including real external fields which behave as constants
during integration, and a complex field which does get integrated.  To
describe the latter, we suppose that $\Lambdabold_b'$ has even
cardinality $2M_b$ and write the field as $u_1,v_1,\ldots,
u_{M_b},v_{M_b}$.  Then, for $k=1,\ldots,M_b$, we define
\begin{equation}
    \label{e:phi-def}
    \phi_k = u_k + i v_k,
    \quad
    \bar\phi_k = u_k - i v_k.
\end{equation}
The boson field then corresponds to a complex field
$(\bar\phi_k,\phi_k)_{k=1,\ldots,M_b}$.
Define
\begin{align}
    \label{e:complex-derivs}
    \frac{\partial}{\partial\phi_k}
    &= \frac 12
    \left(
    \frac{\partial}{\partial u_k}
    - i\frac{\partial}{\partial v_k }
    \right),
    \quad\quad
    \frac{\partial}{\partial\bar\phi_k}
    = \frac 12
    \left(
    \frac{\partial}{\partial u_k}
    + i\frac{\partial}{\partial v_k }
    \right).
\end{align}
By definition, these obey, for $k,l=1,\ldots, M_b$,
\begin{align}
    \frac{\partial\phi_k}{\partial\phi_l}
    &= \frac{\partial\bar\phi_k}{\partial\bar\phi_l} = \delta_{k,l},
    \quad\quad
    \frac{\partial\phi_k}{\partial\bar\phi_l}
    =
    \frac{\partial\bar\phi_k}{\partial\phi_{l}}
    =0.
\end{align}

Let $A_b$ be a real invertible symmetric ${M_b}\times {M_b}$ matrix,
with $A_b^{-1}=C_b$.  We define the matrix $\Abf_b$ and its inverse
matrix $\Cbf_b$ by
\begin{equation}
\label{e:Abmat}
    \Abf_{b}
=
    2
    \left(
    \begin{array}{cc}
    A_b&0\\
    0&A_b
    \end{array}
    \right)
,
\quad\quad
    \Cbf_{b}
=
    \frac{1}{2}
    \left(
    \begin{array}{cc}
    C_b&0\\
    0&C_b
    \end{array}
    \right)
,
\end{equation}
with the rows and columns labelled by the real and imaginary parts
$u_1,\ldots,u_{M_b}$, $v_1,\ldots, v_{M_b}$ of the complex boson field.
Then $S_b$ of \refeq{Sbdef} becomes
\begin{equation}
\label{e:Sbsym}
    S_{b}
    =
    \sum_{k,l=1}^{M_b} A_{b;k,l} \phi_k\bar\phi_l
\end{equation}
and the normalisation constant $N_b$ of \refeq{Nb} is
\begin{equation}
    N_b = \left(\det (2\pi \Cbf_b)\right)^{-1/2} = (\det (\pi C_b))^{-1}.
\end{equation}
For $K \in \Ncal(\Lambdabold \sqcup \Lambdabold_b')$, the Gaussian
integral $I_K =\Ex_{\Cbf_b}K$ can equivalently be written as the
complex Gaussian integral
\begin{equation}
\label{e:If}
    I_K = \int_{\C^{M_b}}K d\mu_{C_b}
    \quad\text{with}\quad
    d\mu_{C_b}
    = N_b' \, e^{-S_b}
    \prod_{k=1}^{M_b}\frac{d\bar\phi_k d{\phi}_k}{2\pi i}
,
\end{equation}
where $d\bar\phi_k d{\phi}_k$ is by definition equal to
$2idu_{k}dv_{k}$, where $K$ is considered as a function of
$\bar\phi,\phi$ instead of as a function of the real and imaginary
parts, and where the normalisation constant is
\begin{equation}
    (N_b')^{-1} = \int_{\C^{M_b}}
    e^{-S_b }
    \prod_{k=1}^{M_b}
    \frac{d\bar\phi_k d{\phi}_k}{2\pi i} = \frac{1}{\det  C_b}.
\end{equation}
The factors of $2$ in \eqref{e:Abmat} are included so that
\begin{equation}
\label{e:Ephiphi}
    \Ebold_{\Cbf_b} \bar\phi_k \phi_l = C_{b;k,l},
\end{equation}
and thus we call $C_b$
the covariance of the complex boson field. Expectations of
$\phi$$\phi$ and
$\bar{\phi}$$\bar{\phi}$ are zero.
More generally, expectations of
products of factors of $\phi$ and $\bar\phi$ can be evaluated using
\eqref{e:ELap} together with
\begin{equation}
\label{e:exp-complex-Laplacian}
    \frac 12  \Delta_{\Cbf}
=
    \sum_{k,l=1}^{M_b}
    C_{b;k,l}
    \frac{\partial}{\partial \phi_{k}}
    \frac{\partial}{\partial \bar\phi_{l}}
    +
    \sum_{k,l=1}^{M_f}
    C_{f;k,l}
    \frac{\partial}{\partial \psi_{k}}
    \frac{\partial}{\partial \bar\psi_{l}}
,
\end{equation}
where we computed the Laplacian \eqref{e:LapC} using
\eqref{e:complex-derivs} and \eqref{e:Abmat}.

\subsection{Differential forms}
\label{sec:df}

Suppose we are in the setting of the conjugate fermion field
and complex boson field of
Sections~\ref{sec:cff}--\ref{sec:cbf}, and that $M_f=M_b=M$.
Let
\begin{equation}
\label{e:SA-bis}
    S_{A}
=
    S_{b} + S_{f}
.
\end{equation}
Now \refeq{ECbf} can be written as
\begin{align}
    \label{e:ECbfss}
    \Ex_{\Cbf} K F
    &
    =
    I_KJ_F
    =
    N_b'N_f \int e^{-S_A} KF
    ,
\end{align}
where the Lebesgue measure $D\phi$ has been omitted intentionally
from the right-hand side.  The reason for this omission makes use
of a specific choice of Grassmann generators, as follows.

We choose as Grassmann generators the 1-forms
\begin{align}
    \psi_k &= \frac{1}{(2\pi i)^{1/2}}d\phi_k=
    \frac{1}{(2\pi i)^{1/2}}(du_k + i dv_k),
    \nnb
    \bar\psi_k &= \frac{1}{(2\pi i)^{1/2}}d\bar\phi_k
    = \frac{1}{(2\pi i)^{1/2}}(du_k - i dv_k),
\end{align}
where we fix a choice of square root of $2\pi i$ once and for all.
Multiplication of generators is via the standard anti-commuting wedge
product for differential forms (see, e.g., \cite{Rudi76}); the wedges
are left implicit in what follows.  The 1-forms generate the Grassmann
algebra of differential forms.  In this case the complex
conjugation that acts on the boson field at the same time interchanges
$\psi_k$ and $\bar\psi_k$, but there are no relations other than
anti-commutativity linking the generators of the Grassmann algebra.
Now \refeq{SA-bis} becomes the differential form
\begin{equation}
\label{e:SA-bis-prime}
    S_{A}
=
    \sum_{k,l=1}^M
    \left(
    A_{b;k,l}
    \phi_{k}\bar\phi_{l} +
    \frac{1}{2\pi i}A_{f;k,l}
    d\phi_{k} d\bar\phi_{l}
    \right)
.
\end{equation}
The theory of Gaussian integration in this setting is developed in
\cite{BIS09}.  In particular, it follows from
\cite[Proposition~4.1]{BIS09} that when we interpret the fermionic
part of $e^{-S_f}$ as the differential form $\sum_{n=0}^\infty
\frac{1}{n!}(-S_f)^n$ (the series truncates due to
anti-commutativity), then standard integration of differential forms
gives again
\begin{equation}
\label{e:Efacform}
    \Ex_{\Cbf} K F
    =I_KJ_F.
\end{equation}
Thus Grassmann integral and the standard integration of differential
forms become the same thing.  In the formalism of differential forms,
the omitted Lebesgue measure is supplied by the volume form
$\prod_{k=1}^M d\bar\phi_kd\phi_k$ arising from the expansion of
$e^{-S_f}$. Earlier, we defined $d\bar\phi_k d{\phi}_k$ to be
$2idu_{k}dv_{k}$ because by \eqref{e:phi-def} this is the wedge
product $d\bar\phi_k d{\phi}_k$.

The above shows that
the algebra of differential forms and the form integration used in
\cite{BIS09} is a special case of the construction
of Sections~\ref{sec:cff}--\ref{sec:cbf}.  We do not
need this special case in this paper, but it plays an important
role in \cite{BBS-saw4-log,BBS-saw4}.

\subsection{Supersymmetry}
\label{sec:supersymmetry}

The field theories discussed in \cite{BIS09} and \cite{BBS-saw4-log}
have an additional property of \emph{supersymmetry}: a symmetry
between bosons and fermions.  A discussion of supersymmetry can be
found in \cite[Section~6]{BIS09}.  The field theory becomes
supersymmetric by taking $M_b=M_f=M$ and choosing the boson and
fermion covariances to be equal: $C_b=C_f= C$.  Then
\eq
\label{e:NbNf1}
    N_b'N_f
    =
    \frac{\det C_f}{\det  C_b}
    =1,
\en
and, with $A=C^{-1}$,  \refeq{SA-bis} becomes
\begin{equation}
\label{e:SA}
    S_{A}
=
    \sum_{u,v\in \Lambda }
    A_{u,v}
    \left(
    \phi_{u}\bar\phi_{v} +
    \psi_{u} \bar\psi_{v}
    \right)
.
\end{equation}
Also, in view of \refeq{NbNf1}, the normalisation constants
cancel in \refeq{ECbf}, which becomes
\begin{align}
    \label{e:ECbfssz}
    \Ex_{\Cbf} K F
    &
    =
    \Big(\int_{\C^{M}}
    K \,e^{-S_b}
    \prod_{k=1}^{M_b}d\bar\phi_k d{\phi}_k \Big)
    \Big(\int_{\Lambdabold'_{f}}
    e^{-S_f}F\Big)
    =I_KJ_F.
\end{align}
The Laplacian \eqref{e:exp-complex-Laplacian} now simplifies to
\begin{equation}
\label{e:Lapss}
    \frac 12
    \Delta_{\Cbf}
=
    \sum_{k,l=1}^{M_b}
    C_{k,l}
    \left(
    \frac{\partial}{\partial \phi_{k}}
    \frac{\partial}{\partial \bar\phi_{l}}
    +
    \frac{\partial}{\partial \psi_{k}}
    \frac{\partial}{\partial \bar\psi_{l}}
    \right)
,
\end{equation}
and from \eqref{e:ECbfssz} we obtain
\begin{equation}
    \Ex_{\Cbf} \bar\phi_k \phi_l = \Ex_{\Cbf} \bar\psi_k \psi_l = C_{kl}.
\end{equation}

\subsection{Factorisation property of the expectation}
\label{sec:facexp}

We now present a factorisation property of the expectation that
is needed in \cite{BS-rg-step}.
We formulate the factorisation property in the supersymmetric
setting of Section~\ref{sec:supersymmetry} for simplicity, although
it does hold more generally.

Let $\Lambda = \Lambdabold_b = \Lambdabold_f$, and let
$X \subset \Lambda$.  We define
$\Ncal (X)$ to be the set of all
$F = \sum_{y \in \Lambda^*} F_y\psi^y\in \Ncal$ such that
$F_{y}=0$ if any component of $y$ is not in $X$,
and such that, for all $y$, $F_y$ does not depend on $\phi_x$ for any
$x \not\in X$.
Similarly, given $X' \subset \Lambda'$,
we define $\Ncal(\Lambdabold \sqcup X')$ as those $F$ that only depend on
the fermion and boson fields indexed by $\Lambda \sqcup X'$.

\begin{prop}\label{prop:factorisationE}
Let $X,Y \subset \Lambda$, let
$F_1(X) \in \Ncal(\Lambdabold \sqcup X')$,
$F_2(Y) \in \Ncal(\Lambdabold \sqcup Y')$, and
suppose that $C_{x',y'}= 0$ whenever
$x' \in X'$,
$y' \in Y'$.
Then the expectation $\Ebold_{\Cbf}$
has the \emph{factorisation property}:
\begin{equation}
\label{e:Efaczz}
    \Ebold_{\Cbf} \big( F_1(X)F_2(Y) \big)
    =
    \big(\Ebold_{\Cbf} F_2(X)\big) \big(  \Ebold_{\Cbf}F_2(Y) \big).
\end{equation}
\end{prop}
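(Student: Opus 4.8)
The strategy is to separate bosons from fermions, reduce to elementary factors, and use that a Gaussian with decoupled blocks factorises. First, since $F_1$, $F_2$ and $F_1F_2$ do not involve the primed fields indexed by $\Lambda'\setminus(X'\cup Y')$, integrating those fields out --- by the marginalisation property of Gaussian integration, which in the Grassmann case is the standard partial-integration formula --- replaces $\Cbf$ by its sub-block on $X'\cup Y'$, which still has vanishing $(X',Y')$ cross-block by hypothesis; so we may assume $\Lambda'=X'\sqcup Y'$, and then $\Cbf_b,\Cbf_f$ and hence $\Abf_b=\Cbf_b^{-1}$, $\Abf_f=\Cbf_f^{-1}$ are block diagonal over $X'$ and $Y'$. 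Next, using $\Ex_{\Cbf}=\Ex_{\Cbf_b}\Ex_{\Cbf_f}$ from \eqref{e:Ecomb} and bilinearity of $\Ex_{\Cbf}$ and of the product in $\Ncal$, it suffices to treat $F_1=K_1G_1$ and $F_2=K_2G_2$, where $K_1\in\Ncal(\Lambda\sqcup X_b')$ and $K_2\in\Ncal(\Lambda\sqcup Y_b')$ are homogeneous of degrees $d_1,d_2$ in the fermion generators indexed by $\Lambda$, and $G_1,G_2$ are monomials of degrees $g_1,g_2$ in the $X_f'$-, respectively $Y_f'$-, generators; every element of $\Ncal(\Lambda\sqcup X')$ decomposes this way by collecting the primed fermion generators in each monomial to the right. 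The generators of $G_1$ anticommute with the $\Lambda$-fermion generators of $K_2$ and commute with its bosons, so $G_1K_2=(-1)^{g_1d_2}K_2G_1$, hence $F_1F_2=(-1)^{g_1d_2}(K_1K_2)(G_1G_2)$ with $K_1K_2\in\Ncal(\Lambda\sqcup\Lambda_b')$ and $G_1G_2$ a monomial in the primed fermion generators; by \eqref{e:ECbf} this yields $\Ex_{\Cbf}(F_1F_2)=(-1)^{g_1d_2}\bigl(\Ex_{\Cbf_b}(K_1K_2)\bigr)\bigl(\Ex_{\Cbf_f}(G_1G_2)\bigr)$, while $(\Ex_{\Cbf}F_1)(\Ex_{\Cbf}F_2)=(\Ex_{\Cbf_b}K_1)(\Ex_{\Cbf_f}G_1)(\Ex_{\Cbf_b}K_2)(\Ex_{\Cbf_f}G_2)$.

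It then remains to prove the two block factorisations $\Ex_{\Cbf_b}(K_1K_2)=(\Ex_{\Cbf_b}K_1)(\Ex_{\Cbf_b}K_2)$ and $\Ex_{\Cbf_f}(G_1G_2)=(\Ex_{\Cbf_f}G_1)(\Ex_{\Cbf_f}G_2)$. For the bosonic one, the block-diagonal form of $\Abf_b$ makes $S_b$ of \eqref{e:Sbdef} split as $S_b^X+S_b^Y$ with $S_b^X$ involving only the $X_b'$ bosons and $S_b^Y$ only the $Y_b'$ bosons, so $e^{-S_b}$ and the normalisation $N_b$ of \eqref{e:Nb} factor across the two blocks; since $K_1$ depends, through its coefficients, only on the $X_b'$ bosons among the integration variables and $K_2$ only on the $Y_b'$ bosons, and bosonic variables commute so no signs arise, Fubini's theorem gives the claim. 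For the fermionic one, the block-diagonal form of $\Abf_f$ makes $S_f$ of \eqref{e:Sfdef} split as $S_f^X+S_f^Y$ with each summand of even fermionic degree, so $e^{-S_f}=e^{-S_f^X}e^{-S_f^Y}$ and $N_f$ of \eqref{e:Pfaff} factors; with the generators ordered so that those of $X_f'$ precede those of $Y_f'$, the Grassmann integral of Definition~\ref{def:grassman-integration} factors as $\int_{X_f'}\int_{Y_f'}$, and moving the even factor $e^{-S_f^Y}$ past $G_1$ and carrying out the $Y_f'$-integration first produces a scalar times $e^{-S_f^X}G_1$, which integrates to another scalar; this gives $\Ex_{\Cbf_f}(G_1G_2)=(\Ex_{\Cbf_f}G_1)(\Ex_{\Cbf_f}G_2)$, each $\Ex_{\Cbf_f}G_i$ being a complex number since $G_i$ is a monomial purely in primed generators.

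Combining, both sides reduce to $(\Ex_{\Cbf_b}K_1)(\Ex_{\Cbf_b}K_2)(\Ex_{\Cbf_f}G_1)(\Ex_{\Cbf_f}G_2)$ --- the scalars $\Ex_{\Cbf_f}G_i$ being central --- except for the sign $(-1)^{g_1d_2}$ attached to $\Ex_{\Cbf}(F_1F_2)$, and it remains only to check that this sign is harmless. If $g_1$ is even it is $+1$. If $g_1$ is odd, then $G_1$ has odd fermionic degree, so its Grassmann integral vanishes --- the evaluation rules recalled around \eqref{e:JF} give $\Ex_{\Cbf_f}G=0$ for any monomial $G$ that is not balanced in conjugate pairs, in particular any $G$ of odd degree --- hence $\Ex_{\Cbf_f}G_1=0$ and both sides vanish. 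The main obstacle here is purely organisational: correctly tracking the anticommutation signs through the reduction, and checking that the marginalisation step and the factorisation of the Grassmann integral respect the ordering conventions. The probabilistic content proper --- that a Gaussian with block-diagonal covariance is a product measure --- is elementary.
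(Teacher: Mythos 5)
Your proof is correct, and at its core it follows the same strategy as the paper's: reduce by linearity to elementary factors, split the combined expectation into a bosonic and a fermionic piece via \eqref{e:ECbf}, and then observe that uncorrelated Gaussian variables are independent while the Grassmann expectation reduces to a block--diagonal determinant (or, in your version, to an iterated Grassmann integral that manifestly factors). The differences are ones of generality and care rather than of method. The paper works in the supersymmetric setting of Section~\ref{sec:supersymmetry} and, in its reduction step, takes $F_1=f_1\psi^x$ with $x\in(X')^*$ only, so the external fermion generators indexed by $\Lambdabold_f$ never appear and no anticommutation signs arise; you instead allow $K_1,K_2$ to carry external fermion degree $d_1,d_2$, track the resulting sign $(-1)^{g_1d_2}$, and correctly dispose of it by noting that an odd primed-fermion degree forces both sides to vanish. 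You also prove the two block factorisations from the definitions (splitting of $S_b$, $S_f$ and of the normalisations, Fubini, iterated Grassmann integration) where the paper cites the independence of uncorrelated Gaussian vectors and the factorisation of a block-diagonal determinant. Your preliminary marginalisation step, reducing to $\Lambdabold'=X'\sqcup Y'$ so that $\Abf_b,\Abf_f$ become block diagonal, is a standard fact you assert rather than prove, but it can be bypassed entirely as the paper does, since the Wick/determinant formulas show the answer depends only on the sub-covariance indexed by $X'\cup Y'$; similarly your appeal to reordering the Grassmann generators so that $X_f'$ precedes $Y_f'$ is harmless because the normalised expectation $\Ex_{\Cbf_f}$ is independent of the chosen order. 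What your approach buys is a proof valid in the general (non-supersymmetric) setting and for elements of $\Ncal(\Lambdabold\sqcup X')$ with arbitrary external fermion content, at the cost of more bookkeeping; the paper's version is shorter but covers, as literally written, only the special case without external fermion generators.
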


\begin{proof}
By linearity of
the expectation, it suffices to consider the case where
$F_1(X)$ is of the form $f_1\psi^{x}$ where $f_1$ depends only on the
boson field in $\Lambdabold \sqcup X'$ and
$x \in  (X')^*$, and where
$F_2(Y)$ is of the form $f_2\psi^{y}$ where $f_1$ depends only on the
boson field in $\Lambdabold \sqcup Y_b'$ and
$y \in  (Y')^*$.
According to \eqref{e:ECbfssz}, the
expectation factors as
\begin{equation}
    \Ebold_{\Cbf} f_1\psi^{x} f_2 \psi^{y}
    =
    (\Ebold_{\Cbf} f_1 f_2 )(\Ebold_{\Cbf} \psi^{x} \psi^{y}),
\end{equation}
where the first expectation on the right-hand side is a
bosonic expectation with covariance matrix $C$, while the
second is a fermionic expectation which is equal to a determinant of a
submatrix of $C$ taken from rows and columns labelled by the
points in $x$ and $y$.

By assumption, the covariance matrix elements vanish for rows
and columns labelled by points in $X$ and $Y$, respectively.  It
is a standard fact that uncorrelated Gaussian random vectors are
independent \cite{Eato07}, and hence $\Ebold_{\Cbf} f_1 f_2
=(\Ebold_{\Cbf} f_1 ) (\Ebold_{\Cbf} f_2 )$.  Also
by assumption, the
determinant yielding the fermion expectation is the determinant of a
block diagonal matrix, so also factors to give $(\Ebold_{\Cbf}
\psi^{x} \psi^{y}) =(\Ebold_{\Cbf} \psi^{x})(\Ebold_{\Cbf}
\psi^{y})$.  This completes the proof.
\end{proof}

\section{The \texorpdfstring{$T_\phi$}{Tphi} semi-norm}
\label{sec:Tphi-props}

\subsection{Motivation}

In the progressive integrations carried out in the renormalisation
group approach, it is necessary to estimate how the size of the result
of an integration compares with the size of the integrand.  When
integrating real-valued functions of real variables, the inequality
\begin{equation}
\label{e:intav}
    \left| \int f(x) dx \right| \le \int |f(x)| dx
\end{equation}
is fundamental.  We need an analogue of \refeq{intav} for the
Gaussian integral $\Ex_{\Cbf} : \Ncal(\Lambdabold \sqcup \Lambdabold')
\to \Ncal(\Lambdabold)$.  In particular, we need to define norms
(or semi-norms) so
that $\Ncal(\Lambdabold \sqcup \Lambdabold')$ and $\Ncal(\Lambdabold)$
become normed algebras.  The norms we define here emerge from a long
history going back to \cite{BY90}; other norms in the purely fermionic
context are developed in \cite{FKT04}.

We choose $\Lambdabold_b$ and $\Lambdabold_f$ each to consist of
disjoint unions of copies of the discrete $d$-dimensional torus of
side length $mR$,
namely
\begin{equation}
\label{e:RLambda}
    \Lambda = \Zd / (mR\Zd),
\end{equation}
where $R\ge 2$ and $m \ge 1$ are integers.  As a basic example,
suppose there are two species of field: the first species is a complex
boson field as in Section~\ref{sec:cbf}, and the second species is a
conjugate fermion field as in Section~\ref{sec:cff}. We choose
$\Lambdabold_b= \Lambda_{1} \sqcup \bar\Lambda_{1}$ and
$\Lambdabold_f=\Lambda_{2} \sqcup \bar\Lambda_{2}$, where each
$\Lambda_{i}$ and $\bar{\Lambda}_{i}$ is a copy of $\Lambda$. The
fermion field $(\psi_v)_{v \in \Lambda}$ is $\psi_{y \in
\Lambdabold_{f}}$ restricted to $y \in \Lambda_{2}$, the fermion field
$(\bar\psi_v)_{v \in \Lambda}$ is $\psi_{y \in \Lambdabold_{f}}$
restricted to $y \in \bar\Lambda_{2}$, and the complex boson field
$(\bar\phi_v,\phi_v)_{v \in \Lambda}$ is the restriction of $\phi$ to
$\Lambda_{1} \sqcup \bar\Lambda_{1}$. For $x \in \Lambda_{1}$ let
$\bar{x}$ be the corresponding point in the copy $\bar\Lambda_{1}$.
The restriction of $\phi$ to $\Lambda_{1} \sqcup \bar\Lambda_{1}$ is a
complex field $\phi=u+iv$ as defined in Section~\ref{sec:cbf} if and
only if
\begin{equation}
    \label{e:reality-condition}
    \phi_{\bar{x}}=\bar\phi_{x},
\quad\quad
    x \in \Lambda_{2}
.
\end{equation}

Given $a \in \R$ and $u \in \Lambda$, an example of an element of
$K\in \Ncal(\Lambdabold)$ is given by
\begin{equation}
\label{e:eatau}
    K(\phi,\bar\phi) =e^{-a (\phi_u\bar\phi_u + \psi_u\bar\psi_u)}.
\end{equation}
Functions of the fermion field are defined as elements of $\Ncal$ via
Taylor expansion in powers of the fermion field.  Due to
anti-commutativity and the finite index set for the fermion field,
such Taylor series always truncate to polynomials in the fermion
field.  For \refeq{eatau}, the Taylor polynomial is
\begin{equation}
    K(\phi)=e^{-a (\phi_u\bar\phi_u + \psi_u\bar\psi_u)}
    =
    e^{-a \phi_u\bar\phi_u}\left( 1 - a\psi_u\bar\psi_u\right).
\end{equation}
For functions of products of even numbers of $\psi$ factors, which are
the only kind we will encounter, there is no sign ambiguity in the
Taylor expansion.

We also consider Taylor expansion in the boson field.  For this, we
replace $\phi$ by $\phi+\xi$ and expand in powers of $\xi$.  We use
the set $\Lambda \sqcup \bar\Lambda$ to keep track of factors $\xi$
versus $\bar\xi$, by writing, e.g., $\xi^x = \xi_{x_1} \xi_{x_2}
\bar\xi_{\bar x_3} \bar\xi_{\bar x_4}$ for $x =
(x_1,x_2,\bar{x}_3,\bar{x}_4)$, and similarly for the fermion field.
A general $K \in \Ncal(\Lambdabold)$ then has (formal) Taylor
expansion
\begin{equation}
\label{e:norm-mot1}
    K(\phi+\xi)
    \sim
    \sum_{ x,y}
    \frac{1}{x!y!}
    K_{x,y}(\phi)
    \xi^x \psi^y,
\end{equation}
where the sum is over sequences $x\in(\Lambda \sqcup \bar\Lambda)^*$
and $y\in(\Lambda \sqcup \bar\Lambda)^*$, and where the coefficients
$K_{x,y}$ are symmetric in the elements of $x$ and anti-symmetric in
the elements of $y$.

Given $\phi$, we define the semi-norm of $K$, in terms of the
coefficients $K_{x,y}(\phi)$.  These coefficients eventually vanish
once the sequence $y$ has length exceeding twice the cardinality of
$\Lambda$.  In general, the coefficients will be non-zero for
infinitely many values of $x$, but the semi-norm will examine only
those with length of $x$ at most $p_\Ncal$ for a fixed choice of the
parameter $p_\Ncal$ (this replaces the ``formal'' Taylor expansion
above by a Taylor polynomial).  The semi-norm is designed to be used
in conjunction with integration, where fields have a typical size.
This motivates us to define the semi-norm of $K$ to be the result of
replacing $\xi^x \psi^y$, in each term in the truncation of the sum
over $x$ at length $p_\Ncal$ in \refeq{norm-mot1}, by a test function
$g_{x,y}$ whose size and smoothness mimic the behaviour expected for
products of typical fields.

The precise definition of the semi-norm, given below, is more general
than the above in several respects.  It allows the possibility of more
``species'' of field than the boson and fermion fields above, and
allows scalar, complex, and multi-component fields.  It allows
distinction between the size of the test functions in its components
corresponding to different field species, and leaves flexible the
choice of weights governing the test functions.

In the remainder of Section~\ref{sec:Tphi-props}, we define the
$T_\phi$ semi-norm on $\Ncal$ and state and develop
its properties.  Most proofs are deferred to
Sections~\ref{sec:tphisemi}--\ref{sec:integration}.

\subsection{Sequence spaces}
\label{sec:seq}

The sets $\Lambdabold_{b}$ and $\Lambdabold_{f}$ are required to have
the following particular structure.  First, $\Lambdabold_{b}$
decomposes into a disjoint union of sets $\Lambdabold_{b}^{(i)}$, for
$i=1,\ldots,s_b$, corresponding to $s_b$ distinct boson field
\emph{species}.  Each set $\Lambdabold_{b}^{(i)}$ is either $\Lambda
\sqcup \bar\Lambda$ (for a species of complex field) or is the
disjoint union of $c_b^{i}$ copies of $\Lambda$ (for a field species
with $c_b^{(i)}$ real components).  The set $\Lambdabold_f$ has the
same structure, but with a possibly different number $s_f$ of species
which can also have components.  Then, as before, we set $\Lambdabold
= \Lambdabold_b \sqcup \Lambdabold_f$, and $\Lambdabold^*$ is the
corresponding set of sequences.  Each $u \in \Lambdabold$ thus carries
a species label $i =i(u) \in \species = \{1,\ldots, s\}$, where
$s=s_b+s_f$.

Of specific interest is the subset $\vec\Lambdabold^*$ of
$\Lambdabold^*$, which consists of sequences whose species labels are
ordered in such a way that the first elements of $z\in
\vec\Lambdabold^*$ are of species $\Lambdabold_{b}^{(1)}$, the next
are of species $\Lambdabold_{b}^{(2)}$, and so on until the boson
species have been exhausted, and then subsequent elements are first of
species $\Lambdabold_f^{(1)}$, then $\Lambdabold_f^{(2)}$, and so on.
For example, a complex species of boson field has components
$\phi$ and $\bar{\phi}$, which belong to the same species, so entries
$z_{j}$ of $(z_{1},\dots ,z_{p})\in \vec\Lambdabold^*$ are not ordered
according to where they are in $\Lambda \sqcup \bar\Lambda$, likewise
for a fermion species $\psi,\bar{\psi}$. We also define
$\vec\Lambdabold_b^*$ and $\vec\Lambdabold_f^*$ to be the subsets of
$\vec\Lambdabold^*$ consisting of only boson or only fermion species.
There is a canonical bijection between $\vec\Lambdabold^*$ and the
Cartesian product $\Lambdabold_{b}^{(1)*}\times \cdots\times
\Lambdabold_{b}^{(s_b)*} \times \Lambdabold_{f}^{(1)*}\times
\cdots\times \Lambdabold_{f}^{(s_f)*}$, given by the correspondence in
which a single sequence in $\vec\Lambdabold^*$ is regarded as a
collection of subsequences of each species.  We will sometimes blur
the distinction between $\vec\Lambdabold^*$ and the Cartesian product
in what follows.  In $\vec\Lambdabold^*$, concatenation $z'\concat z''$
of two sequences is defined by concatenation of each of the individual
species subsequences.  Then $\vec\Lambdabold^*$ is closed under
concatenation.

For $r \ge 0$, we write $\vec\Lambdabold^{(r)}$ for the
subset of $\vec\Lambdabold^*$ consisting of sequences of length $r$,
with the degenerate case $\vec\Lambdabold^{(0)}=\{\varnothing\}$.

\subsection{Test functions}
\label{sec:tf}

Recall from \eqref{e:RLambda} that $\Lambdabold$ is a disjoint union
of copies of a lattice torus.  A \emph{test function} is a function $g
: \vec\Lambdabold^* \to \C$.  In particular, even when there are
complex fields, no relation such as \eqref{e:reality-condition} is
imposed on test functions.  We will define a norm on the set of test
functions as a weighted finite-difference version of a $\Ccal^{k}$ norm,
where $k$ is however proportional to the number of arguments of
$g$, i.e., the length of the sequence in $\vec\Lambdabold^*$.

First we need notation for multiple finite-difference
derivatives.  We write $\units$ for the set $\{\pm e_1, \ldots, \pm e_d\}$
of $2d$ positive and negative unit vectors in $\Zd$.  For a unit
lattice vector $e \in \units$ and a function $f$ on $\Lambda$ the
difference operator is given by $\nabla^{e} f_{x}=f_{x+e} -
f_{x}$. When $e$ is the negative of a standard unit vector
$\nabla^{e}$ is the negative of a conventional backward derivative.
Derivatives of test functions are defined
as follows.  Let $A = \Nbold_{0}^{\units}$,
and for an integer $r>0$, let $\Acal^{(r)}= A^{r} \times
\vec\Lambdabold^{(r)}$.  In the degenerate case, we set $\Acal^{(0)} =
\{\varnothing\}$.  The operator $\nabla^\varnothing$ is the identity operator,
and for $r>0$, $\alpha=(\alpha_1,\ldots,\alpha_r)\in \Acal^{(r)}$ and $z=(z_1,\ldots,z_r)\in
\vec\Lambdabold^{(r)}$, we define
\begin{equation}
%    \label{e:nabla-alpha}
    (\nabla^\alpha g)_z
    =
    \nabla_{z_1}^{\alpha_1} \cdots \nabla_{z_r}^{\alpha_r}g_{z_1,\ldots,z_r}.
\end{equation}
Thus, each $\alpha_k$ is a multi-index which specifies
finite-difference derivatives with respect to the variable $z_k$.

\begin{defn}
\label{def:gnorm-general} Fix $p_\Ncal \in \N_0 \cup \{+\infty\}$, and
consider the set of test functions such that $g_z=0$ whenever $z$ has
more than $p_\Ncal$ boson components.  Let $w:A\times \Lambdabold
\rightarrow [0,\infty]$ be a given function.  For $r>0$ and
$(\alpha,z)\in \Acal^{(r)}$, we write $w_{\alpha,z} = \prod_{k=1}^{r}
w_{\alpha_k,z_k}$, and we set $w_\varnothing =1$ in the degenerate
case $r=0$.  We define the $\Phi$ norm on test functions by
\begin{align}
    \|g\|_{\Phi }
    &=
    \sup_{(\alpha,z) \in \Acal}
    w_{\alpha,z}^{-1} |\nabla^\alpha g_{z}|
.
\end{align}
Let $g^{(r)}:\vec\Lambdabold^{(r)}\to\C$ denote the restriction of $g :
\vec\Lambdabold^* \to \C$ to $\vec\Lambdabold^{(r)}$.  The $\Phi$ norm
induces the $\Phi^{(r)}$ norm on these restricted test functions by
\begin{align}
    \|g^{(r)}\|_{\Phi^{(r)}  }  & =
    \sup_{(\alpha,z)\in \Acal^{(r)}}
    w_{\alpha,z}^{-1} |\nabla^\alpha g^{(r)}_{z}| ,
\end{align}
with $0^{-1}=\infty$ and $\infty^{-1}=0$, and with
\begin{align}
    \|g\|_{\Phi }
    &=
    \sup_{r \ge 0}\|g^{(r)}\|_{\Phi^{(r)}  }.
\end{align}
When it is important to make the dependence on $w$ explicit we write
$\Phi(w)$ and $\Phi^{(r)}(w)$.
\end{defn}

As an instance of restriction, suppose that that there is just one
species of field, namely a single
complex boson field $(\bar\phi_x,\phi_x)_{x\in\Lambda}$.
We may regard this field as a test function by extending it to be the zero
function on sequences in $\vec\Lambdabold^*$ of length different from
$1$.  This special case will frequently be relevant for us.

\begin{example}
\label{ex:h} Fix any integer $p_\Phi \ge 0$ and for each species $i$
fix $\h_i >0$.  Let $R$ be the constant of \refeq{RLambda}.  In
applications, the period of the torus is $L^N$ for integers $L,N>1$,
the torus can thus be paved by disjoint blocks of side length $L^j$
for $j=1,\ldots, N$, and we take $R=L^j$ (so $m=L^{N-j}$ to give
$mR=L^N$).  The choice of weight $w:A\times \Lambdabold \rightarrow
[0,\infty]$ given by
\begin{equation}
    w_{\alpha_k,z_k}^{-1} =
    \begin{cases}
    \h_{i(z_k)}^{-1} R^{\alpha_k} & \text{if $|\alpha|_1 \le p_\Phi$}
    \\
    0 & \text{if $|\alpha|_1 > p_\Phi$},
    \end{cases}
\end{equation}
defines the normed space $\Phi(\h)$.  We have written
$R^{\alpha_{k}}=R^{|\alpha_k|_{1}}$ where $|\alpha_k|_1$ is the order
of the derivative $\nabla^{\alpha_k}$.  Then test functions in the
unit ball $B(\Phi)$ of $\Phi$ are those which obey the estimate
\begin{equation}
\lbeq{tfe}
    |\nabla^\alpha g_z| \le \h^{z}R^{-\alpha},
\end{equation}
for all $z$ with at most $p_\Ncal$ boson components, and for all
$\alpha$ with $|\alpha_{k}|_{1}\le p_{\Phi}$ for each component
$\alpha_{k}$ of $\alpha \in \Acal $.  Here $\h^{z}$ is an abbreviation
for $\prod_k \h_{i(z_k)}^{z_j}$.  The estimate \refeq{tfe} means that
$g$ is approximately constant on regions whose diameter is small
compared to $R$. Note that the parameter $p_{\Phi}$ specifies that
$p_{\Phi}$ derivatives per argument of $g$ are bounded by the norm,
whereas the parameter $p_{\Ncal}$ is an upper bound on how many
bosonic spatial variables a test function can depend on.
\end{example}

Let $\vec r \in \N_0^s$ and let $\Phi^{(\vec r)}$ denote the restriction
of $\Phi$ to test functions defined on the subset of $\vec\Lambda^*$
consisting of sequences with exactly $r_i$ components of species $i$
for each $i=1,\ldots,s$.
Given $\vec r$, $g' \in \Phi^{(\vec r)}$, and $g'' \in
\Phi$, we define $g \in \Phi$
by setting $g_{z} = g'_{z'}g''_{z''}$ for $z=z'\concat
z''$, with $g_z=0$ whenever $z$ has fewer than $r_i$ elements
of species $i$ for any $i$.
It follows from the definition of the norm that
\begin{equation}\label{e:plusnormass}
    \|g\|_{\Phi}
    \le
    \|g'\|_{\Phi^{(\vec r)}}
    \|g''\|_{\Phi},
\end{equation}
and we will use this fact later.
Here it is
the fact that $g' \in \Phi^{(\vec r)}$ which provides a unique decomposition
$z=z'\concat z''$ to make $g$ well defined.
A similar inequality is obtained whenever a unique decomposition is
specified.  For example, suppose that we designate some field species
as prime species and some as double prime.  Then $z$ can be decomposed
in a unique way as $(z',z'')$ and if we define a test function $g$
by $g_z = g'_{z'}g''_{z''}$, then it follows from the definition of
the norm that
\begin{equation}\label{e:plusnormass-2}
    \|g\|_{\Phi}
    \le
    \|g'\|_{\Phi}
    \|g''\|_{\Phi}.
\end{equation}

\subsection{Definition of the \texorpdfstring{$T_\phi$}{Tphi} semi-norm}
\label{sec:Tphidef}

Given $F\in \Ncal(\Lambdabold)$,
$x=(x_1,\ldots, x_p)\in \Lambdabold_{b}$, $y \in \Lambdabold_f$,
and a boson field $\phi \in \R^{\Lambdabold_b}$, we
write
\begin{equation}
\lbeq{Fxyphi}
    F_{x,y}(\phi) = \frac{\partial^p F_y(\phi)}{\partial \phi_{x_p} \cdots
    \partial \phi_{x_1}}
.
\end{equation}
(This notation is consistent with Definition~\ref{def:Npoly}.)  We are
writing the boson field as an element of $\R^{\Lambdabold_b}$ for
simplicity, but our intention is to include the possibility of complex
species and for such species derivatives are with respect to
$\phi_{x_i}$ or $\bar\phi_{x_i}$ depending on whether $x_i$ is an
element of $\Lambda$ or $\bar\Lambda$.  This point will be made more
explicit in Section~\ref{sec:Tphi-ex} below.  Also, for $x \in
\vec\Lambdabold_{b}^*$, we write $x!=\prod_{i=1}^{s_{b}} x_i!$ where
the product is over species and $x_i!$ denotes the factorial of the
length of the species-$i$ subsequence of $x$. Similarly $y!$ is
defined for $y \in \vec\Lambdabold_{f}^*$.  For $z = (x,y)$ in
$\vec\Lambdabold_{b}^*$ we write $z!= x!y!$.

\begin{defn}
\label{def:Tphi-norm}
For a test function $g:\vec\Lambdabold^{*}
\rightarrow \Cbold$,
for $F \in \Ncal (\Lambdabold)$, and for $\phi \in \R^{\Lambdabold_{b}}$,
we
define the \emph{pairing}
\begin{equation}
\label{e:Kgpairdef}
    \langle F , g \rangle_\phi
    =
    \sum_{z\in \vec\Lambdabold^{*}}
    \frac{1}{z!}  F_{z}(\phi) g_{z}
    =
    \sum_{x\in \vec\Lambdabold_b^{*}} \sum_{y\in \vec\Lambdabold_f^{*}}
    \frac{1}{x!y!}  F_{x,y}(\phi) g_{x,y}
,
\end{equation}
and the $T_\phi$ \emph{semi-norm}
\begin{equation}
\label{e:Tdef}
    \|F\|_{T_\phi}
    = \sup_{g \in B (\Phi)}
    | \langle F , g \rangle_\phi |
,
\end{equation}
where $B(\Phi)$ denotes the unit ball in the space $\Phi$ of test
functions.
\end{defn}

By definition, $F_{x,y}$ is symmetric under permutations
within each subsequence of $x$ having the same species, and is
similarly antisymmetric in $y$.  This symmetry is reflected by
a corresponding property of the pairing.  To develop this idea,
we begin with the following definition.

\begin{defn}
\label{def:S}
For $z \in \vec\Lambdabold^{(r)}$, let $\vec\Sigma_z$ denote the
set of permutations of $1,\ldots, r$ that preserve the order of
the species of $z$.  For $\sigma \in \vec\Sigma_z$ we define
$\sigma z \in \vec\Lambdabold^{(r)}$ by $(\sigma z)_i=z_{\sigma(i)}$, and
we use this to define a map $S : \Phi \to \Phi$ by
\eq
    (Sg)_z = \frac{1}{z!} \sum_{\sigma \in \vec\Sigma_z}
    \sgn(\sigma_f)  g_{\sigma z},
\en
where $\sigma_f$ denotes the restriction of $\sigma$ to the fermion
components of $z$ and $\sgn(\sigma_f)$ denotes the sign of this permutation.
\end{defn}

\begin{prop}
\label{prop:pairingS}
For $F \in \Ncal(\Lambdabold)$,  $g \in \Phi$,
and $\phi \in \R^{\Lambdabold_{b}}$,
\eq
    \pair{F,g}_\phi = \pair{F,Sg}_\phi.
\en
\end{prop}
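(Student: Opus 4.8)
The plan is to expand both sides of the claimed identity $\pair{F,g}_\phi = \pair{F,Sg}_\phi$ using the definition \refeq{Kgpairdef} of the pairing, and to exploit the (anti)symmetry of the coefficients $F_z(\phi)$ together with the symmetrization built into $S$. First I would recall that $F_{x,y}(\phi)$ is, by construction (see Definition~\ref{def:Npoly} and the text after Definition~\ref{def:Tphi-norm}), symmetric under permutations within each boson-species subsequence of $x$ and antisymmetric within each fermion-species subsequence of $y$; consequently, writing $z=(x,y)$, for every $\sigma \in \vec\Sigma_z$ we have $F_{\sigma z}(\phi) = \sgn(\sigma_f) F_z(\phi)$, since the boson part of $\sigma$ only permutes within boson species (contributing a factor $+1$) and the fermion part contributes $\sgn(\sigma_f)$.

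Next I would compute $\pair{F,Sg}_\phi$ directly. By linearity of the pairing in its second argument and the definition of $S$,
\begin{equation}
    \pair{F,Sg}_\phi
    = \sum_{z \in \vec\Lambdabold^{*}} \frac{1}{z!} F_z(\phi) (Sg)_z
    = \sum_{z \in \vec\Lambdabold^{*}} \frac{1}{(z!)^2} F_z(\phi) \sum_{\sigma \in \vec\Sigma_z} \sgn(\sigma_f) g_{\sigma z}.
\end{equation}
Using $F_z(\phi) = \sgn(\sigma_f) F_{\sigma z}(\phi)$ and $\sgn(\sigma_f)^2 = 1$, each summand becomes $\frac{1}{(z!)^2} F_{\sigma z}(\phi) g_{\sigma z}$. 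Now I would reindex: for fixed "species content" $\vec r$, the map $\sigma \mapsto \sigma z$ runs over a set of sequences, and since $\sigma z$ has the same species subsequences as $z$ (only reordered within species), the quantity $z!$ depends only on the multiset of species labels, hence is constant along the orbit; moreover the number of $\sigma \in \vec\Sigma_z$ producing a given $w = \sigma z$ is the same for all $w$ in the orbit (it equals the order of the stabilizer, which is constant on the orbit). Summing over $z$ and over $\sigma$ and then collecting terms by $w=\sigma z$, the total multiplicity exactly cancels one factor of $z!$, yielding $\sum_w \frac{1}{w!} F_w(\phi) g_w = \pair{F,g}_\phi$.

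The step I expect to be the main obstacle is the bookkeeping of multiplicities in the reindexing: one must verify carefully that when $z$ ranges over all of $\vec\Lambdabold^{*}$ and $\sigma$ over $\vec\Sigma_z$, each sequence $w$ is hit exactly $|\vec\Sigma_z|/|\mathrm{orbit}(z)| = |\mathrm{Stab}(z)|$ times with total weight $\frac{1}{(z!)^2} \cdot |\mathrm{orbit}(z)| \cdot |\mathrm{Stab}(z)| = \frac{|\vec\Sigma_z|}{(z!)^2} = \frac{1}{z!}$ (using $|\vec\Sigma_z| = z!$ by definition of $z!$ as the product of species-subsequence factorials). The cleanest way to handle this rigorously is probably to group the sum over $z \in \vec\Lambdabold^{*}$ first by the total length $r$ and species content, and then, since $\vec\Sigma_z$ acts on the index set of sequences of that content, to use the orbit–stabilizer theorem; the sign factor $\sgn(\sigma_f)$ never causes cancellation precisely because it is squared, so no care with signs beyond that is needed. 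A remark worth including is that the identity shows $\pair{F,\cdot}_\phi$ annihilates $g - Sg$, i.e. the pairing, and hence the $T_\phi$ semi-norm, only "sees" the symmetrized part $Sg$ of a test function.
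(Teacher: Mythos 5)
Your proposal is correct and follows essentially the same route as the paper: both proofs expand the pairing, use the (anti)symmetry $F_{\sigma z}(\phi)=\sgn(\sigma_f)F_z(\phi)$ to absorb the sign, and then reindex the double sum over $z$ and $\sigma\in\vec\Sigma_z$, with the factor $|\vec\Sigma_z|=z!$ cancelling one of the two $1/z!$ factors. The only cosmetic differences are that the paper starts from $\pair{F,g}_\phi$ and changes variables $z\mapsto\sigma^{-1}z$ for each fixed $\sigma$, whereas you start from $\pair{F,Sg}_\phi$ and justify the same recounting via orbit--stabilizer; both are valid.
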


\begin{proof}
By the above-mentioned symmetry,
$F_{z}(\phi) = \sgn(\sigma_f)F_{\sigma(z)}(\phi)$
for all $\sigma \in \vec\Sigma_z$.  This implies that
$F_z(\phi) = \frac{1}{z!} \sum_{\sigma \in \vec\Sigma_z}
\sgn(\sigma_f)F_{\sigma(z)}(\phi)$, and hence
\begin{align}
    \pair{F,g}_\phi
    & =
    \sum_{z\in \vec\Lambdabold^{*}}
    \frac{1}{z!}  F_{z}(\phi) g_{z}
     =
    \sum_{z\in \vec\Lambdabold^{*}}
    \frac{1}{z!}  \frac{1}{z!} \sum_{\sigma \in \vec\Sigma_z}
    \sgn(\sigma_f)F_{\sigma(z)}(\phi) g_{z}.
\end{align}
The sum over $z$ is graded by sums over sequences of fixed length
and species choices, and for $z$ fixed within this gradation the set
$\vec\Sigma_z$ is independent of $z$.
It therefore makes sense to replace the summand
within the sum over $\sigma$ by an equivalent expression with
$z$ replaced by $\sigma^{-1}z$, and this does not change the sum.
This gives
\begin{align}
    \pair{F,g}_\phi
    & =
    \sum_{z\in \vec\Lambdabold^{*}}
    \frac{1}{z!}  \frac{1}{z!} \sum_{\sigma \in \vec\Sigma_z}
    \sgn(\sigma_f)F_{z}(\phi) g_{\sigma^{-1}z}
    =
    \sum_{z\in \vec\Lambdabold^{*}}
    \frac{1}{z!}  F_{z}(\phi)
    \frac{1}{z!} \sum_{\sigma \in \vec\Sigma_z}
    \sgn(\sigma_f) g_{\sigma^{-1}z}.
\end{align}
Since $\sgn(\sigma_f)=\sgn(\sigma^{-1}_f)$, and since summing
over $\sigma$ is the same as summing over $\sigma^{-1}$, this gives
the desired result.
\end{proof}

\begin{example}
\label{ex:pairing}
As a simple example of the zero-field pairing,  for fixed points
$x_i \in\Lambda$ and for $p \le p_\Ncal$, let
$F(\phi) = \prod_{i=1}^p \nabla^{\alpha_i}\phi_{x_i}$.
Direct computation
shows that
\refeq{Kgpairdef} leads to
\eq
    \langle F, g \rangle_0
    =
    \nabla^{\alpha_1}_{x_1} \cdots
    \nabla^{\alpha_p}_{x_p} \, (Sg)_{x_1,\ldots,x_p}.
\en
The right-hand side is in general
not the same as the corresponding expression
with $S$ omitted.  This shows that the pairing has a symmetrising
effect.
\end{example}

By definition,
\begin{equation}
\label{e:Tphi-r}
    \|F\|_{T_\phi}
    = \sup_{r \ge 0} \sup_{g^{(r)} \in B (\Phi^{(r)})}
    | \langle F , g^{(r)} \rangle_\phi |
    .
\end{equation}
Note that $\|F\|_{T_\phi}$ is always at least as large as
$|F_{\varnothing}|$ because this is the contribution from
the empty sequence part $g_{\varnothing}$ of the test function,
corresponding to $r=0$. The
$T_\phi$ semi-norm has several attractive and useful properties.  The most
fundamental of these is the product property stated in the following
proposition.  Its proof is given in Section~\ref{sec:pfprod} below.

\begin{prop}
\label{prop:prod}
For  $F,G \in \Ncal$,
$\|FG\|_{T_\phi} \leq \|F\|_{T_\phi}\|G\|_{T_\phi}$.
\end{prop}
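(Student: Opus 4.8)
The plan is to prove the bound by duality: for each test function $g$ in the unit ball $B(\Phi)$ we construct from $G$ a second test function $h=h_{G,\phi,g}$ with $\|h\|_\Phi \le \|G\|_{T_\phi}$ such that $\pair{FG,g}_\phi = \pair{F,h}_\phi$; then $|\pair{FG,g}_\phi| = |\pair{F,h}_\phi| \le \|F\|_{T_\phi}\,\|h\|_\Phi \le \|F\|_{T_\phi}\,\|G\|_{T_\phi}$, and taking the supremum over $g\in B(\Phi)$ gives the claim. We fix $\phi$ throughout; if $\|F\|_{T_\phi}$ or $\|G\|_{T_\phi}$ is infinite there is nothing to prove, so we assume both are finite.

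By Proposition~\ref{prop:pairingS} we have $\pair{FG,g}_\phi = \pair{FG,Sg}_\phi$, and $\|Sg\|_\Phi \le \|g\|_\Phi$: indeed $S$ is the average over $\sigma \in \vec\Sigma_z$ of the maps $g \mapsto \sgn(\sigma_f)\,g_{\sigma\,\cdot}$, each of which preserves the $\Phi$ norm because the weight $w_{\alpha,z}=\prod_k w_{\alpha_k,z_k}$ is invariant under the simultaneous permutation of $(\alpha,z)$ by $\sigma$, while $|\vec\Sigma_z| = z!$. So we may work with $Sg$. The crucial step is then a Leibniz (coproduct) identity for the pairing:
\begin{equation}
  \pair{FG,g}_\phi
  =
  \sum_{z_1,z_2 \in \vec\Lambdabold^{*}}
  \frac{1}{z_1!\, z_2!}\; F_{z_1}(\phi)\, G_{z_2}(\phi)\,(Sg)_{z_1 \concat z_2},
\end{equation}
where $\concat$ denotes species-by-species concatenation. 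This is obtained by expanding $F(\phi+\xi)$ and $G(\phi+\xi)$ and multiplying: the bosonic derivatives in $F_{z_1}G_{z_2}$ are distributed by the ordinary product rule; moving the Grassmann generators of $G$ past those of $F$ and re-ordering into species order introduces a sign that is exactly compensated by the antisymmetry of $Sg$ under $\vec\Sigma_{z_1\concat z_2}$ on the fermion labels; and the number of ways of interleaving $z_1$ and $z_2$ inside a sequence of fixed species content cancels the factorials coming from $(z_1\concat z_2)!$, leaving $1/(z_1!\,z_2!)$.

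Granting this, define $h$ on $\vec\Lambdabold^{*}$ by $h_{z_1} = \sum_{z_2}\frac{1}{z_2!}\,G_{z_2}(\phi)\,(Sg)_{z_1\concat z_2}$, so that the identity reads $\pair{FG,g}_\phi = \pair{F,h}_\phi$; since $Sg$ vanishes on sequences with more than $p_\Ncal$ boson components, so does $h$, hence $h$ is an admissible test function. To bound $\|h\|_\Phi$, fix $(\alpha_1,z_1)$ in the index set $\Acal$ of the $\Phi$ norm and set $\tilde g_{z_2} = w_{\alpha_1,z_1}^{-1}\,\nabla^{\alpha_1}_{z_1}(Sg)_{z_1\concat z_2}$, viewed as a test function of $z_2$. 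Because $w$ is multiplicative over concatenated sequences and the difference operators in the $z_1$- and $z_2$-variables combine into a single $\nabla$ on $z_1\concat z_2$, we get $\|\tilde g\|_\Phi \le \|Sg\|_\Phi \le \|g\|_\Phi \le 1$, and $\tilde g$ is admissible. Commuting $\nabla^{\alpha_1}$ through the sum over $z_2$ (legitimate since a finite difference is a finite linear combination of evaluations) gives $w_{\alpha_1,z_1}^{-1}\nabla^{\alpha_1}h_{z_1} = \pair{G,\tilde g}_\phi$, which is well-defined because $\|G\|_{T_\phi}<\infty$ and $\tilde g$ has finite $\Phi$ norm; hence $w_{\alpha_1,z_1}^{-1}|\nabla^{\alpha_1}h_{z_1}| \le \|G\|_{T_\phi}\,\|\tilde g\|_\Phi \le \|G\|_{T_\phi}$. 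Taking the supremum over $(\alpha_1,z_1)$ yields $\|h\|_\Phi \le \|G\|_{T_\phi}$, and the proof concludes as in the first paragraph.

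The main obstacle is the Leibniz/coproduct identity: marrying the bosonic product rule with the bookkeeping of Grassmann signs, and verifying that the interleaving multiplicities cancel the factorials $z!$ to produce exactly the coefficient $1/(z_1!\,z_2!)$, is the one genuinely delicate computation — and it is precisely here that the symmetrisation operator $S$ and Proposition~\ref{prop:pairingS} are essential. Everything downstream is routine manipulation of the multiplicative weight $w$.
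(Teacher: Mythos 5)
Your proposal is correct and follows essentially the same route as the paper: the paper also proves the bound by duality, defining an adjoint test function $(G^{*}g)_{z'}=\sum_{z''}\frac{1}{z''!}G_{z''}\sum_{z\in z'\diamond z''}\frac{z'!z''!}{z!}\sgn(z',z'';z)\,g_z$ so that $\pair{FG,g}_\phi=\pair{F,G^{*}g}_\phi$, and then showing $\|G^{*}g\|_\Phi\le\|G\|_{T_\phi}$ via the multiplicativity of the weight $w$; your ``Leibniz/coproduct identity'' is exactly the paper's Lemma~\ref{lem:product-formula}, with your use of $S$ and concatenation an equivalent repackaging (via the admissibility of the coefficients) of the paper's sum over complementary subsequences and interleavings.
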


Another property is the following
proposition, which is proved in Section~\ref{sec:ebdTphi} below.
In its statement, $e^{-F}$ is defined by Taylor expansion in the
fermion field.  In general, this can introduce sign ambiguities,
but the semi-norm is insensitive to these by \refeq{Tphi-r}.  However,
in our application in \refeq{bdFfoptbis} below, no sign ambiguity
arises.

\begin{prop}
\label{prop:eK}
Let $F\in \Ncal$ and let $F_\varnothing$ be the
purely bosonic part of $F$. Then
\eq
    \|e^{-F  }\|_{T_\phi}
    \le
    e^{-2{\rm Re} F_\varnothing(\phi)+\|F  \|_{T_\phi}}
    .
\en
\end{prop}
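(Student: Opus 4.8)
The plan is to peel the purely bosonic scalar $F_\varnothing(\phi)$ off $F$, evaluate the resulting scalar prefactor exactly, and control the remaining exponential with the product property (Proposition~\ref{prop:prod}).

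Write $F = F_\varnothing(\phi)\1 + G$ with $G := F - F_\varnothing(\phi)\1 \in \Ncal$, where $F_\varnothing(\phi)\1$ is the scalar multiple of the identity. Since a scalar is central, $e^{-F} = e^{-F_\varnothing(\phi)}e^{-G}$, and (using that the bosonic part of $G$ is central, so the exponential series converges in the $T_\phi$ semi-norm by Proposition~\ref{prop:prod} and agrees with the fermionic-Taylor definition) $e^{-G} = \sum_{n\ge 0}(-G)^n/n!$. Scalars factor out of $\|\cdot\|_{T_\phi}$ by linearity of the pairing \refeq{Kgpairdef}, and termwise use of Proposition~\ref{prop:prod} gives
\[
    \|e^{-F}\|_{T_\phi} = |e^{-F_\varnothing(\phi)}|\,\|e^{-G}\|_{T_\phi} \le e^{-{\rm Re}F_\varnothing(\phi)}\,e^{\|G\|_{T_\phi}} .
\]
So it remains to show $\|G\|_{T_\phi}\le \|F\|_{T_\phi}-{\rm Re}F_\varnothing(\phi)$; I expect in fact to prove the sharper bound $\|F\|_{T_\phi} \ge |F_\varnothing(\phi)| + \|G\|_{T_\phi}$, which suffices because $|F_\varnothing(\phi)| \ge {\rm Re}F_\varnothing(\phi)$ (combining with the display then yields $\|e^{-F}\|_{T_\phi}\le e^{-{\rm Re}F_\varnothing(\phi)-|F_\varnothing(\phi)|+\|F\|_{T_\phi}}\le e^{-2{\rm Re}F_\varnothing(\phi)+\|F\|_{T_\phi}}$).

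For the sharper bound, note that $G$ and $F$ have identical coefficients $F_z(\phi)$ for $z\neq\varnothing$ while $G_\varnothing(\phi)=0$, so for every test function $g$ one has $\pair{G,g}_\phi = \pair{F,g}_\phi - F_\varnothing(\phi)g_\varnothing = \sum_{z\neq\varnothing}\frac{1}{z!}F_z(\phi)g_z$. Given $g\in B(\Phi)$, I would build a competitor $\tilde g$ by replacing the length-$0$ component by the constant $e^{-i\arg F_\varnothing(\phi)}$ (of modulus $1$; take $0$ if $F_\varnothing(\phi)=0$) and multiplying every component of length $\ge 1$ by a single unit phase $e^{i\beta}$ with $e^{i\beta}\pair{G,g}_\phi=|\pair{G,g}_\phi|$. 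By Definition~\ref{def:gnorm-general} the $\Phi$ norm is the supremum over $r$ of the $\Phi^{(r)}$ norms, and neither of these operations raises any $\Phi^{(r)}$ norm, so $\tilde g\in B(\Phi)$; moreover $\pair{F,\tilde g}_\phi = |F_\varnothing(\phi)| + |\pair{G,g}_\phi|$, which is real and nonnegative. Hence $|F_\varnothing(\phi)| + |\pair{G,g}_\phi| = |\pair{F,\tilde g}_\phi| \le \|F\|_{T_\phi}$, and taking the supremum over $g\in B(\Phi)$ (recall \refeq{Tdef}) gives the sharper bound.

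The crux is this last step: squeezing out the additional $e^{-{\rm Re}F_\varnothing(\phi)}$ beyond the evident $|e^{-F_\varnothing(\phi)}|$. It works precisely because the $\Phi$ norm decouples over sequence length, so the empty-sequence (``bulk volume'') value $g_\varnothing$ of a unit test function can be given any unit phase independently of the positive-length components, allowing $F_\varnothing(\phi)g_\varnothing$ to be brought into phase with $\pair{G,g}_\phi$. The remaining points — the degenerate case $F_\varnothing(\phi)=0$ (immediate, since then $G=F$ and $\|{-}F\|_{T_\phi}=\|F\|_{T_\phi}$), and the convergence of and fermionic-Taylor identity for the series defining $e^{-G}$ — are routine consequences of Proposition~\ref{prop:prod}.
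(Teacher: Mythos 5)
Your proof is correct, and it takes a genuinely different route from the paper's. The paper proves the bound by establishing $\|1-F\|_{T_\phi} \le |1-F_\varnothing(\phi)|-|F_\varnothing(\phi)|+\|F\|_{T_\phi}$ (see \refeq{explinbis}), combining this with the elementary inequality $|1-z|-|z|\le 1-2\,{\rm Re}\,z$, applying the product property to $(1-\frac{1}{N}F)^N$, and then justifying the interchange of the limit $N\to\infty$ with the semi-norm via Lemma~\ref{lem:fK}. You instead factor out the scalar $c=F_\varnothing(\phi)$ exactly, reduce to $\|e^{-G}\|_{T_\phi}\le e^{\|G\|_{T_\phi}}$ (which is \refeq{apos-e}), and prove the key additivity $\|F\|_{T_\phi}\ge |F_\varnothing(\phi)|+\|F-F_\varnothing(\phi)\|_{T_\phi}$ by re-phasing the $r=0$ component of a unit test function independently of its components of positive length --- legitimate because $\|g\|_\Phi=\sup_r\|g^{(r)}\|_{\Phi^{(r)}}$ decouples over sequence length and $w_\varnothing=1$. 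This is the same structural fact the paper exploits implicitly when it asserts that the supremum of $\sum_{r\ge 0}|\sum_z \frac{1}{z!}F_z g_z|$ over $g\in B(\Phi)$ equals $\|F\|_{T_\phi}$, but you use it to bypass the $N$-th power limiting argument and Lemma~\ref{lem:fK} entirely. Your argument is shorter, and it yields the slightly sharper bound $\|e^{-F}\|_{T_\phi}\le e^{-{\rm Re}\,F_\varnothing(\phi)-|F_\varnothing(\phi)|+\|F\|_{T_\phi}}$, which implies the stated one since $|F_\varnothing(\phi)|\ge {\rm Re}\,F_\varnothing(\phi)$; the price is only the routine verification (at the same level of rigour as \refeq{apos}--\refeq{apos-e} in the paper) that $e^{-F}=e^{-c}e^{-G}$ as elements of $\Ncal$ and that the exponential series behaves well under the semi-norm.
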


\subsection{Example for the \texorpdfstring{$T_\phi$}{Tphi} semi-norm}
\label{sec:Tphi-ex}

For the next proposition, we consider the
case $\Lambdabold_b = (\Lambda \sqcup \bar\Lambda)$
and $\Lambdabold_f = (\Lambda \sqcup \bar\Lambda)$,
corresponding to a complex boson field
$(\bar\phi,\phi)$
and a conjugate fermion field $(\bar\psi,\psi)$.
We use the test function
space $\Phi(\h)$ of Example~\ref{ex:h}, with its associated space
$T_\phi(\h)$, where $\h$ takes the same value for all fields.
For a complex boson field $\phi$ and $x \in \Lambda$,
we define $\tau_x \in \Ncal$ by
\eq
    \tau_x = \phi_x\bar\phi_x + \psi_x\bar\psi_x.
\en
We may regard $\phi_x$ as an element of $\Ncal$.  By definition
its $T_\phi$ semi-norm is $\|\phi_x\|_{T_\phi(\h)} = |\phi_x|+\h$.
We may also regard the boson field $\phi$ as the test function obtained
by extending to the zero function on sequences in $\vec\Lambdabold^*$
which do not consist of a single component in $\Lambdabold_b$;
then its norm is $\|\phi\|_\Phi$.
Since $\h^{-1}|\phi_x| \le  \|\phi\|_{\Phi(\h)}$
by definition, we have
\eq
    \|\phi_x\|_{T_\phi(\h)} \le \h \left(1+ \|\phi\|_{\Phi(\h)} \right).
\en

\begin{prop}
\label{prop:taunorm}
The $T_\phi(\h)$ semi-norm of $\tau_x$ obeys the identity
\eq
\label{e:taunorm}
    \|\tau_x\|_{T_\phi(\h)} = (|\phi_x|+\h )^2+\h^2
\en
and the inequality
\eqalign
\label{e:VxT0}
    \|\tau_x\|_{T_\phi(\h)}
    &
    \le 3\h^2 (1+\|\phi\|_{\Phi(\h)}^2)  .
\enalign
Suppose that $a \in \C$ obeys
$|{\rm Im}\, a| \le \frac 12 {\rm Re}\, a$.
Given any real number $q_2$, there is a
constant $q_{1}$ (with $q_{1}=O(q_2^2)$ as $q_2\to\infty$) such that
\eqalign
\label{e:bdFfoptbis}
    \|e^{-a\tau_x^2}\|_{T_{\phi(\h)}}
    & \le
    e^{({\rm Re}\,a)\h^4 (q_1-q_2 |\phi_x/h|^2 )}.
\enalign
\end{prop}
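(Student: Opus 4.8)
The plan is to prove the three assertions \refeq{taunorm}, \refeq{VxT0} and \refeq{bdFfoptbis} in that order. For the identity \refeq{taunorm} I would first list the nonvanishing Taylor coefficients $F_z(\phi)$ of $F=\tau_x$ in the sense of \refeq{Fxyphi}: the purely bosonic term $F_\varnothing(\phi)=\phi_x\bar\phi_x$; the two first-order boson derivatives, equal to $\bar\phi_x$ and to $\phi_x$; the (unique, up to ordering) second-order boson derivative, equal to $1$; and the coefficient of the monomial $\psi_x\bar\psi_x$, equal to $1$. All other coefficients vanish, since third and higher derivatives of $\phi_x\bar\phi_x$ vanish and $\psi_x\bar\psi_x$ does not depend on $\phi$. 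Hence $\langle\tau_x,g\rangle_\phi$ is a sum of five terms: $\phi_x\bar\phi_x\,g_\varnothing$, the two first-order contributions $\bar\phi_x g_{(x)}$ and $\phi_x g_{(\bar x)}$, and the two second-order contributions (one bosonic, one fermionic), each carrying a factor $\frac1{2!}$ and receiving a contribution from each of the two orderings of the length-two sequence. For $g$ in the unit ball of $\Phi(\h)$ one has $|g_\varnothing|\le1$, the length-one values are bounded by $\h$, and the length-two values by $\h^2$, so the triangle inequality gives $|\langle\tau_x,g\rangle_\phi|\le |\phi_x|^2+2\h|\phi_x|+2\h^2=(|\phi_x|+\h)^2+\h^2$. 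For the matching lower bound I would exhibit a test function that is constant on each ``block'' of $\vec\Lambdabold^*$, i.e.\ constant according to which copy of $\Lambda$ each entry belongs to, with values $1$, $\h$, $\h^2$ on the relevant blocks and phases chosen to make all five contributions simultaneously nonnegative; since a lattice shift preserves the copy, every finite difference of such a test function vanishes, so it lies in $B(\Phi(\h))$, and it realises the value $(|\phi_x|+\h)^2+\h^2$. (This uses $p_\Ncal\ge2$ so the length-two boson contribution survives; Proposition~\ref{prop:pairingS} is not needed here.)

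The inequality \refeq{VxT0} is then elementary: since $\h^{-1}|\phi_x|\le\|\phi\|_{\Phi(\h)}=:t$, \refeq{taunorm} gives $\|\tau_x\|_{T_\phi(\h)}\le\h^2((t+1)^2+1)$, and $2t\le t^2+1$ yields $(t+1)^2+1\le2t^2+3\le3(1+t^2)$. For the exponential bound \refeq{bdFfoptbis} I would begin from the observation that $\psi_x^2=0$ forces $(\psi_x\bar\psi_x)^2=0$, so $\tau_x^2=(\phi_x\bar\phi_x)^2+2(\phi_x\bar\phi_x)\psi_x\bar\psi_x$; in particular the purely bosonic part of $F:=a\tau_x^2$ is $F_\varnothing(\phi)=a(\phi_x\bar\phi_x)^2$, and $e^{-a\tau_x^2}=e^{-a(\phi_x\bar\phi_x)^2}(1-2a(\phi_x\bar\phi_x)\psi_x\bar\psi_x)$ carries no sign ambiguity. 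Applying Proposition~\ref{prop:eK}, then the product property (Proposition~\ref{prop:prod}) together with \refeq{taunorm}, gives
\[
    \|e^{-a\tau_x^2}\|_{T_\phi(\h)}
    \le
    e^{-2({\rm Re}\,a)|\phi_x|^4 + |a|\,((|\phi_x|+\h)^2+\h^2)^2}.
\]
The hypothesis $|{\rm Im}\,a|\le\frac12{\rm Re}\,a$ forces ${\rm Re}\,a\ge0$ and $|a|\le\frac32{\rm Re}\,a$.

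It then remains to put this bound in the required form. Writing $t=|\phi_x|/\h$, pulling out the common factor $({\rm Re}\,a)\h^4$, and using the identity $\frac32((t+1)^2+1)^2-2t^4=-\frac12 t^4+6t^3+12t^2+12t+6$, the assertion \refeq{bdFfoptbis} reduces to the purely numerical inequality $-\frac12 t^4+6t^3+(12+q_2)t^2+12t+6\le q_1$ for all $t\ge0$. The left-hand side is a quartic in $t$ with negative leading coefficient, hence bounded above on $[0,\infty)$; one takes $q_1$ to be its supremum, and a crude splitting such as $6t^3\le\frac14 t^4+36t^2$ and $12t\le t^2+36$ gives $q_1\le(49+q_2)^2+42=O(q_2^2)$ as $q_2\to\infty$. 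The main obstacle is Step~1: correctly tracking the factorials and identifying exactly which sequences of $\vec\Lambdabold^*$ contribute, and — for the lower bound — recognising that the extremal test functions must be block-constant rather than concentrated near $x$, since the weights $\h R^{-|\alpha|}$ with $R\ge2$ penalise rapid variation. Once \refeq{taunorm} is established, the other two assertions are routine.
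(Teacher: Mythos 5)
Your proposal is correct and follows essentially the same route as the paper: exact evaluation of the norm of $\tau_x$, the elementary bound $P(t)\le 3(1+t^2)$, and then Proposition~\ref{prop:eK} combined with the product property and a scalar quartic inequality for the exponential bound. The only difference is one of detail: where the paper asserts the values of $\|\phi_x\bar\phi_x\|_{T_\phi(\h)}$ and $\|\psi_x\bar\psi_x\|_{T_\phi(\h)}$ ``by definition'' (using additivity of the semi-norm over fermionic degree), you verify the identity \refeq{taunorm} directly from the pairing, including an explicit block-constant extremal test function for the lower bound, and you make the $q_1=O(q_2^2)$ estimate explicit.
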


\begin{proof}
By definition,  $\tau_x = \phi_x\bar\phi_x + \psi_x\bar\psi_x$.
Also by definition, the semi-norm of a sum of terms of
different fermionic degree is the sum of the semi-norms, and hence
\eq
    \|\tau_{x}\|_{T_\phi(\h)} = \|\phi_x\bar\phi_x\|_{T_\phi(\h)}
    + \|\psi_x\bar\psi_x\|_{T_\phi(\h)}.
\en
By definition of the semi-norm,
\eq
    \|\psi_x\bar\psi_x\|_{T_\phi(\h)} = \h^2
\en
and
\eq
    \|\phi_x\bar\phi_x\|_{T_\phi(\h)}
    =
    |\phi_x|^2 + |\phi_x|\, \h + \h|\bar\phi_x| + \h^2
    =
    (|\phi_x| + \h)^2.
\en
This proves \refeq{taunorm}.
We write $t= |\phi_x|/\h$ and $P(t)=(t+1)^2+1$.  Then
\eq
    \|\tau_x\|_{T_\phi(\h)} = \h^2 P(t).
\en
Since $t \le \|\phi\|_{\Phi(\h)}$
and $P(t) \le 3(1+t^2)$, this gives
\eq
\label{e:tauxbd}
    \|\tau_x\|_{T_\phi(\h)}
    \le
    \h^2 P(t)
    = \h^2 P(\|\phi\|_{\Phi(\h)})
    \le
    3\h^2 (1+\|\phi\|_{\Phi(\h)}^2),
\en
which proves \refeq{VxT0}.

Let $\alpha = {\rm Re}\, a$.
By \refeq{tauxbd}, the product property,
and the fact that $|a| \le \frac 32 \alpha$ by assumption,
\eq
\label{e:tau2bd}
    \|a\tau_x^2\|_{T_{\phi(\h)}}
    \le
    |a|\, \h^4 P(t)^2
    \le
    \frac 32 \alpha \h^4 P(t)^2
    .
\en
By Proposition~\ref{prop:eK} and \refeq{tau2bd},
\begin{equation}
    \|e^{-a \tau_{x}^2}\|_{T_{\phi(\h)}}
    \le
    e^{-2\alpha |\phi_x|^4}
    e^{\frac 32 \alpha \h^4 \poly(t)^2}
    \leq
    e^{\alpha\h^4 [-2 t^4 + \frac 32 \poly(t)^2]}.
\end{equation}
Since $\poly$ has leading term $t^2$,
given any real number $q_2$ there is a
constant $q_{1} = O(q_2^2)$ such that
$-2 t^4 + \frac 32 \poly(t)^2   \leq q_{1} - q_2 t^2$.
(In fact, a quartic bound also holds, but this quadratic bound
will suffice for our needs.)
This  gives
\refeq{bdFfoptbis}, and completes the proof.
\end{proof}

On the right-hand side of \eqref{e:VxT0}, the appearance of the norm
$\|\phi\|_{\Phi (\h)}$ could be considered alarming, as this involves
a supremum over the entire lattice and typical fields will be
uncontrollably large in some regions of space.  In our applications
this difficulty will be overcome as follows.  First, we need some
definitions.  For $X \subset \Lambda$ and any test function space
$\Phi$, we define a new norm on $\Phi$ by
\begin{align}
\label{e:PhiXdef}
    \|g\|_{\Phi(X)}
    &=
    \inf \{ \|g -f\|_{\Phi} :
    \text{$f_{z} = 0$
    if all components of $z\in\vec\Lambdabold^*$ are in $X$}\}.
\end{align}
As in Section~\ref{sec:facexp}, we define
\begin{align}
\label{e:NXdef}
    \Ncal (X) &= \{ F \in \Ncal :
    F_{z}=0 \; \text{if any component of $z\in\vec\Lambdabold^*$
    is not in $X$}\}
.
\end{align}
Then $\Ncal(X)$ is a subspace of $\Ncal$, and $\Ncal = \Ncal
(\Lambdabold)$.
Suppose now that $F \in \Ncal(X)$.  Changing the
value of $\phi_x$ for $x \not\in X$ has no effect on the pairing of
$F$ with any test function $g$ and hence has no effect on any $T_\phi$
semi-norm of $F$. Thus, returning to \eqref{e:VxT0}, by taking the infimum
over all possible redefinitions of $\phi$ off $X = \{x \}$, we can
replace \eqref{e:VxT0} by
\begin{equation}
    \|\tau_x\|_{T_\phi(\h)}
    \le 3\h^2 (1+\|\phi\|_{\Phi(X,\h)}^2)
.
\end{equation}

\subsection{Further properties of the \texorpdfstring{$T_\phi$}{Tphi}
semi-norm}
\label{sec:Tphifurther}

Recall the definition of polynomial elements of $\Ncal$
in Definition~\ref{def:Npoly}.
The following proposition bounds the
$T_\phi$ semi-norm of a polynomial in terms of the $T_0$ semi-norm.

\begin{prop}\label{prop:T0K}
If $F$ is a polynomial of degree $A \le p_\Ncal$ then
\begin{align}
    \|F\|_{T_{\phi}}
    &\le
    \|F\|_{T_{0}}\big(1+\|\phi\|_{\Phi}\big)^{A}
    .
\end{align}
\end{prop}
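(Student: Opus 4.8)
The plan is to rewrite the pairing $\pair{F,g}_\phi$ of $F$ with an arbitrary test function $g\in B(\Phi)$ as a pairing $\pair{F,\tilde g}_0$ at zero field against a modified test function $\tilde g$, built out of $g$ and out of $\phi$ (the latter regarded as a test function supported on single boson components, whose $\Phi$ norm is $\|\phi\|_\Phi$), and then to bound $\|\tilde g\|_\Phi\le \|g\|_\Phi(1+\|\phi\|_\Phi)^{A}$. Granting this, the defining property \refeq{Tdef} of the semi-norm, at $\phi=0$ (applied after rescaling to $\tilde g$, which need not lie in $B(\Phi)$), gives $|\pair{F,g}_\phi|=|\pair{F,\tilde g}_0|\le \|F\|_{T_0}\|\tilde g\|_\Phi\le \|F\|_{T_0}(1+\|\phi\|_\Phi)^{A}$, and taking the supremum over $g\in B(\Phi)$ completes the proof.

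The construction of $\tilde g$ uses that $F$ is a polynomial of degree $A$: its coefficients $F_{x,y}$ are constants, so the terminating, exact Taylor expansion of $F_y$ in the boson field gives $F_{x,y}(\phi)=\sum_{x''}\frac{1}{x''!}\,F_{x\concat x'',\,y}\,\phi^{x''}$, the sum over species-ordered boson sequences $x''$, with $F_{x\concat x'',\,y}=0$ unless $p(x)+p(x'')+q(y)\le A$; the identification of the Taylor coefficients with $F_{x\concat x'',\,y}$ is the consistency of \refeq{Fxyphi} with Definition~\ref{def:Npoly}. Substituting this into \refeq{Kgpairdef} and re-indexing the double sum by the species-wise concatenation $\bar x=x\concat x''$, the factorial prefactors recombine as $\frac{1}{x!\,x''!}=\frac{1}{\bar x!}\prod_i\binom{\bar r_i}{p_i}$, the product over boson species, where $\bar r_i$ and $p_i$ are the lengths of the species-$i$ parts of $\bar x$ and of $x$. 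Reading off the coefficient of $\frac{1}{\bar x!\,y!}F_{\bar x,y}$ then exhibits $\pair{F,g}_\phi=\pair{F,\tilde g}_0$, where $\tilde g_{\bar x,y}$ is the sum, over all ways of splitting each species-subsequence of $\bar x$ into an initial part $x$ fed into $g$ and a final part $x''$ filled with factors of $\phi$, of the products $\phi^{x''}g_{x,y}$ weighted by $\prod_i\binom{\bar r_i}{p_i}$; the fermion sequence $y$ is untouched by the boson differentiation and merely rides along. Since $F_{\bar x,y}=0$ once $p(\bar x)+q(y)>A$, only sequences of length at most $A$ contribute to $\pair{F,\tilde g}_0$, so $\tilde g$ may be taken supported there, and the hypothesis $A\le p_\Ncal$ is what makes such a $\tilde g$ an admissible test function; in particular only the part of $g$ of length at most $A$ is used.

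It remains to bound $\|\tilde g\|_\Phi=\sup_{n\le A}\|\tilde g^{(n)}\|_{\Phi^{(n)}}$. By the triangle inequality, $\|\tilde g^{(n)}\|_{\Phi^{(n)}}$ is at most the sum over splittings of $\prod_i\binom{\bar r_i}{p_i}$ times the $\Phi^{(n)}$ norm of $\phi^{x''}g_{x,y}$; by the sub-multiplicativity of the $\Phi$ norm under a product formed along a prescribed decomposition of the sequence — the inequalities \refeq{plusnormass}--\refeq{plusnormass-2} and their evident analogues — that norm is at most $\|g\|_\Phi\|\phi\|_\Phi^{\,q}$, where $q=\sum_i(\bar r_i-p_i)$ is the total number of $\phi$-factors carried by the term. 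Summing the weights $\prod_i\binom{\bar r_i}{p_i}$ over all splittings and applying the binomial theorem species-by-species collapses the bound to $\|\tilde g^{(n)}\|_{\Phi^{(n)}}\le \|g\|_\Phi\prod_i(1+\|\phi\|_\Phi)^{\bar r_i}\le \|g\|_\Phi(1+\|\phi\|_\Phi)^{A}$, which gives the required bound on $\|\tilde g\|_\Phi$. The step I expect to require the most care is this bookkeeping: keeping the species-ordered sequences and the factorials $z!=x!y!$ of Section~\ref{sec:Tphidef} exactly straight, so that the recombined weights are precisely the binomial coefficients that the binomial theorem turns into $(1+\|\phi\|_\Phi)^{A}$ rather than a larger power or an extra constant, and verifying that the $\Phi$-norm product inequality is available in the prescribed-decomposition form invoked above.
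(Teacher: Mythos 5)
Your proposal is correct and follows essentially the same route as the paper: your $\tilde g$ is exactly the paper's adjoint test function $\sigma^{*}_\phi(1)g$ from Lemma~\ref{lem:Stm}, the identity $\pair{F,g}_\phi=\pair{F,\sigma^*_\phi(1)g}_0$ is \refeq{sigstar} with $t=0$, $s=1$, and your binomial-weight bookkeeping together with \refeq{plusnormass} and the binomial theorem reproduces the bound \refeq{sigstar0}, after which the supremum over $g\in B(\Phi)$ concludes as in Lemma~\ref{lem:Tphi-pol-bound}. The only cosmetic difference is that the paper packages the construction as a general operator on test functions (also used later for Proposition~\ref{prop:Tphi-bound}), whereas you build it ad hoc via the terminating Taylor expansion of the polynomial coefficients.
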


It is an immediate consequence of
Proposition~\ref{prop:T0K} that for $A \ge 0$
and any $\kappa \in (0,2^{-1/2}]$,
\begin{align}
\label{e:Fpolyexp}
    \|F\|_{T_{\phi}}
    &\le
    \|F\|_{T_{0}} A^{A/2}\kappa^{-A} e^{\kappa^2 \|\phi\|_{\Phi}^{2}}.
\end{align}
For $A=0$ this is trivial (with $0^{0}=1$),
since then $F$ is simply a complex number $w$ and
$\|F\|_{T_{\phi}} =\|F\|_{T_{0}}=|w|$.
Also, for $A \ge 1$ and $\kappa \in (0,2^{-1/2}]$,
\refeq{Fpolyexp} follows from Proposition~\ref{prop:T0K}
together with the inequality
\begin{align}
    \label{e:expkappa}
    1+x \le \sqrt{2} \big(1+x^{2}\big)^{1/2}
    \le
    A^{1/2}\kappa^{-1} \big(1+2A^{-1}\kappa^{2}x^{2}\big)^{1/2}
    \le
    A^{1/2}\kappa^{-1} e^{A^{-1}\kappa^{2}x^{2}}.
\end{align}

Suppose we have two test function spaces $\Phi$ and $\Phi'$, with
corresponding semi-norms $T_\phi$ and $T_\phi'$.  For $n \ge 0$, let
\eq
\label{e:rhodef1}
    \rho^{(n)} = 2\sup_{r \ge n}
    \sup_{g \in B (\Phi'^{(r)})}
    \|g\|_{\Phi^{(r)}}.
\en
In our applications, $\rho^{(n)}$ will be small for $n \ge 1$.
The following proposition
relates the $T_\phi$ and $T_\phi'$ semi-norms.

\begin{prop}
\label{prop:Tphi-bound}
Let $A< p_\Ncal$ be a non-negative integer
and let $F \in\Ncal$.  Then
\begin{align}
    \|F\|_{T_{\phi}'}
    & \leq
    \left(1 + \|\phi\|_{\Phi'}\right)^{A+1}
    \left( \|F\|_{T_{0}'}
    +
    \rho^{(A+1)} \sup_{0\le t\le 1}
    \|F\|_{T_{t\phi}}  \right).
\label{e:Tphicor1}
\end{align}
\end{prop}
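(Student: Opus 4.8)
The plan is to compare the two semi-norms by decomposing a test function $g \in B(\Phi'^{(r)})$ according to whether $r \le A$ or $r \ge A+1$, and handling the low-degree part via a polynomial truncation argument (as in Proposition~\ref{prop:T0K}) and the high-degree part via the quantity $\rho^{(A+1)}$. First I would fix $\phi$, fix $r \ge 0$ and $g^{(r)} \in B(\Phi'^{(r)})$, and estimate $|\pair{F,g^{(r)}}_\phi|$. For $r \ge A+1$, by definition of $\rho^{(n)}$ in \refeq{rhodef1} we have $\|g^{(r)}\|_{\Phi^{(r)}} \le \frac{1}{2}\rho^{(A+1)}$, so writing $g^{(r)} = \frac{1}{2}\rho^{(A+1)} \tilde g$ with $\tilde g$ in the unit ball of $\Phi^{(r)}$ gives $|\pair{F,g^{(r)}}_\phi| \le \frac{1}{2}\rho^{(A+1)}\|F\|_{T_\phi}$. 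This already nearly handles the high-degree contribution, with a factor $\rho^{(A+1)}$ and no growth factor in $\|\phi\|_{\Phi'}$ at all; the growth factor in the statement is only needed for the low-degree part, so this is consistent.

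For the low-degree part $r \le A$, the idea is to reduce to the $T_0$ semi-norm. The key device is the same one underlying Proposition~\ref{prop:T0K}: the pairing $\pair{F, g^{(r)}}_\phi$ involves the Taylor coefficients $F_{x,y}(\phi)$, and one re-expands these around $\phi=0$ using that $F$, when restricted to its action against test functions of length $\le A$, behaves like a polynomial of degree $\le A$ in the sense that the relevant information is captured by finitely many derivatives. Concretely I would introduce the ``polynomial part of degree $A$'' of $F$ — call it $P$ — obtained by truncating the Taylor expansion \refeq{norm-mot1} (both in $\phi$ and in $\xi,\psi$) at total degree $A$. Then for any test function $g^{(r)}$ with $r\le A$, the pairing $\pair{F,g^{(r)}}_\phi$ agrees with $\pair{P, g^{(r)}}_\phi$ in the sense relevant here, and Proposition~\ref{prop:T0K} applied to $P$ (which is a genuine polynomial of degree $\le A < p_\Ncal$) gives $\|P\|_{T_\phi'} \le \|P\|_{T_0'}(1+\|\phi\|_{\Phi'})^A$. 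One then needs $\|P\|_{T_0'} \le \|F\|_{T_0'}$, which holds because the truncation only removes contributions of test functions of length $> A$, which are anyway weighed against higher Taylor coefficients; and $\sup_{r \le A}\sup_{g^{(r)}\in B(\Phi'^{(r)})}|\pair{P,g^{(r)}}_\phi| = \sup_{r\le A}\sup_{g^{(r)}}|\pair{F,g^{(r)}}_\phi|$ since the two pairings agree on short test functions.

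Collecting the two cases: $\|F\|_{T_\phi'} = \sup_r \sup_{g^{(r)}} |\pair{F,g^{(r)}}_\phi| \le \|P\|_{T_\phi'} + \frac{1}{2}\rho^{(A+1)}\|F\|_{T_\phi}$. The first term is bounded by $\|F\|_{T_0'}(1+\|\phi\|_{\Phi'})^A$. To obtain the stated form with the uniform supremum $\sup_{0\le t\le 1}\|F\|_{T_{t\phi}}$ and the exponent $A+1$, I would, instead of using $\|F\|_{T_\phi}$ directly on the high-degree term, pass through the same polynomial-truncation mechanism one more time but at degree $A$ with a remainder of order $A+1$: Taylor-expand $\phi \mapsto \pair{F, g^{(r)}}_\phi$ (for the high-degree test functions) around $\phi = 0$ to order $A$, so that the remainder is an integral over $t\in[0,1]$ of an $(A+1)$st derivative, which is why $(1+\|\phi\|_{\Phi'})^{A+1}$ and $\sup_{0\le t\le 1}\|F\|_{T_{t\phi}}$ appear. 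The main obstacle I anticipate is precisely this last bookkeeping step — carefully organizing the simultaneous Taylor expansion in the boson field $\phi$ (giving the $(1+\|\phi\|_{\Phi'})^{A+1}$ factor and the $t$-integral) and the truncation/splitting in test-function length (giving the $\|F\|_{T_0'}$ term and the $\rho^{(A+1)}$ term), and verifying that the cross terms and the change-of-test-function-space estimates combine without losing a constant. The clean way is likely to apply a single Taylor-with-integral-remainder identity to the function $t \mapsto \langle F, g\rangle_{t\phi}$ and then bound the zeroth-order term by $\|F\|_{T_0'}$-type quantities and the remainder using $\rho^{(A+1)}$ together with the product estimate \refeq{plusnormass} relating $\phi$-factors to the $\Phi'$ norm.
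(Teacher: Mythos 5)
Your overall architecture (split $F=P+R$ with $P$ the degree-$A$ Taylor truncation, use Proposition~\ref{prop:T0K} for $P$, use $\rho^{(A+1)}$ for the rest) matches the paper's, but your treatment of test functions of length $r\le A$ contains a genuine error. You claim that for such $g^{(r)}$ the pairings $\pair{F,g^{(r)}}_\phi$ and $\pair{P,g^{(r)}}_\phi$ agree. They do not: the pairing \refeq{Kgpairdef} uses the coefficients $F_z(\phi)$ evaluated at $\phi$, whereas $P$ is the truncation of the Taylor expansion about $\phi=0$, so for $z$ of length $p\le A$ the difference $R_z(\phi)=F_z(\phi)-P_z(\phi)$ is a (generically nonzero) Taylor remainder. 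Already for $A=0$ and $F=e^{\phi_x}$ one has $P=1$ and $\pair{F,g^{(0)}}_\phi=e^{\phi_x}g_\varnothing\neq g_\varnothing=\pair{P,g^{(0)}}_\phi$. Consequently your asserted identity $\sup_{r\le A}\sup_{g^{(r)}}|\pair{P,g^{(r)}}_\phi| = \sup_{r\le A}\sup_{g^{(r)}}|\pair{F,g^{(r)}}_\phi|$ fails, and the contribution of $R$ to short test functions --- which is the substantive content of the proposition --- is left unestimated.

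The tool you mention at the very end (Taylor with integral remainder for $t\mapsto\pair{\cdot,g}_{t\phi}$) is indeed the missing ingredient, but it must be applied to $R$ paired against the \emph{short} test functions, not the long ones: for $r\ge A+1$ your direct bound $\tfrac12\rho^{(A+1)}\|F\|_{T_\phi}\le\tfrac12\rho^{(A+1)}\sup_{t}\|F\|_{T_{t\phi}}$ is already complete and is exactly the paper's case (i). In the paper's case (ii), for $g$ supported on length $p\le A$ one sets $f(t)=\pair{R,g}_{t\phi}$ and observes that $f^{(k)}(0)=0$ for $k\le A-p$ (because $P$ matches the Taylor coefficients of $F$ at zero field up to total degree $A$), so the Taylor remainder formula gives $|f(1)|\le\frac{1}{m!}\sup_{t\in[0,1]}|f^{(m)}(t)|$ with $m=A+1-p$; by Lemma~\ref{lem:Stm}, $f^{(m)}(t)=\pair{R,\sigma^{*(m)}_\phi g}_{t\phi}=\pair{F,\sigma^{*(m)}_\phi g}_{t\phi}$, since $\sigma^{*(m)}_\phi g$ is supported on sequences of length exactly $A+1$, where $R$ and $F$ pair identically and where \refeq{rhodef1} applies. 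The bound $\|\sigma^{*(m)}_\phi g\|_{\Phi'^{(A+1)}}\le\frac{(A+1)!}{p!}\|\phi\|_{\Phi'}^{A+1-p}\|g\|_{\Phi'}$ then produces the sum $\sum_{p\le A}\binom{A+1}{p}\|\phi\|_{\Phi'}^{A+1-p}\le(1+\|\phi\|_{\Phi'})^{A+1}$, which is where the exponent $A+1$ and the supremum over $t$ in \refeq{Tphicor1} actually originate.
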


Recalling the discussion around \eqref{e:PhiXdef}, we can improve \refeq{Tphicor1}
by taking the infimum over all possible redefinitions of $\phi$
off $X$, with the result that
\begin{align}
    \|F\|_{T_{\phi}'}
    & \leq
    \left(1 + \|\phi\|_{\Phi'(X)}\right)^{A+1}
    \left( \|F\|_{T_{0}'}
    +
    \rho^{(A+1)} \sup_{0\le t\le 1}
    \|F\|_{T_{t\phi}}  \right)
    \quad
    \text{for $F \in \Ncal(X)$}
.
\label{e:TphicorX}
\end{align}

Finally, the following proposition shows that the map $\theta$ of
Definition~\ref{def:theta-new} has a contractive property.  For its
statement, let $\Lambdabold$, $\Lambdabold'$ and the map $z \mapsto
z'$ be as described above Definition~\ref{def:theta-new} and let
$w:A\times \Lambdabold\rightarrow [0,\infty]$ and $w':A\times
\Lambdabold' \rightarrow [0,\infty]$ be weights as specified in
Definition~\ref{def:gnorm-general}. These weights together define a
new weight $w\sqcup w':A\times
(\Lambdabold\sqcup\Lambdabold')\rightarrow [0,\infty]$. Species in
$\Lambdabold$ and species in $\Lambdabold'$ are distinct, and we order
the species in such a way that a species from $\Lambdabold'$ occurs
immediately following its counterpart in $\Lambdabold$.  We denote the
corresponding norm on test functions
$g:(\overrightarrow{\Lambdabold\sqcup\Lambdabold'})^{*} \to\C$ by
$\Phi (w\sqcup w')$.  Also, we define the function $w+w'$ from $A
\times \Lambdabold$ to $\C$ by
\begin{equation}
    (w + w') (a,z) = w(a,z) + w'(a,z').
\end{equation}

\begin{prop}
\label{prop:derivs-of-tau-bis}
For $F \in \Ncal(\Lambdabold)$,
\begin{align}
    \label{e:theta-bd1}
    \|\theta F\|_{T_{\phi \sqcup \xi} (w\sqcup w')}
&
    \le
    \|F\|_{T_{\phi +\xi} (w+w')}.
\end{align}
\end{prop}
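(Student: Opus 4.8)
The plan is to argue directly from the characterisation \refeq{Tdef} of the $T_\phi$ semi-norm as a supremum of pairings over the unit ball of test functions, by pushing each competing test function on the larger index set forward to one on the smaller index set that realises the same pairing value without increasing the norm. Concretely, it suffices to show: for every test function $g$ on $(\overrightarrow{\Lambdabold\sqcup\Lambdabold'})^{*}$ with $\|g\|_{\Phi(w\sqcup w')}\le 1$ there is a test function $h$ on $\vec\Lambdabold^{*}$ with $\|h\|_{\Phi(w+w')}\le 1$ and
\eq
    \langle \theta F, g\rangle_{\phi\sqcup\xi} = \langle F, h\rangle_{\phi+\xi}.
\en
Taking the supremum over $g$ and invoking \refeq{Tdef} once more then gives \refeq{theta-bd1}.

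The first ingredient is a formula for the coefficients of $\theta F$. Writing $F=\sum_y \tfrac1{y!}F_y\psi^y$ as in \refeq{K}, Definition~\ref{def:theta-new} gives $\theta F=\sum_y \tfrac1{y!}F_y(\phi+\xi)\prod_j(\psi_{y_j}+\psi_{y_j'})$. Two observations pin down the mixed derivatives $(\theta F)_{X,Y}(\phi\sqcup\xi)$ of \refeq{Fxyphi}: first, $\partial/\partial\phi_u$ and $\partial/\partial\xi_{u'}$ both act on $F_y(\phi+\xi)$ as $(\partial_uF_y)(\phi+\xi)$; second, expanding the fermionic product and re-collecting monomials, via the anti-derivation rules of Definition~\ref{def:iu}, extracts the fermionic coefficients with reordering signs. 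The conclusion is
\eq
    (\theta F)_Z(\phi\sqcup\xi) = \pm\, F_{\pi(Z)}(\phi+\xi),
\en
where $\pi$ replaces each primed index by its counterpart in $\Lambdabold$ (applied componentwise to $Z$) and the sign is a fermionic reordering sign; the terms for which $\pi(Z)$ has a repeated fermion index vanish on both sides.

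Next I would substitute this into the pairing \refeq{Kgpairdef} and regroup the sum over $Z\in(\overrightarrow{\Lambdabold\sqcup\Lambdabold'})^{*}$ according to $z=\pi(Z)$, which exhibits $\langle\theta F,g\rangle_{\phi\sqcup\xi}$ in the form $\langle F,h\rangle_{\phi+\xi}$ for a suitable $h_z$ built from the values $g_Z$ with $\pi(Z)=z$. Here I would first use Proposition~\ref{prop:pairingS} to replace $g$ by its symmetrisation $Sg$, which does not change the pairing with $\theta F$ and, by the symmetry of the weight $w\sqcup w'$ under joint permutation of the derivative and point labels, does not increase $\|g\|_{\Phi(w\sqcup w')}$. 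With $g$ symmetric, the ratio of the factorials $1/Z!$ and $1/z!$ in the two pairings becomes, by counting how the components of each species block of $z$ are split between an unprimed and a primed slot, precisely the binomial coefficients that combine to give
\eq
    h_z = \sum_{S\subseteq\{1,\dots,r\}} \pm\, g_{Z_S} \qquad (z\in\vec\Lambdabold^{(r)}),
\en
where $Z_S$ denotes $z$ with the components indexed by $S$ primed (well-defined, using the symmetry of $Sg$, even when $Z_S$ is not species-ordered) and the sign is the fermionic reordering sign.

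It then remains to bound $\|h\|_{\Phi(w+w')}$. Since each torus copy indexed by $\Lambdabold'$ is identified with the corresponding copy in $\Lambdabold$ compatibly with the unit shifts defining the difference operators $\nabla^e$, applying $\nabla^\alpha$ to $h_z$ for $(\alpha,z)\in\Acal^{(r)}$ distributes over the sum and replaces each $g_{Z_S}$ by $\nabla^\alpha g_{Z_S}$ in the matching variables; bounding $|\nabla^\alpha g_{Z_S}|\le \prod_{k\notin S}w_{\alpha_k,z_k}\prod_{k\in S}w'_{\alpha_k,z_k'}$ from $\|g\|_{\Phi(w\sqcup w')}\le 1$, the elementary identity
\eq
    \sum_{S\subseteq\{1,\dots,r\}} \prod_{k\notin S}w_{\alpha_k,z_k}\prod_{k\in S}w'_{\alpha_k,z_k'}
    = \prod_{k=1}^{r}\big(w_{\alpha_k,z_k}+w'_{\alpha_k,z_k'}\big) = (w+w')_{\alpha,z}
\en
yields $|\nabla^\alpha h_z|\le (w+w')_{\alpha,z}$, i.e.\ $\|h\|_{\Phi(w+w')}\le 1$. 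The vanishing of test functions on sequences with more than $p_\Ncal$ boson components is preserved by $\pi$, and an external component of $\Lambdabold$ admits only an unprimed slot and contributes only the $w$-term of $w+w'$, so such components are handled automatically. The step I expect to be the main obstacle is fixing the combinatorial coefficients and fermionic signs in the definition of $h$ so that they simultaneously reconcile the factorials $1/Z!$ and $1/z!$ in the pairing identity and organise the $2^{r}$ priming patterns $Z_S$; once that bookkeeping is settled, the norm estimate is the one-line identity just displayed.
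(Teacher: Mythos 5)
Your proposal is correct and takes essentially the same route as the paper: the paper's proof introduces the adjoint map $\theta^{*}$ with $(\theta^{*}g)_{z}=\sum_{v:f(v)=z}\frac{z!}{v!}g_{v}$ (Lemma~\ref{lem:thetaadj}, which gives exactly your pairing identity $\langle\theta F,g\rangle_{\phi\sqcup\xi}=\langle F,\theta^{*}g\rangle_{\phi+\xi}$ after computing $(\theta F)_{v}=F_{f(v)}(\phi+\xi)$) and then shows $\theta^{*}:\Phi(w\sqcup w')\to\Phi(w+w')$ is a contraction by the same binomial identity you display (Lemma~\ref{lem:thetastar}). Your parametrisation of the pushforward by priming patterns $S\subseteq\{1,\dots,r\}$ applied to a symmetrised $g$ is an equivalent bookkeeping of the paper's sum over species-segregated preimages $v$ weighted by $z!/v!$, so the combinatorial step you flagged as the main obstacle is resolved exactly as you anticipated.
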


Proofs of Propositions~\ref{prop:T0K}--\ref{prop:derivs-of-tau-bis}
are given in Sections~\ref{sec-Tphiestimates}--\ref{sec:compsp0}  below.

\subsection{Field regulators and associated norms}
\label{sec:fran}

\begin{defn}
\label{def:blocks}
(a)
The set $\Lambda = \Zd/(mR)$ is paved in a natural way by disjoint
cubes of side $R$.  We call these cubes \emph{blocks} and denote the
set of blocks by $\Bcal$.
\\
(b)
A union of blocks is called a \emph{polymer},
and the set of polymers is denoted ${\cal P}$.
The size $|X|_R$ of $X\in {\cal P}$ is the number
of blocks in $X$.
\\
(c)
A polymer $X$ is
\emph{connected} if for any two points $x_{a}, x_{b}\in X$
there exists a path $( x_0,\dotsc ,x_n)$ in $X$
with $\|x_{{i+1}}-x_{i}\|_\infty =1$,
$x_{0} = x_{a}$ and $x_{n}=x_{b}$.
\\
(d)
A polymer
$X\in\Pcal$ is a
\emph{small set} if $X$ is connected and
$|X|_R \le 2^{d}$.
Let $\Scal\subset\Pcal$ be the set of all small sets.
\\
(e)
The \emph{small set neighbourhood} of $X \subset \Lambda $  is
the subset $X^\Box$ of $\Lambda$ given by
\begin{equation}
\label{e:9ssn}
    X^{\Box}
=
    \bigcup_{Y\in \Scal :X\cap Y \not =\varnothing } Y.
\end{equation}
(Other papers have used the notation $X^*$ in place of $X^\Box$,
but we use $X^\Box$ to avoid confusion with our notation
for sequence spaces.)
\end{defn}

Note that, by definition, $X \subset X^\Box$ and $(X\cup
Y)^\Box=X^\Box \cup Y^\Box$.  The following definitions involve a
positive parameter $\ell$ whose value will be chosen to satisfy the
(related) hypotheses of Propositions~\ref{prop:EK}--\ref{prop:EG2}
below.  For concreteness, in these definitions we consider only the
case where $\Lambdabold_b = \Lambda \sqcup \bar\Lambda$ and the boson
field is the complex field of Section~\ref{sec:cbf}.  For the
$n$-component $|\varphi|^4$ model studied in \cite{BBS-phi4-log}, the
same definitions apply with $\phi$ replaced by $\varphi\in
(\R^n)^\Lambda$.

\begin{defn}
\label{def:ffregulator} Given $X  \subset \Lambda$ and $\phi \in
\C^{\Lambda}$, the \emph{fluctuation-field regulator} is given by
\begin{align}
\label{e:GPhidef}
    G (X,\phi)
    =
    \prod_{x \in X}
    \exp  \left(|B_{x}|^{-1}\|\phi\|_{\Phi (B_{x}^\Box,\ell )}^2 \right)
    ,
\end{align}
where $B_{x}$ is the unique block that contains $x$, and where the
norm on the right-hand side is the $\Phi(\h)$ norm
of Example~\ref{ex:h} with $\h=\ell>0$ and
localised to the small set neighbourhood $B^\Box$ as in
\refeq{PhiXdef}.
We define a norm on $\Ncal(X^\Box)$ by
\begin{equation}\label{e:Gnormdef}
   \| F(X)\|_{G ,\ell }
    =
    \sup_{\phi \in \C^\Lambda}
    \frac{\|F(X)\|_{T_{\phi }(\ell )}}{G (X,\phi)}
    \quad \text{for $F(X) \in \Ncal (X^{\Box})$}.
\end{equation}
Although the norm depends on $X$, we choose not to add a subscript $X$
to the norm to make this dependence explicit.
\end{defn}

For $X \in \Pcal$ the formula \eqref{e:GPhidef} simplifies to
\begin{align}
\label{e:GPhidef2}
    G (X,\phi)
    =
    \prod_{B \in \Bcal (X)} \exp  \|\phi\|_{\Phi (B^\Box,\ell )}^2
,
\end{align}
and the more complicated formula in the definition is
a way to extend this simpler formula to all subsets $X \subset \Lambda$.
A similar remark applies to the next definition.

Suppose that $R$ and $m$ are chosen in such a way that the diameter of
$B^\Box$ is less than $mR$ (e.g., if $m$ is sufficiently large).  We
can then identify $B^\Box$ with a subset of $\Zd$ and use this
identification to define polynomial functions from $B^\Box$ to $\C$.
The \emph{dimension} of such a polynomial $f$, of a single variable,
is defined to be $\frac{d-2}{2}$ plus the degree of $f$.  Let
$d_{\Phipoltil}$ be a fixed non-negative integer.  We define
\begin{equation}
    \Phipoltil(B^\Box)
    =
    \left\{ f \in \C^{\Lambda} \mid
    \text{$f$ restricted to $B^\Box$ is a polynomial of dimension
    at most $d_{\Phipoltil}$}\right\}.
\end{equation}
Then, for $\phi \in \C^{\Lambda}$,
we define the semi-norm
\begin{equation}
\label{e:Phitilnorm}
    \| \phi \|_{\tilde{\Phi}  (B^\Box)}
=
    \inf \{ \| \phi -f\|_{\Phi}  : f \in \Phipoltil (B^\Box)\}.
\end{equation}

\begin{defn}
\label{def:regulator}
Given $X  \subset \Lambda$ and $\phi \in \C^{\Lambda}$,
the \emph{large-field regulator} is given  by
\begin{align}
\label{e:9Gdef}
    \tilde G  (X,\phi)
    =
    \prod_{x \in X}
    \exp \left(\frac 12 |B_{x}|^{-1}\|\phi\|_{\tilde\Phi (B_{x}^\Box,\ell)}^2
    \right)
    .
\end{align}
The factor $\frac 12$, which does not occur in \refeq{GPhidef}, has
been inserted in \refeq{9Gdef} for later convenience.  We define a
norm on $\Ncal(X^{\Box})$ by
\begin{equation}
\label{e:9Nnormdef}
    \|F(X)\|_{\tilde{G},\h}
    =
    \sup_{\phi \in \C^\Lambda}
    \frac{\|F(X)\|_{T_{\phi} (\h)}}
    {\tilde{G}(X,\phi)}
    \quad \text{for $F(X) \in \Ncal (X^{\Box})$},
\end{equation}
where we have made explicit in the notation the fact that the norm on
the left-hand side depends on a parameter $\h$ which may be chosen
to be different from the parameter $\ell$ used for the regulators.
The dependence of the norm on $\ell$ is left implicit.
\end{defn}

It is immediate from the definitions that $G (X, \phi)$
and $\tilde{G} (X, \phi)$ are
increasing in $X$, and that
for all disjoint $X,Y$ and for all $\phi \in \C^\Lambda$,
\eqalign
\label{e:GXYfluct}
    G (X \cup Y, \phi)
    &= G (X, \phi ) G  (Y, \phi ),
\\
\label{e:GXY}
    \tilde{G} (X \cup Y, \phi)
    &= \tilde{G} (X, \phi ) \tilde{G} (Y, \phi ).
\enalign
In addition, for $A \ge 0$ there is a $c_A \ge 1$ such that
for all $t \in [0,1]$,
\begin{align}
    &1=G(X,0)\le G(X,\phi), \quad
    \tilde G (X,t\phi)
    \leq   G^{1/2} (X,\phi),
    \nnb &
\label{e:KKK6}
    \left(1+ \|\phi\|_{\Phi(\ell, X^{\Box}) }\right)^{A+1}
\le
    c_A   G ^{1/2}(X,\phi).
\end{align}
The first two inequalities are valid for $X \subset \Lambda$.  The
third holds for $X \in \Pcal$, and  follows from
\eqref{e:GPhidef2}.  The following proposition extends the product
property to the $G$ and $\tilde G$ norms.

\begin{prop}
If
$X,Y$ are \emph{disjoint} and if $F(X) \in
\Ncal (X^{\Box}) ,\ i=1,2$ and $K(Y) \in \Ncal(Y^\Box)$,
then $F(X)K(Y) \in \Ncal ((X\cup Y)^{\Box})$, and for either
of the $G$ or $\tilde G$ norms \refeq{Gnormdef} and \refeq{9Nnormdef},
\begin{equation}
\|F(X)K(Y)\|  \leq
\|F(X)\| \| K(Y)\| .
\end{equation}
\end{prop}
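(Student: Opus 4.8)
The plan is to reduce the proposition to three ingredients already in hand: the multiplicativity of the $T_\phi$ semi-norm (Proposition~\ref{prop:prod}), the factorisation of the regulators on disjoint sets (equations \refeq{GXYfluct} and \refeq{GXY}), and the elementary inequality $\sup(fg)\le(\sup f)(\sup g)$ for non-negative functions. Before that, one must check the membership statement $F(X)K(Y)\in\Ncal((X\cup Y)^\Box)$, using that $(X\cup Y)^\Box=X^\Box\cup Y^\Box$. To do this I would expand the product in $\Ncal$: each fermionic coefficient $(F(X)K(Y))_{w}(\phi)$ is a finite signed sum of terms $F(X)_{y}(\phi)\,K(Y)_{y'}(\phi)$ in which $y$ ranges over subsequences of $w$ and $y'$ over the complementary subsequence, and after applying the Leibniz rule to the boson derivatives in \refeq{Fxyphi} the same is true of the full coefficients $(F(X)K(Y))_{x,w}(\phi)$, with the components of $x$ also split between the two factors. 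If some component $u$ of $x$ or $w$ lies outside $X^\Box\cup Y^\Box$, then in every term of this sum $u$ is assigned either to the $F$-factor, where $u\notin X^\Box$ forces that factor to vanish since $F(X)\in\Ncal(X^\Box)$, or to the $K$-factor, where $u\notin Y^\Box$ forces that factor to vanish; hence the coefficient is zero. (Note this part uses neither disjointness nor positivity of the regulators.)

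For the norm bound, fix $\phi\in\C^\Lambda$ and apply Proposition~\ref{prop:prod} in $\Ncal$ with the semi-norm $T_\phi(\ell)$ (respectively $T_\phi(\h)$) to get $\|F(X)K(Y)\|_{T_\phi}\le\|F(X)\|_{T_\phi}\|K(Y)\|_{T_\phi}$. Since $X$ and $Y$ are disjoint, \refeq{GXYfluct} gives $G(X\cup Y,\phi)=G(X,\phi)G(Y,\phi)$, and likewise \refeq{GXY} gives $\tilde G(X\cup Y,\phi)=\tilde G(X,\phi)\tilde G(Y,\phi)$. Both regulators are bounded below by $1$ (for $G$ this is recorded in \refeq{KKK6}, and for $\tilde G$ it is immediate from \refeq{9Gdef}), so the divisions below are legitimate, and for each $\phi$,
\[
  \frac{\|F(X)K(Y)\|_{T_\phi}}{G(X\cup Y,\phi)}
  \le
  \frac{\|F(X)\|_{T_\phi}}{G(X,\phi)}\cdot\frac{\|K(Y)\|_{T_\phi}}{G(Y,\phi)}
  \le
  \|F(X)\|_{G,\ell}\,\|K(Y)\|_{G,\ell},
\]
where the last step uses the definition \refeq{Gnormdef}. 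Taking the supremum over $\phi\in\C^\Lambda$ of the left-hand side then yields $\|F(X)K(Y)\|_{G,\ell}\le\|F(X)\|_{G,\ell}\|K(Y)\|_{G,\ell}$, and the identical argument with \refeq{9Nnormdef} and \refeq{GXY} handles the $\tilde G$ norm.

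I do not anticipate a genuine obstacle here: all the substance is contained in Proposition~\ref{prop:prod} and in the multiplicativity of the regulators, and the present proposition is essentially their formal combination. The only point requiring a little care is the combinatorial bookkeeping for the coefficients of a product in $\Ncal$ needed to justify the membership claim; once that is in place, the norm estimate is a one-line computation as above.
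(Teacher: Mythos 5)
Your proposal is correct and follows the same route as the paper, which proves the norm bound exactly by combining the product property of Proposition~\ref{prop:prod} with the factorisations \refeq{GXYfluct}--\refeq{GXY}. Your additional verification of the membership claim $F(X)K(Y)\in\Ncal((X\cup Y)^\Box)$ is a sound elaboration of a point the paper leaves implicit.
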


\begin{proof}
This follows immediately from the product property
Proposition~\ref{prop:prod} for the
$T_\phi$ semi-norm, together with
\refeq{GXYfluct}--\refeq{GXY}.
\end{proof}

By definition,
\begin{equation}
\label{e:T0G}
    \|F\|_{T_0(\ell)} \leq \|F\|_{G,\ell}.
\end{equation}
The following proposition
shows that this inequality can be
partially reversed, at the expense of a term involving a
multiple of $\|F\|_{\tilde{G}}$.  In our application, the ratio
$\ell/\h$ appearing in this term will be small.

\begin{prop}
\label{prop:KKK}
Let $X \in \Pcal$ and $F \in \Ncal(X)$.
For any positive integer $A<p_\Ncal$, there is a constant $c_A$ such that
\begin{equation}
\label{e:KKK1}
    \|F \|_{G ,\ell}
\le
    c_A
    \left(
    \|F \|_{T_{0}(\ell)} +
    \left( \frac{\ell}{\h} \right)^{A+1}
    \|F \|_{\tilde{G} ,\h }
    \right).
\end{equation}
\end{prop}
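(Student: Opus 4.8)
The plan is to derive \refeq{KKK1} from Proposition~\ref{prop:Tphi-bound}, used in its localised form \refeq{TphicorX}, applied with the primed test-function space taken to be $\Phi' = \Phi(\ell)$ and the unprimed space taken to be $\Phi = \Phi(\h)$, together with the elementary bounds collected in \refeq{KKK6}. Since by definition $\|F\|_{G,\ell} = \sup_{\phi\in\C^\Lambda}\|F\|_{T_\phi(\ell)}/G(X,\phi)$, it suffices to bound $\|F\|_{T_\phi(\ell)}$, for each fixed $\phi\in\C^\Lambda$, by a constant multiple of $G(X,\phi)\big(\|F\|_{T_0(\ell)} + (\ell/\h)^{A+1}\|F\|_{\tilde G,\h}\big)$, and then divide and take the supremum.

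First I would evaluate the quantity $\rho^{(n)}$ of \refeq{rhodef1} for this pair of spaces. A test function $g$ of length $r$ in the unit ball of $\Phi(\ell)^{(r)}$ obeys $|\nabla^\alpha g_z| \le \ell^{r}R^{-\alpha}$ for $|\alpha|\le p_\Phi$, hence has $\Phi(\h)^{(r)}$-norm at most $(\ell/\h)^{r}$; since $\ell\le\h$ in our setting, this gives $\rho^{(A+1)} = 2(\ell/\h)^{A+1}$. Because $F\in\Ncal(X)$ and $A$ is a non-negative integer with $A<p_\Ncal$, \refeq{TphicorX} then yields
\[
    \|F\|_{T_\phi(\ell)}
    \le
    \big(1+\|\phi\|_{\Phi'(X)}\big)^{A+1}
    \Big(\|F\|_{T_0(\ell)} + 2(\ell/\h)^{A+1}\sup_{0\le t\le 1}\|F\|_{T_{t\phi}(\h)}\Big).
\]

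It then remains to remove the $\phi$-dependence on the right using \refeq{KKK6}. For the prefactor, $X\in\Pcal$ and $X\subset X^\Box$ give $\|\phi\|_{\Phi'(X)}\le\|\phi\|_{\Phi(\ell,X^\Box)}$ (in \refeq{PhiXdef} the subtracted functions admissible for $X^\Box$ form a subset of those admissible for $X$), so the third inequality of \refeq{KKK6} gives $(1+\|\phi\|_{\Phi'(X)})^{A+1}\le c_A G^{1/2}(X,\phi)$. For the supremum over $t$, the definition of the $\tilde G$ norm gives $\|F\|_{T_{t\phi}(\h)}\le\|F\|_{\tilde G,\h}\,\tilde G(X,t\phi)$, and the second inequality of \refeq{KKK6} gives $\tilde G(X,t\phi)\le G^{1/2}(X,\phi)$ uniformly in $t\in[0,1]$. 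Substituting these two bounds, and then using $G(X,\phi)\ge 1$ to replace the product $G^{1/2}(X,\phi)\cdot G^{1/2}(X,\phi)$ by the single factor $G(X,\phi)$ (and absorbing the constant $2$ into $c_A$), I obtain $\|F\|_{T_\phi(\ell)}\le c_A\,G(X,\phi)\big(\|F\|_{T_0(\ell)} + (\ell/\h)^{A+1}\|F\|_{\tilde G,\h}\big)$. Dividing by $G(X,\phi)$ and taking the supremum over $\phi\in\C^\Lambda$ completes the proof.

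Most of this is routine bookkeeping, so there is no single hard obstacle; the one point requiring genuine care is that one must invoke the \emph{localised} bound \refeq{TphicorX} rather than the global \refeq{Tphicor1}, since $\|\phi\|_{\Phi(\ell)}$ is not controlled by $G(X,\phi)$ (which only constrains $\phi$ on the small-set neighbourhoods of the blocks of $X$) whereas $\|\phi\|_{\Phi'(X)}$ is, via \refeq{KKK6}. A secondary point is keeping straight which weight, $\ell$ or $\h$, enters each $T_\phi$ semi-norm and each regulator, and checking that $\rho^{(A+1)}$ equals $(\ell/\h)^{A+1}$ up to the harmless constant $2$ under the standing assumption $\ell\le\h$.
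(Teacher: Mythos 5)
Your proof is correct and follows essentially the same route as the paper: apply Proposition~\ref{prop:Tphi-bound} with $\Phi'=\Phi(\ell)$ and $\Phi=\Phi(\h)$, bound $\|F\|_{T_{t\phi}(\h)}$ via the $\tilde G$ norm and \refeq{KKK6}, then divide by $G(X,\phi)$ and take the supremum. If anything you are slightly more explicit than the paper on the one delicate point (invoking the localised bound \refeq{TphicorX} so that the prefactor involves $\|\phi\|_{\Phi(\ell,X^\Box)}$ rather than the global norm), and the factor $2$ in $\rho^{(A+1)}$ is harmlessly absorbed into $c_A$.
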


\begin{proof}
We apply Proposition~\ref{prop:Tphi-bound},
with $T_\phi'=T_\phi(\ell)$ and $T_\phi=T_\phi(\h)$.
Then $\rho^{(n)} = (\ell/\h)^n$ by definition.
It follows from \refeq{9Nnormdef} and \refeq{KKK6} that
\begin{equation}
\label{e:KGGG}
    \|F\|_{T_{t\phi}}
    \leq \|F\|_{\tilde{G} ,\h } \, \tilde G (X,t\phi)
    \leq \|F\|_{\tilde{G} ,\h } \, G^{1/2} (X,\phi).
\end{equation}
We use this in the last term on the right-hand side of \refeq{Tphicor1},
to obtain
\eqalign
    \|F \|_{T_{\phi}(\ell)}
    & \leq
    \left(1+ \|\phi\|_{\Phi (\ell)}\right)^{A+1}
    \left(
    \|F \|_{T_{0}(\ell)}
    +     \left(\frac{\ell}{\h} \right)^{A+1}
    \|F\|_{\tilde G,\h}  G^{1/2} (X,\phi)
    \right)
    .
\label{e:KKK4.5}
\enalign
We then apply \refeq{KKK6},
divide by $G(X,\phi)$,
and take the supremum over $\phi$ to obtain \refeq{KKK1}.
\end{proof}

\subsection{Norm estimates for Gaussian integration}

The following proposition shows that the Laplacian, and in view of
\refeq{ELap} also the
Gaussian integral, are bounded operators on a space of polynomials
in $\Ncal$.
In its statement, we regard $\Cbf$ as a test function in $\Phi$,
by extending the definition above \refeq{LapC} to
$\Cbf_z=0$ for $z \in \vec\Lambdabold^*$ unless the length of $z$ is
$2$ and both components are either in $\Lambdabold_b$ or in
$\Lambdabold_f$, in which case it is given respectively by $\Cbf_{b;z}$
or $\Cbf_{f;z}$.  Then it makes sense to take the norm $\|\Cbf\|_\Phi$.

\begin{prop}
\label{prop:Etau-bound}
If $F \in \Ncal$ is a polynomial of degree at most $A$, with $A \le p_\Ncal$,
then
\begin{equation}
\label{e:DCK}
    \|\Delta_{\Cbf} F\|_{T_{\phi}}
    \le
    A^2 \|\Cbf\|_{\Phi}\,\|F\|_{T_{\phi}}
\end{equation}
and
\begin{equation}
\label{e:eDC}
    \|e^{t\Delta_{\Cbf}} F\|
    \le
    e^{|t| A^{2}\|\Cbf\|_{\Phi}}\,\|F\|_{T_{\phi}}.
\end{equation}
\end{prop}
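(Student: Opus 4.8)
The plan is to prove \eqref{e:DCK} first and then obtain \eqref{e:eDC} from it by iteration. For \eqref{e:DCK}, the idea is to transfer the Laplacian off $F$ and onto the test function. Since $\Delta_{\Cbf}$ is a second-order constant-coefficient differential operator, its effect on the coefficient functions $F_{x,y}(\phi)$ of $F$ is to contract against $\Cbf$: from \eqref{e:LapC}, using that partial derivatives commute in the bosonic part and the Grassmann analogue in the fermionic part (which introduces signs), one checks the identity
\[
    (\Delta_{\Cbf}F)_{x,y}(\phi)
    =
    \sum_{u,v\in\Lambdabold_b}\Cbf_{b;u,v}\,F_{x\concat(u,v),y}(\phi)
    +
    \sum_{u,v\in\Lambdabold_f}\bigl(\pm\Cbf_{f;u,v}\bigr)\,F_{x,y\concat(u,v)}(\phi).
\]
Substituting this into the pairing \eqref{e:Kgpairdef} and relabelling the extended sequences $x\concat(u,v)$ (respectively $y\concat(u,v)$) as new summation variables, one rewrites $\langle\Delta_{\Cbf}F,g\rangle_\phi = \langle F,h\rangle_\phi$, where $h=h(g,\Cbf)$ is the $\phi$-independent test function whose value $h_z$ is a weighted sum, over the ways to split $z$ (after sorting by species) into a length-two subsequence $z''$ of a single ($b$ or $f$) type carrying $\Cbf$ and the complementary subsequence $z'$ carrying $g$, of the terms $(z!/z'!)\,(\pm\Cbf_{z''})\,g_{z'}$; the factor $z!/z'!$ records the change of symmetry factor produced by removing the two $\Cbf$-slots. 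The Grassmann signs inherited above are needed here only to make this an exact identity.

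Because $F$ is a polynomial of degree $\le A\le p_\Ncal$, only sequences $z=(x,y)$ with $p(x)+q(y)\le A$ contribute to $\langle F,h\rangle_\phi$, so we may replace $h$ by its truncation to such $z$; this truncation is a legitimate test function, since it then has at most $A\le p_\Ncal$ boson components, and on its support the factor $z!/z'!$ telescopes to a product of at most two species-lengths and hence is bounded by $p(x)(p(x)-1)+q(y)(q(y)-1)\le (p(x)+q(y))(p(x)+q(y)-1)\le A(A-1)\le A^2$. To estimate $\|h\|_\Phi$, observe that in each term of the formula for $h_z$ the factors $\Cbf_{z''}$ and $g_{z'}$ depend on disjoint sets of arguments, so a finite-difference operator $\nabla^\alpha$ acts on them factor by factor; and since the weight is multiplicative over arguments, $w_{\alpha,z}=\prod_k w_{\alpha_k,z_k}$, the product of the weight bounding the $\Cbf$-factor and that bounding the $g$-factor equals $w_{\alpha,z}$. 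Combining this with the definitions of $\|\Cbf\|_\Phi$ and $\|g\|_\Phi$ and the combinatorial bound gives $|\nabla^\alpha h_z|\le A^2\|\Cbf\|_\Phi\|g\|_\Phi\,w_{\alpha,z}$, i.e.\ $\|h\|_\Phi\le A^2\|\Cbf\|_\Phi\|g\|_\Phi$. Hence, for every $g$ in the unit ball $B(\Phi)$ of $\Phi$, since $h/\|h\|_\Phi\in B(\Phi)$ we get $|\langle\Delta_{\Cbf}F,g\rangle_\phi|=|\langle F,h\rangle_\phi|\le\|h\|_\Phi\,\|F\|_{T_\phi}\le A^2\|\Cbf\|_\Phi\,\|F\|_{T_\phi}$, and taking the supremum over $g$ proves \eqref{e:DCK}.

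Finally, \eqref{e:eDC} follows by iteration: $\Delta_{\Cbf}$ lowers the degree of a polynomial by at least two (or annihilates it), so $\Delta_{\Cbf}^nF$ is again a polynomial of degree $\le A\le p_\Ncal$ for every $n$ and vanishes once $2n>A$, whence $e^{t\Delta_{\Cbf}}F=\sum_{n\ge0}\tfrac{t^n}{n!}\Delta_{\Cbf}^nF$ is a finite sum; applying \eqref{e:DCK} repeatedly yields $\|\Delta_{\Cbf}^nF\|_{T_\phi}\le(A^2\|\Cbf\|_\Phi)^n\|F\|_{T_\phi}$, and summing with the triangle inequality gives $\|e^{t\Delta_{\Cbf}}F\|_{T_\phi}\le e^{|t|A^2\|\Cbf\|_\Phi}\|F\|_{T_\phi}$. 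The main obstacle is the bookkeeping in the relabelling that defines $h$ when several field species are present: the map $(x,u,v)\mapsto x\concat(u,v)$ is injective only after sorting arguments by species, and one must verify that the resulting ratio $z!/z'!$ of symmetry factors remains controlled by $A^2$ (it telescopes to a product of at most two species-lengths, each $\le A$). The Grassmann signs are a minor nuisance; they affect only the precise form of $h$, not $\|h\|_\Phi$, which depends solely on the magnitudes $|\nabla^\alpha h_z|$.
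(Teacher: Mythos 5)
Your proposal is correct and follows essentially the same route as the paper: the paper proves \eqref{e:DCK} by moving the Laplacian onto the test function via the pairing, defining $(\Cbf^{*}g)_{v} = \1_{u\concat z=v} \frac{v!}{u!z!} \Cbf_{u} g_{z}$ (truncated to sequences of length at most $A$), bounding the binomial factor by $\binom{A}{2}$ and invoking the multiplicativity of the $\Phi$ norm to get $\|\Cbf^{*}g\|_{\Phi}\le\binom{A}{2}\|\Cbf\|_{\Phi}\|g\|_{\Phi}$, exactly as in your construction of $h$; it then deduces \eqref{e:eDC} from the exponential series, as you do. Your additional remark that $\Delta_{\Cbf}$ lowers polynomial degree, so that \eqref{e:DCK} can legitimately be iterated, is a point the paper leaves implicit.
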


Note that \refeq{eDC} follows from
$\|e^{t\Delta_{\Cbf}}\| \leq \sum_{n=0}^\infty \frac{1}{n!}
\|t\Delta_{\Cbf}\|^n$
together with
\refeq{DCK}, so it suffices to prove \refeq{DCK}.

In the next proposition,
we restrict to the conjugate fermion field setting of
Section~\ref{sec:cff}, with fields $(\bar\psi_x, \psi_x)_{x \in \Lambda}$.
We extend $C_f$ to a test function in $\Phi(\Lambda)$ by
setting it equal to zero when evaluated on any sequence $z$ except
those where $z$ has length $2$ and both components are in
$\Lambda$.
Then the norm $\|C_f\|_{\Phi(w')}$ makes sense.

\begin{prop}
\label{prop:EK}
In the conjugate fermion field setting of
Section~\ref{sec:cff},
suppose that  the covariance $C_f$ obeys
$\|C_f\|_{\Phi(w')}\le 1$.
If $F \in \Ncal (\Lambdabold \sqcup \Lambdabold')$ then
\eq
\label{e:EKz}
    \| \Ex_{\Cbf} F  \|_{T_\phi(w)}  \le
    \Ex_{\Cbf_b}
    \|F  \|_{\Ttimes_{\phi \sqcup \xi}(w\sqcup w')}
    .
\en
Also, if $F \in \Ncal (\Lambdabold )$ then
\eq
\label{e:EK}
    \| \Ex_{\Cbf} \theta F  \|_{T_\phi(w)}  \le
    \Ex_{\Cbf_b}
    \|\theta F  \|_{\Ttimes_{\phi \sqcup \xi}(w\sqcup w')}
     \le
    \Ex_{\Cbf_b}   \|F  \|_{T_{\phi+\xi}(w+w')} .
\en
\end{prop}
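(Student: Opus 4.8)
The plan is to reduce \eqref{e:EK} to \eqref{e:EKz} and then to prove \eqref{e:EKz} by treating the two factors of $\Ex_{\Cbf}=\Ex_{\Cbf_{b}}\Ex_{\Cbf_{f}}$ (see \eqref{e:Ecomb}) separately. For the reduction: the first inequality of \eqref{e:EK} is exactly \eqref{e:EKz} applied to $\theta F\in\Ncal(\Lambdabold\sqcup\Lambdabold')$ in place of $F$, and the second inequality is obtained by applying the contraction bound \eqref{e:theta-bd1} of Proposition~\ref{prop:derivs-of-tau-bis} for each fixed $\xi$ and then using that $\Ex_{\Cbf_{b}}$, being the expectation of a probability measure, is a positive (hence monotone) linear functional. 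So it remains to prove \eqref{e:EKz}.

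\emph{Bosonic factor.} I would first show that $\|\Ex_{\Cbf_{b}}H\|_{T_{\phi}(w)}\le \Ex_{\Cbf_{b}}\|H\|_{T_{\phi\sqcup\xi}(w\sqcup w')}$ for $H\in\Ncal(\Lambdabold\sqcup\Lambdabold'_{b})$. Since $\Ex_{\Cbf_{b}}H\in\Ncal(\Lambdabold)$, the test functions $g$ in the unit ball defining $\|\cdot\|_{T_{\phi}(w)}$ are supported on sequences with no primed components, and the derivatives appearing in the pairing \eqref{e:Kgpairdef} of $H$ with such a $g$ act only on the boson fields indexed by $\Lambdabold_{b}$. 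These commute with the Lebesgue integration over the $\Lambdabold'_{b}$-field $\xi$, so $\langle \Ex_{\Cbf_{b}}H, g\rangle_{\phi}=\Ex_{\Cbf_{b}}\langle H, g\rangle_{\phi\sqcup\xi}$ (the interchange of the sum over $z$ with the Gaussian integral being justified by the standing hypothesis that both sides of \eqref{e:EKz} are well defined). Extending $g$ by zero to sequences containing primed components places it in the unit ball of $\Phi(w\sqcup w')$, so $|\langle \Ex_{\Cbf_{b}}H, g\rangle_{\phi}|\le \Ex_{\Cbf_{b}}|\langle H, g\rangle_{\phi\sqcup\xi}|\le \Ex_{\Cbf_{b}}\|H\|_{T_{\phi\sqcup\xi}(w\sqcup w')}$, and taking the supremum over $g$ gives the claim.

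\emph{Fermionic factor.} Next I would show that $\|\Ex_{\Cbf_{f}}G\|_{T_{\phi\sqcup\xi}(w\sqcup w')}\le \|G\|_{T_{\phi\sqcup\xi}(w\sqcup w')}$ for $G\in\Ncal(\Lambdabold\sqcup\Lambdabold')$ and every field $\phi\sqcup\xi$; this is where the hypothesis $\|C_{f}\|_{\Phi(w')}\le 1$ is used. Decompose $G=\sum_{y'}\frac{1}{y'!}G_{y'}\psi^{y'}$ with $y'$ ranging over sequences in $\Lambdabold'_{f}$ and $G_{y'}\in\Ncal(\Lambdabold\sqcup\Lambdabold'_{b})$, so that $\Ex_{\Cbf_{f}}G=\sum_{y'}\frac{1}{y'!}G_{y'}\,J_{\psi^{y'}}$ with $J_{\psi^{y'}}=\Ex_{\Cbf_{f}}\psi^{y'}$ as in \eqref{e:JF}. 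A direct computation from Definition~\ref{def:Tphi-norm} and linearity then yields, for any test function $g$ on $(\overrightarrow{\Lambdabold\sqcup\Lambdabold'})^{*}$, the identity $\langle \Ex_{\Cbf_{f}}G, g\rangle_{\phi\sqcup\xi}=\langle G, \tilde g\rangle_{\phi\sqcup\xi}$, where $\tilde g$ is the product test function of \eqref{e:plusnormass-2} obtained by multiplying $g$ (in the non-primed-fermion directions) by the test function $J$ on the primed-fermion sequences with components $J_{y'}=J_{\psi^{y'}}$. By \eqref{e:plusnormass-2}, $\|\tilde g\|_{\Phi(w\sqcup w')}\le\|g\|_{\Phi(w\sqcup w')}\,\|J\|_{\Phi(w')}$, so it remains to check $\|J\|_{\Phi(w')}\le 1$: the length-$0$ component of $J$ is the constant $1$; the length-$2$ component is $C_{f}$, of $\Phi^{(2)}(w')$-norm at most $\|C_{f}\|_{\Phi(w')}\le 1$; and the length-$2p$ component is the determinant in \eqref{e:JF}, whose $\Phi^{(2p)}(w')$-norm is bounded by $\|C_{f}\|_{\Phi(w')}^{p}\le 1$ by the fermionic (Gram) determinant estimate and multiplicativity of the weights. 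Hence $|\langle \Ex_{\Cbf_{f}}G, g\rangle_{\phi\sqcup\xi}|\le\|G\|_{T_{\phi\sqcup\xi}(w\sqcup w')}\,\|g\|_{\Phi(w\sqcup w')}$, and a supremum over $g$ finishes this step.

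Combining the two factors with $H=\Ex_{\Cbf_{f}}F$ gives $\|\Ex_{\Cbf}F\|_{T_{\phi}(w)}=\|\Ex_{\Cbf_{b}}(\Ex_{\Cbf_{f}}F)\|_{T_{\phi}(w)}\le \Ex_{\Cbf_{b}}\|\Ex_{\Cbf_{f}}F\|_{T_{\phi\sqcup\xi}(w\sqcup w')}\le \Ex_{\Cbf_{b}}\|F\|_{T_{\phi\sqcup\xi}(w\sqcup w')}$, which is \eqref{e:EKz}. I expect the main obstacle to be the fermionic estimate $\|J\|_{\Phi(w')}\le 1$: the naive expansion of the determinant in \eqref{e:JF} produces $p!$ terms, which would destroy the volume-uniformity of the bound, so one must exploit the Gram/Hadamard structure of the positive-definite fermion covariance and, in addition, verify that finite-difference derivatives of such a determinant are absorbed by the weight $w'$ in the precise form it is prescribed --- this is precisely what the normalisation $\|C_{f}\|_{\Phi(w')}\le 1$ is arranged to provide. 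The remaining steps are routine, the only other delicate point being the sign bookkeeping in the identity $\langle \Ex_{\Cbf_{f}}G, g\rangle=\langle G, \tilde g\rangle$, which is controlled by the antisymmetry built into Definition~\ref{def:grassman-integration} and into $J$.
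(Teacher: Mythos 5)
Your proposal is correct and follows essentially the same route as the paper: the identical reduction of \refeq{EK} to \refeq{EKz} via Proposition~\ref{prop:derivs-of-tau-bis}, the transfer of the fermionic expectation onto the test function via an adjoint built from the determinants $J_{y'}=\Ex_{\Cbf_f}\psi^{y'}$, the product inequality \refeq{plusnormass-2}, and Gram's inequality to obtain $\|J\|_{\Phi(w')}\le 1$ (the paper's Lemma~\ref{lem:ell2}, where the finite-difference operators are absorbed by taking Gram vectors of the form $\lambda^\dagger\delta_{x_i}$, exactly the point you flag). The only differences are organisational: you split the bosonic and fermionic factors into two separate steps where the paper's Lemma~\ref{lem:EKzz} treats them in a single adjoint computation, and you defer the details of the Gram estimate, which you nonetheless identify precisely as the crux.
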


The variable $\xi$, which occurs in \refeq{EK} (and also in \refeq{EG2})
is a dummy variable of integration for $\Ex_{\Cbf_b}$.
Note that the first inequality of \refeq{EK} is an immediate
consequence of \refeq{EKz},
and that the second follows from \refeq{theta-bd1},
so it suffices to prove \refeq{EKz}.  In fact, as we show in
Lemma~\ref{lem:EKzz} below, a stronger statement than \refeq{EKz} holds.
Namely, if $h: \R^{\Lambdabold'_{b}} \to \C$ then
\begin{equation}
\lbeq{EKzh}
    \| \Ex_{\Cbf} h F  \|_{T_\phi(w)}  \le
    \Ex_{\Cbf_b} \left[ | h(\xi)|\,
    \|F  \|_{\Ttimes_{\phi \sqcup \xi}(w\sqcup w')} \right]
    .
\end{equation}

Finally, we have  an estimate for the Gaussian expectation
of the fluctuation-field regulator.

\begin{prop}
\label{prop:EG2} Let $t \ge 0$, $\Econstg >1$, and $X  \subset \Lambda$.
There exists a (small) positive constant $c (\Econstg)$ such
that if $ \|\Cbf_b\|_{\Phi^+ (\ell)}\le c (\Econstg) t^{-1}$, where
the $\Phi^+$ norm is the $\Phi$ norm with $p_\Phi$ replaced by
$p_\Phi+d$, then
\eq
\label{e:EG2}
    0 \leq \Ex_{\Cbf_b}  G^t(X,\xi)  \le \Econstg^{R^{-d}|X|}.
\en
\end{prop}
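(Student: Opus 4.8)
The plan is to reduce the bound to a single--block estimate and thence to a finite--dimensional Gaussian integral of the exponential of a positive semi--definite quadratic form. The lower bound $0\le\Ex_{\Cbf_b}G^t(X,\xi)$ is immediate since $G^t(X,\cdot)\ge1$, and for $t=0$ the upper bound is trivial because $\Econstg>1$; so assume $t>0$. Grouping the points of $X$ by the block that contains them, \eqref{e:GPhidef} gives
\begin{equation*}
    G^t(X,\phi)=\prod_{B\in\Bcal}\exp\big(a_B\,\|\phi\|_{\Phi(B^\Box,\ell)}^2\big),
    \qquad a_B=t\,\frac{|X\cap B|}{|B|}\in[0,t],
    \qquad \sum_{B}a_B=tR^{-d}|X|.
\end{equation*}
Since the fluctuation covariance $\Cbf_b$ has finite range (the covariances produced by the finite--range decomposition used in the method vanish for lattice distances larger than a fixed multiple of $R$), one can colour the blocks with a number $J=J(d)$ of colours so that for two blocks $B\ne B'$ of the same colour the neighbourhoods of $B^\Box$ and of $B'^\Box$, enlarged by $p_\Phi+d$, are separated by more than the range of $\Cbf_b$. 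Writing $X_j$ for the blocks of colour $j$, applying H\"older's inequality with all exponents equal to $J$, and then using that uncorrelated Gaussian fields are independent, so that $\Ex_{\Cbf_b}$ factorises over the blocks of each colour class (cf.\ the proof of Proposition~\ref{prop:factorisationE}), I obtain
\begin{equation*}
    \Ex_{\Cbf_b}G^t(X,\xi)
    \le\prod_{j=1}^{J}\Big(\prod_{B\in X_j}\Ex_{\Cbf_b}\exp\big(Ja_B\,\|\xi\|_{\Phi(B^\Box,\ell)}^2\big)\Big)^{1/J}.
\end{equation*}

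It therefore suffices to establish, for a single block $B$, the estimate $\Ex_{\Cbf_b}\exp(Ja_B\,\|\xi\|_{\Phi(B^\Box,\ell)}^2)\le\exp(2Ja_BC'\,\|\Cbf_b\|_{\Phi^{+}(\ell)})$ with $C'=C'(d,p_\Phi)$, valid whenever $c(\Econstg)$ is small. The seminorm $\|\phi\|_{\Phi(B^\Box,\ell)}$ depends only on the values of $\phi$ in a bounded enlargement of $B^\Box$. A discrete Sobolev embedding on the box $B^\Box$ --- it is to accommodate this step that one passes to the $\Phi^{+}$ norm, which bounds $d$ more finite--difference derivatives than $\Phi$ --- dominates its square by $C_dR^{-d}$ times a sum, over the $O(R^d)$ lattice points near $B^\Box$, of squared weighted finite differences of $\phi$ of order up to $p_\Phi+d$; that is, $\|\phi\|_{\Phi(B^\Box,\ell)}^2\le\langle\phi,\bar M_B\phi\rangle$ for a positive semi--definite operator $\bar M_B$ supported on a bounded enlargement of $B^\Box$. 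The crucial point is that the Sobolev factor $R^{-d}$ cancels the $O(R^d)$ number of summands, each of variance at most $\|\Cbf_b\|_{\Phi^{+}(\ell)}$, so that $\mathrm{Tr}(\Cbf_b\bar M_B)\le C'\|\Cbf_b\|_{\Phi^{+}(\ell)}$ with $C'$ independent of $R$. Then $\Ex_{\Cbf_b}\exp(Ja_B\langle\xi,\bar M_B\xi\rangle)$ is a finite--dimensional Gaussian integral of the exponential of a positive semi--definite quadratic form; because $\bar M_B$ has bounded rank, the operator $\Cbf_b^{1/2}(Ja_B\bar M_B)\Cbf_b^{1/2}$ has operator norm no larger than its trace $Ja_B\mathrm{Tr}(\Cbf_b\bar M_B)\le JtC'\|\Cbf_b\|_{\Phi^{+}(\ell)}\le JC'c(\Econstg)$, which is at most $\tfrac14$ once $c(\Econstg)$ is small, so the standard evaluation of the integral together with $-\tfrac12\log(1-\lambda)\le\lambda$ for $0\le\lambda\le\tfrac12$ gives $\Ex_{\Cbf_b}\exp(Ja_B\langle\xi,\bar M_B\xi\rangle)\le\exp(2Ja_B\mathrm{Tr}(\Cbf_b\bar M_B))\le\exp(2Ja_BC'\|\Cbf_b\|_{\Phi^{+}(\ell)})$.

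Combining these ingredients,
\begin{equation*}
    \Ex_{\Cbf_b}G^t(X,\xi)
    \le\prod_{B}\exp\big(2a_BC'\,\|\Cbf_b\|_{\Phi^{+}(\ell)}\big)
    =\exp\big(2C'\,\|\Cbf_b\|_{\Phi^{+}(\ell)}\,tR^{-d}|X|\big)
    \le\exp\big(2C'c(\Econstg)\,R^{-d}|X|\big).
\end{equation*}
Choosing $c(\Econstg)>0$ small enough that $4JC'c(\Econstg)\le1$ and $2C'c(\Econstg)\le\log\Econstg$ --- possible, with $c(\Econstg)\to0$ as $\Econstg\downarrow1$ --- yields $\Ex_{\Cbf_b}G^t(X,\xi)\le\Econstg^{R^{-d}|X|}$, as required.

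I expect the single--block estimate to be the main obstacle, and within it two points that must be arranged exactly: the discrete Sobolev step must produce precisely the factor $R^{-d}$, so that summing over the $O(R^d)$ points leaves no surviving power of $R$ in $\mathrm{Tr}(\Cbf_b\bar M_B)$ --- this is what dictates the number of extra derivatives and hence the use of the $\Phi^{+}$ norm --- and the exponential Gaussian integral must be finite and bounded uniformly in the lattice, which is exactly what forces the reduction, via the finite range of $\Cbf_b$ and the colouring, to a \emph{local}, bounded--rank operator $\bar M_B$. Once these are in place, the colouring, H\"older, and factorisation steps are routine.
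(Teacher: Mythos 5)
Your proof is correct, and it shares the two essential ingredients with the paper's argument --- the discrete Sobolev inequality (Lemma~\ref{lem:sobolev2}), which is exactly what forces the passage from $\Phi$ to $\Phi^{+}$ and produces the $R^{-d}$ that cancels the $O(R^{d})$ summands, and the bound $\Ebold\, e^{\frac12\langle\xi,M\xi\rangle}\le e^{\mathrm{Tr}(\Cbf_b M)}$ for a positive semi-definite quadratic form whose associated operator has spectrum in $[0,\tfrac12]$ (Lemma~\ref{lem:integrability2}, which you re-derive via the Gaussian determinant and $-\tfrac12\log(1-\lambda)\le\lambda$ rather than via the paper's differential inequality in $t$). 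Where you genuinely diverge is in the globalisation step. The paper does not localise to blocks at all: it dominates $G^{t}(X,\phi)$ by a single quadratic form in the variables $\xi(y,\alpha)=ct^{1/2}R^{-d/2}\ell^{-1}\nabla_R^{\alpha}\phi(y)$ indexed by all of $X^{\Box}$, bounds the largest eigenvalue of their covariance $Q$ by $\|Q\|_{1}\le cR^{d}\|Q\|_{\infty}$ via Young's inequality, and applies the exponential moment bound once; the trace then gives $ct\|\Cbf_b\|_{\Phi^{+}}R^{-d}|X^{\Box}|$ directly. You instead factor $G^{t}$ over blocks and use a colouring, H\"older with exponent $J$, and independence of well-separated fields to reduce to a single-block estimate. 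Both routes need some locality of $\Cbf_b$ beyond the stated hypothesis (a sup-norm bound cannot control $\ell^{1}$ sums or eigenvalues by itself): the paper's $\|Q\|_{1}\le cR^{d}\|Q\|_{\infty}$ implicitly assumes the covariance is supported within distance $O(R)$ of the diagonal (or decays summably on that scale), while your colouring requires the covariance to vanish \emph{exactly} beyond a fixed multiple of $R$ --- a strictly stronger input, though one that holds for the finite-range decompositions used in the applications. The paper's version is shorter and avoids the constant $J$; yours is more modular and makes the single-block nature of the estimate explicit. One small imprecision: the operator norm of a positive semi-definite operator is bounded by its trace regardless of rank, so your appeal to ``bounded rank'' is unnecessary (and the rank here is $O(R^{d})$, not bounded); what matters is only that the trace is $O(\|\Cbf_b\|_{\Phi^{+}(\ell)})$ uniformly in $R$, which you establish correctly.
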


Proofs of Propositions~\ref{prop:Etau-bound}--\ref{prop:EG2}
are given in Sections~\ref{sec:heat}--\ref{sec:ffr}  below.

\section{Gaussian integration and the heat equation}
\label{sec:Gihe}

In this section, we prove Proposition~\ref{prop:conv}.
The proof uses integration by parts.
For the purely bosonic case,
it is straightforward to apply integration by parts to obtain
\begin{equation}
\label{e:Gibp}
    \Ex_{\Cbf_b} \phi_x f
    =
    \sum_{y\in \Lambdabold_b} \Cbf_{b; x,y}\Ex_{\Cbf_b}
    \frac{\partial f}{\partial \phi_y}
    ,
    \quad
    \quad
    x \in \Lambdabold_b
\end{equation}
where $f$ is any smooth function
such that both sides are integrable.
The following lemma is a fermionic version of \refeq{Gibp}.
Although it is standard (see, e.g., \cite[Proposition~1.17]{FKT02}),
we give the simple proof.

\begin{lemma}
\label{lem:fibp}
For $F \in \Ncal (\Lambdabold)$ and $x \in \Lambdabold_f$,
\begin{equation}
    \Ex_{\Cbf_f} \psi_{x} F
=
    \sum_{y\in\Lambdabold_f} \Cbf_{f;x,y}\,
    \Ex_{\Cbf_f}
    i_{y}F
.
\end{equation}
\end{lemma}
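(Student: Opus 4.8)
The plan is to unwind the definition $\Ex_{\Cbf_f}G = N_f\int_{\Lambdabold_f} e^{-S_f}G$ from \refeq{ECf} and integrate by parts directly inside the Grassmann integral; the only property of the integral I will use is that it annihilates every $i_y$-derivative. As preparation I would record that $S_f$ of \refeq{Sfdef}, being a sum of monomials of even fermionic degree, is central in $\Ncal$, and hence so is the (terminating) series $e^{-S_f}$. Since $i_y$ is an anti-derivation (Definition~\ref{def:iu}), centrality of $e^{-S_f}$ then yields the sign-free product rule $i_y(e^{-S_f}G) = (i_y e^{-S_f})G + e^{-S_f}(i_y G)$ for all $G\in\Ncal$.

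The next step is the computation of $i_y e^{-S_f}$. Applying parts (b)--(c) of Definition~\ref{def:iu} to $S_f=\tfrac12\sum_{u,v}\Abf_{f;u,v}\psi_u\psi_v$ and using the antisymmetry of $\Abf_f$ gives $i_y S_f = \sum_v \Abf_{f;y,v}\psi_v$; then centrality of $S_f$ gives $i_y(S_f^n)=n(i_yS_f)S_f^{n-1}$, and summing the exponential series gives $i_y e^{-S_f} = -(i_yS_f)e^{-S_f}$. Contracting with $\Cbf_{f;x,y}$ and summing over $y\in\Lambdabold_f$, the identity $\Cbf_f\Abf_f=\Id$ collapses $\sum_y\Cbf_{f;x,y}\,i_yS_f$ to $\psi_x$, so $\sum_y\Cbf_{f;x,y}\,i_y e^{-S_f} = -\psi_x e^{-S_f}$.

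Feeding these two facts into the product rule for $e^{-S_f}F$ and moving the central factor $e^{-S_f}$ to the left produces the pointwise identity in $\Ncal$
\[
  e^{-S_f}\psi_x F = e^{-S_f}\sum_{y\in\Lambdabold_f}\Cbf_{f;x,y}\, i_y F \;-\; \sum_{y\in\Lambdabold_f}\Cbf_{f;x,y}\, i_y\bigl(e^{-S_f}F\bigr).
\]
Applying $N_f\int_{\Lambdabold_f}(\,\cdot\,)$ term by term, the left-hand side becomes $\Ex_{\Cbf_f}\psi_x F$, the first term on the right becomes $\sum_y\Cbf_{f;x,y}\Ex_{\Cbf_f}i_yF$, and the last term vanishes: by Definition~\ref{def:grassman-integration} the integral extracts the coefficient of the top monomial $\psi^{\Lambdabold_f}$, while $i_y H$ contains no monomial with a factor $\psi_y$, so its coefficient of $\psi^{\Lambdabold_f}$ is zero. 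This gives the claimed formula. The step most needing care is the sign accounting --- verifying that $S_f$ and $e^{-S_f}$ are central, so that the Leibniz rule across $e^{-S_f}$ carries no sign and the contraction against $\Cbf_f$ comes out cleanly --- since a stray sign there would corrupt the identity; everything else is routine.
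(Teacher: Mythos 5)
Your proof is correct and follows essentially the same route as the paper's: compute $i_yS_f=\sum_v\Abf_{f;y,v}\psi_v$, use the even-degree (hence sign-free) Leibniz rule on $e^{-S_f}F$, and kill the total-derivative term because $\int_{\Lambdabold_f}i_yH=0$. The only cosmetic difference is that you contract with $\Cbf_f$ before integrating, whereas the paper applies $\Abf_f^{-1}$ at the end; the sign bookkeeping you flag as delicate is handled correctly.
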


\begin{proof}
By definition,
\begin{equation}
    i_{y}
    S
    =
    \frac{1}{2}\sum_{v\in\Lambdabold_f} \Abf_{f;y,v} \psi_{v}
    -
    \frac{1}{2}\sum_{u\in\Lambdabold_f} \Abf_{f;u,y} \psi_{u}
    =
    \sum_{v\in\Lambdabold_f} \Abf_{f;y,v} \psi_{v}
.
\end{equation}
It suffices by linearity to consider $F$ a product of generators,
and since $i_{y}F$ cannot contain
all generators as factors,
\begin{equation}
    \int_{\Lambdabold_f} i_{y}F =0
.
\end{equation}
By replacing $F$ by $e^{-S}F$, we have
\begin{equation}
    \int_{\Lambdabold_f} \left(i_{y}e^{-S} \right) F
    +
    \int_{\Lambdabold_f} e^{-S}
    \left(i_{y}F\right)
=
    0
.
\end{equation}
This is the same as
\begin{equation}
    \int_{\Lambdabold_f} e^{-S} \left(-\sum_{v} \Abf_{f;y,v} \psi_{v}\right) F
    +
    \int_{\Lambdabold_f} e^{-S}
    \left(i_{y}F\right)
=
    0
.
\end{equation}
By applying the inverse of $\Abf_{f}$ to both sides, we obtain the desired
result.
\end{proof}

The following lemma provides the expression in our context of the
intimate link between Gaussian integration and the heat equation.
In the purely bosonic context, this is a standard fact about
Gaussian random variables.

\begin{lemma} \label{lem:*heat-eq}
For $T>0$ and $F\in \Ncal(\Lambdabold)$ such that $F_{t} =
\Ex_{t\Cbf}\theta F$ is defined for $t<T$,
the differential equation
\begin{equation}\label{e:*heat-eq1}
    \frac{d}{dt} F_{t}
    =
    \frac{1}{2} \Delta_{\Cbf} F_{t}
\end{equation}
holds for $t \in (0,T)$.  Moreover, if $P \in
\Ncal(\Lambdabold)$ is a polynomial of finite degree, then
\begin{equation}
\label{e:*EWick}
    \Ex_{\Cbf} \theta P
    =
    e^{\frac{1}{2} \Delta_{\Cbf}} P.
\end{equation}
\end{lemma}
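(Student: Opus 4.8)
The plan is to prove the evolution equation $\frac{d}{dt}F_t = \frac12\Delta_{\Cbf}F_t$ directly, by differentiating the Gaussian integral in $t$ and integrating by parts with \refeq{Gibp} on the bosonic side and Lemma~\ref{lem:fibp} on the fermionic side; the formula $\Ex_{\Cbf}\theta P = e^{\frac12\Delta_{\Cbf}}P$ then follows for polynomial $P$ from nilpotency of $\Delta_{\Cbf}$ on polynomials of bounded degree. A convenient first step is a scaling identity: for $g\in\Ncal(\Lambdabold\sqcup\Lambdabold')$, $\Ex_{t\Cbf}g$ equals $\Ex_{\Cbf}$ applied to the result of replacing the integrated boson field $\xi$ by $\sqrt{t}\,\xi$ and each integrated generator $\psi_{y'}$ by $\sqrt{t}\,\psi_{y'}$. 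For the bosons this is the linear change of variables in \refeq{ECb}; for the fermions both sides are linear and agree on monomials, since a monomial of fermionic degree $2p$ acquires a factor $t^{p}$ under either operation (by \refeq{JF}). Applying this to $g=\theta F$ gives $F_t=\Ex_{\Cbf}[\theta_{\sqrt t}F]$ in the notation of Definition~\ref{def:theta-new}: $\theta_{\sqrt t}$ substitutes $\phi+\sqrt t\,\xi$ for $\phi$ in coefficients and $\psi_y+\sqrt t\,\psi_{y'}$ for generators.

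Next I would differentiate in $t$, passing $\frac{d}{dt}$ through $\Ex_{\Cbf}$ (for polynomial $F$ everything is a finite sum, so this is immediate; in general one works in each fixed fermionic sector with an ordinary Gaussian integral of a smooth function). Differentiating $\theta_{\sqrt t}F$ produces, for each integrated boson variable $u$, the term $\frac{1}{2\sqrt t}\,\xi_u\,\theta_{\sqrt t}(\partial_{\phi_u}F)$, and, for each integrated fermion variable $y$, a term of the shape $\frac{1}{2\sqrt t}\,\psi_{y'}\,\theta_{\sqrt t}(i_yF)$. Integrating by parts, \refeq{Gibp} rewrites $\Ex_{\Cbf}[\xi_u\,\theta_{\sqrt t}(\partial_{\phi_u}F)]$ as $\sqrt t\sum_v\Cbf_{b;u,v}\,\Ex_{\Cbf}[\theta_{\sqrt t}(\partial_{\phi_u}\partial_{\phi_v}F)]$, and Lemma~\ref{lem:fibp} does the analogous replacement on the fermionic side; in each case the new derivative exposes a factor $\sqrt t$ from inside $\theta_{\sqrt t}$, which cancels the prefactor $\frac{1}{2\sqrt t}$. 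Summing over all integrated variables, the bosonic terms assemble into $\frac12\sum_{u,v}\Cbf_{b;u,v}\partial_{\phi_u}\partial_{\phi_v}$ and the fermionic terms into the corresponding second-order fermion operator, each acting through $\theta_{\sqrt t}$; since these operators commute both with $\Ex_{\Cbf}$ and with $\theta_{\sqrt t}$ (the latter being translation in $\phi$, resp. the bijection $y\mapsto y'$ on generators), the total is exactly $\frac12\Delta_{\Cbf}F_t$ by the definition \refeq{LapC} of $\Delta_{\Cbf}$ and the chosen extension of $\Cbf_b,\Cbf_f$ off the primed indices. This establishes the evolution equation.

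For the second assertion, observe that $\Delta_{\Cbf}$ lowers the degree of a polynomial by $2$, so on the finite-dimensional space of polynomials of degree at most $A=\deg P$ the operator $\frac12\Delta_{\Cbf}$ is nilpotent and $e^{\frac12\Delta_{\Cbf}}P$ is a finite sum. The map $t\mapsto F_t=\Ex_{t\Cbf}\theta P$ takes values in this space (it is manifestly a polynomial in the fields of degree $\le A$), solves the linear differential equation just established, and has initial value $F_0=P$ because $\theta_0 P=P$ carries no dependence on the primed fields. Uniqueness for this linear ODE then forces $F_t=e^{\frac t2\Delta_{\Cbf}}P$, and $t=1$ yields $\Ex_{\Cbf}\theta P=e^{\frac12\Delta_{\Cbf}}P$, which is also \refeq{ELap}.

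I expect the main obstacle to be the sign bookkeeping in the fermionic integration-by-parts step: since the generators anticommute and $i_y$ is an anti-derivation, one must track signs when differentiating products of $\psi_y+\sqrt t\,\psi_{y'}$ and when commuting fermion factors past $e^{-S_f}$, and one must check that the fermionic second-order operator produced really is the fermion part of $\frac12\Delta_{\Cbf}$ in \refeq{LapC}. This is precisely the sign responsible for $\Ex_{\Cbf}\psi_u\psi_v=-\Cbf_{f;u,v}$ (as opposed to the bosonic $\Ex_{\Cbf}\xi_u\xi_v=+\Cbf_{b;u,v}$), and it is carried by the way $\Cbf_f$ is extended from $\Lambdabold'$ to $\Lambdabold$. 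The remaining point, justifying differentiation under the integral sign when $F$ is not polynomial and $F_t$ is only assumed defined, is routine and in any case not needed for the application to polynomials.
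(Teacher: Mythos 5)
Your proposal is correct and follows essentially the same route as the paper: rewrite $\Ex_{t\Cbf}\theta F$ as $\Ex_{\Cbf}\theta_{\sqrt t}F$, differentiate in $t$ to produce $\frac{1}{2\sqrt t}$ times a field factor, integrate by parts via \refeq{Gibp} and Lemma~\ref{lem:fibp}, and conclude the polynomial identity from uniqueness for the resulting finite-dimensional linear ODE. The only cosmetic difference is that the paper splits into purely bosonic and purely fermionic cases (doing the fermionic computation with $\theta_t$ and substituting $t\mapsto\sqrt t$ at the end), whereas you carry both species together from the start.
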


\begin{proof}
Since the Gaussian expectation factors as in \refeq{ECbf}, to prove
\eqref{e:*heat-eq1} it suffices to consider separately the cases where
$F$ is purely bosonic or purely fermionic.

We first prove \eqref{e:*heat-eq1} in the bosonic case, where $F=f$ is
a smooth function of $\phi$.  The expectation is then a standard
Gaussian integral, and by a change of variables we have
\begin{align}\label{e:*heateq2}
    \frac{d}{dt} F_{t} (\phi)
    &=
    \frac{d}{dt}\Ex_{\Cbf_b} F (\phi + \sqrt{t}\xi)
    =
    \Ex_{\Cbf_b} \sum_{x \in \Lambdabold_{b}}
    F_{x}(\phi + \sqrt{t}\xi) \frac{1}{2\sqrt{t}}\xi_{x}.
\end{align}
To differentiate under the expectation we need to know that the
resulting integrand is integrable.  To see this,
we observe that since $t<T$
there exists $\epsilon >0$ such that $F (\phi +
\sqrt{t}\xi)\exp[\epsilon \sum_{x} \xi_{x}^{2}]$ is integrable. Now
we apply the integration by parts identity \refeq{Gibp}, and the
definition \refeq{LapC} of the Laplacian, to conclude that
\begin{align}\label{e:*heateq2a}
    \frac{d}{dt} F_{t} (\phi)
    &=
    \frac{1}{2}
    \Ex_{\Cbf_b} \sum_{x,y \in \Lambdabold_{b}}
    \Cbf_{b;x,y}
    F_{x,y}(\phi + \sqrt{t}\xi)
    \nnb
    &=
    \frac{1}{2}
    \Ex_{\Cbf_b} \Delta_{\Cbf_b}
    F(\phi + \sqrt{t}\xi)
    =
    \frac{1}{2}  \Delta_{\Cbf_b}
    \Ex_{\Cbf_b}
    F (\phi + \sqrt{t}\xi)
    =
    \frac{1}{2}\Delta_{\Cbf_b}
    F_{t} (\phi).
\end{align}
This proves the bosonic case of \refeq{*heat-eq1}.

For the fermionic case,
we can suppose that $F=\psi^y=\psi_{y_1}\cdots \psi_{y_k}$.
We first note that
\begin{align}
    \frac{d}{dt}
    \theta_{t}  \psi^{y}
&=
    \frac{d}{dt}
    \prod_{j=1}^k \left(\psi_{y_{j}} + t \psi_{y'_{j}} \right)
=
    \sum_{i} (-1)^{i-1}\psi_{y'_{i}}
    \prod_{j\not =i} \left(\psi_{y_{j}} + t \psi_{y'_{j}} \right),
\end{align}
with the factors under the product maintaining their original order.
By definition of $i_x$, this gives
\begin{align}
    \frac{d}{dt}
    \theta_{t}  \psi^{y}
&=
    \sum_{x \in \Lambdabold_f} \psi_{x}\frac{1}{t}i_{x}\left( \theta_{t}\psi^{y} \right)
=
    \sum_{x \in \Lambdabold_f} \psi_{x} \theta_{t}\left(i_{x} \psi^{y} \right)
,
\end{align}
where the sum extends to all $x\in\Lambdabold_f$ because terms with
$x\neq y_j'$ for some $j$ vanish.
With Lemma~\ref{lem:fibp}, we then obtain
\begin{align}
    \frac{d}{dt} \Ex_{\Cbf_f} \theta_{t}F
    &=
    \sum_{x \in \Lambdabold_f}
    \Ex_{\Cbf_f} \psi_{x} \theta_{t} \left(i_{x} F \right)
    \nnb
    &=
    \sum_{x,y \in \Lambdabold_f}
    \Cbf_{f;x,y}
    \Ex_{\Cbf_f}
    i_{y} \theta_{t}\left(i_{x} F \right)
    =
    \sum_{x,y \in \Lambdabold_f}
    \Cbf_{f;x,y}
    \Ex_{\Cbf_f}
    \theta_{t}\left(ti_{y}i_{x} F \right)
,
\end{align}
which is the same as
\begin{align}
\label{e:dEC}
    \frac{1}{t} \frac{d}{dt} \Ex_{\Cbf_f} \theta_{t}F
=
    \Delta_{\Cbf_f}\Ex_{\Cbf_f} \theta_{t} F
.
\end{align}
Writing $\frac{1}{t}\frac{d}{dt} =
2\frac{d}{d (t^{2})}$, and then replacing $t^{2}$ by $t$, we
obtain
\begin{equation}
    \frac{d}{dt}\Ex_{\Cbf_f} \theta_{\sqrt{t}}F
=
    \frac 12 \Delta_{\Cbf_f}\Ex_{\Cbf_f} \theta_{\sqrt{t}} F
.
\end{equation}
It can be verified from the definitions that
$\Ex_{\Cbf_f}\theta_{\sqrt{t}} = \Ex_{t\Cbf_f}\theta$,
and the fermionic case of \refeq{*heat-eq1} follows.

Finally, suppose that $F$ is a polynomial $P$
of finite degree.
By \eqref{e:*heat-eq1}, each of
$P_{t}$ and $e^{\frac{t}{2} \Delta_{\Cbf}} P$
solves the heat equation with
the same initial data. The heat equation is a
finite-dimensional linear system of
ordinary differential equations because $\Lambdabold$ is a finite set and
thus $\Delta_{\Cbf}$ is a linear operator acting on the finite-dimensional
vector space of polynomials in $\phi$ and $\psi$. Therefore solutions
for the heat equation are unique by the standard theory of linear systems,
and \eqref{e:*EWick} follows.
\end{proof}

\begin{proof}[Proof of Proposition~\ref{prop:conv}.]
Since \refeq{ELap} has been proven in \refeq{*EWick},
it suffices to prove \refeq{conv}.

By the first equality of \refeq{ECbf},
it suffices to verify \refeq{conv} individually for $F=f$
and $F=\psi^y$.  For $F=\psi^y$, \refeq{conv} is an immediate
consequence of \refeq{*EWick}.
For $F=f$, the expectation is a standard Gaussian expectation.
Since  finite Borel measures are uniquely
characterised by their Fourier transforms, it suffices to consider
the case $f(\phi)=e^{i\phi \cdot \eta}$ for $\eta \in \R^{\Lambdabold_b}$.
The Fourier transform of a Gaussian measure with covariance $\Cbf_b$
is $e^{-(\eta, \Cbf_b \eta)}$.  Thus, setting $\Cbf_b = \Cbf_{b,1}+
\Cbf_{b,2}$, we have
\eq
    \Ex_{\Cbf_b}\theta f = e^{i\phi \cdot \eta}e^{-(\eta, \Cbf_b \eta)},
\en
and also
\eq
    \Ex_{\Cbf_{b,2}}\theta\left( \Ex_{\Cbf_{b,2}}\theta f \right)
    = \Ex_{\Cbf_{b,2}}\theta\left(e^{i\phi \cdot \eta}e^{-(\eta, \Cbf_{b,1} \eta)}
    \right)
    = e^{i\phi \cdot \eta}e^{-(\eta, \Cbf_{b,2} \eta)}e^{-(\eta, \Cbf_{b,1} \eta)}
\en
The above two right-hand sides are equal,
and \refeq{conv} follows in the bosonic case.  This completes the proof.
\end{proof}

\section{The \texorpdfstring{$T_\phi$}{Tphi} semi-norm}
\label{sec:tphisemi}

We now prove the five propositions stated in
Sections~\ref{sec:Tphidef}--\ref{sec:Tphifurther}: the product
property of Proposition~\ref{prop:prod}, the exponential norm estimate
of Proposition~\ref{prop:eK}, the polynomial norm estimate of
Proposition~\ref{prop:T0K}, the change of norm estimate of
Proposition~\ref{prop:Tphi-bound}, and the contractive bound for the
map $\theta$ of Proposition~\ref{prop:derivs-of-tau-bis}.  Many of the
proofs follow the strategy of writing the $T_\phi$ semi-norm in terms of
the pairing \eqref{e:Kgpairdef} that defines it, and then introducing
an adjoint operation that transfers the desired statement into an
estimate on test functions.

\subsection{Proof of the product property}
\label{sec:pfprod}

In this section, we prove the product property
stated in Proposition~\ref{prop:prod}.  The proof proceeds by first
establishing the product property for a more general algebra with semi-norm,
and then noting that the product property of the $T_\phi$ norm follows as
an instance.

Let $\Hcal$ be the algebra, generated by the fermion field, and over
the ring of formal power series in indeterminates
$(\xi_{x})_{x\in \Lambdabold_{b} }$. An element $A\in\Hcal$ has
a unique representation
\begin{equation}
    \label{e:Arep}
    A=\sum_{z \in \vec\Lambdabold^{*}}
    \frac{1}{z!} F_{z} \xi^{z_{b}}\psi^{z_{f}}
,
\end{equation}
where $z=(z_b,z_f)$, the coefficients $F_{z}$ are complex valued,
symmetric in the components of $z_{b} \in \vec\Lambdabold_b^*$, and
antisymmetric in the components of $z_{f} \in \vec\Lambdabold_f^*$.
Coefficients that obey these symmetry conditions are said to be
\emph{admissible}. Let $\Fcal$ be the set of admissible coefficients.
As vector spaces, $\Hcal$ and $\Fcal$ are isomorphic by the map $A
\mapsto (F_z)_{z\in\vec\Lambdabold^*}$ implicitly defined by
\eqref{e:Arep}.

We use this isomorphism to transport the product
from $\Hcal$ to a product on $\Fcal$.  Let
\begin{equation}
    \label{e:eta}
    \eta^{z}
=
    \xi^{z_{b}}\psi^{z_{f}}
.
\end{equation}
For $F',F'' \in \Fcal$, we define $(F' \star
F'')$ to be the unique element of $\Fcal$ such that
\begin{equation}
\label{e:star-product}
    \sum_{z \in \vec\Lambdabold^{*}}
    \frac{1}{z!} (F'\star F'')_{z}\eta^{z}
=
    \left(
    \sum_{z' \in \vec\Lambdabold^{*}}
    \frac{1}{z'!} F'_{z'}\eta^{z'}
    \right)
    \left(
    \sum_{z'' \in \vec\Lambdabold^{*}}
    \frac{1}{z''!} F''_{z''}\eta^{z''}
    \right)
.
\end{equation}
The vector space isomorphism between $\Hcal$ and $\Fcal$ implies the
existence of
$F'\star F''$, and with the $\star$ product, $\Fcal$ becomes an
algebra isomorphic to $\Hcal$.

For a sequence $x= (x_{1},x_{2},\dots ,x_{p})$, we say that $(x',x'')$
are \emph{complementary with respect to} $x$ if $x'$ is a subsequence
of $x$ and $x''$ is the sequence obtained by removing $x'$ from
$x$. The pairs such that $x'$ or $x''$ is the empty sequence are
included. We denote by $S_{x}$ the set of all pairs $(x',x'')$ that
are complementary with respect to $x$.  There is an inverse relation:
given sequences $x'$ and $x''$ we define $x'\diamond x''$ to be the
set of all $x$ such that $(x',x'') \in S_{x}$. We extend this notation
to $z \in \vec\Lambdabold^{*}$ by applying it to $z$ species by
species.  For example, with just one boson and one fermion species,
$(z',z'')$ are complementary with respect to $z$ if $(z'_{b},z''_{b})
\in S_{z_{b}}$ and $(z'_{f},z''_{f}) \in S_{z_{f}}$. We define $S_{z}$
to be the set of all $(z',z'')$ that are complementary with respect to
$z$ and we define $z'\diamond z''$ to be the set of all
$z\in\vec\Lambdabold^{*}$ such that $(z',z'') \in S_{z}$.  Recall that
factorials and concatenation are defined species-wise in
$\vec\Lambdabold^*$, in Section~\ref{sec:seq}.  Finally, given a
sequence $z$ with complementary subsequences $z'$ and $z''$, we define
$\sgn (z',z'';z)\in \{-1,1\}$ by the requirement that $\eta^{z} = \sgn
(z',z'';z)\eta^{z'}\eta^{z''}$.

\begin{lemma}
\label{lem:product-formula}
For $F',F'' \in \Fcal$,
the product defined on $\Fcal$ by
\eqref{e:star-product} is given by
\begin{equation}
\label{e:starproddef}
    (F' \star F'')_{z}
=
    \sum_{(z',z'') \in S_{z}}
    F'_{z'} F''_{z''} \,
    \sgn (z',z'';z)
.
\end{equation}
\end{lemma}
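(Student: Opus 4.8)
The plan is to expand both sides of \eqref{e:star-product} in the free variables $\eta^z$ and match coefficients, using nothing more than the bilinearity of the product in $\Hcal$ and the rule $\eta^{z'}\eta^{z''} = \sgn(z',z'';z)\,\eta^z$ whenever $z'\concat z''$ rearranges to $z$ (species by species). The essential combinatorial input is a change of summation: a double sum over independent sequences $z',z''$ can be reorganised as a single sum over the concatenated-and-merged sequence $z$, together with an inner sum over the ways of splitting $z$ back into an ordered pair $(z',z'')$, with the multiplicities accounted for by factorials.

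Concretely, first I would write the right-hand side of \eqref{e:star-product} as
\begin{equation}
    \sum_{z',z''\in \vec\Lambdabold^*} \frac{1}{z'!\,z''!}\,F'_{z'}F''_{z''}\,\eta^{z'}\eta^{z''},
\end{equation}
and then, for fixed $z'$ and $z''$, use $\eta^{z'}\eta^{z''}=\sgn(z',z'';\,z'\concat z'')\,\eta^{z'\concat z''}$; here $z'\concat z''$ is the species-wise concatenation, and I will need the elementary sign identity that moving the fermionic block $\psi^{z''_f}$ past $\psi^{z'_f}$ produces exactly this sign (the bosonic $\xi$'s commute and contribute nothing). Next I would reindex by $z := z'\concat z''$: since concatenation is not injective on pairs, I should instead sum over all $z$ and over all ordered splittings. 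The key observation is that each admissible $z$ of species-wise lengths $(p_1,\dots,p_s)$ arises from exactly $\binom{p_1}{q_1}\cdots\binom{p_s}{q_s}$ ordered pairs of the form (subsequence, complementary subsequence) for given sublengths $(q_i)$ — that is, from the pairs $(z',z'')\in S_z$ — and that $\frac{1}{z'!\,z''!} = \frac{1}{z!}\cdot\frac{z!}{z'!\,z''!}$, where the second factor is precisely this product of binomial coefficients. However, because $F'$ and $F''$ are admissible (symmetric in the boson components of each species, antisymmetric in the fermion components), all of the $\binom{p_i}{q_i}$ rearrangements of a given subsequence contribute equally up to a fermionic sign that is absorbed into $\sgn(z',z'';z)$; so summing over ordered splittings $(z',z'')\in S_z$ (which counts each unordered subset once per ordering) already produces the combinatorial factor, and the $\frac{1}{z!}$ prefactor is exactly what matches the left-hand side of \eqref{e:star-product}.

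Matching the coefficient of $\frac{1}{z!}\eta^z$ on the two sides then yields
\begin{equation}
    (F'\star F'')_z = \sum_{(z',z'')\in S_z} F'_{z'}F''_{z''}\,\sgn(z',z'';z),
\end{equation}
which is \eqref{e:starproddef}. I would also remark that the resulting coefficient is automatically admissible — symmetric in the boson components and antisymmetric in the fermion components of each species — because the set $S_z$ and the sign $\sgn(z',z'';z)$ transform correctly under permutations within each species block; this is what guarantees that $F'\star F''$ is a well-defined element of $\Fcal$, consistent with the abstract existence already granted by the isomorphism $\Hcal\cong\Fcal$.

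The main obstacle is bookkeeping of the fermionic signs: one must check carefully that the sign produced by disentangling $\psi^{z'_f}\psi^{z''_f}$ back into the canonical ordering of $\psi^{z_f}$, composed with the signs from symmetrising $F'_{z'}$ and $F''_{z''}$ over the $\binom{p_i}{q_i}$ rearrangements of each species block, collapses exactly to the single factor $\sgn(z',z'';z)$ as defined (via $\eta^z = \sgn(z',z'';z)\,\eta^{z'}\eta^{z''}$). This is purely a matter of tracking permutation parities and uses only the anticommutativity of the $\psi$'s together with the antisymmetry half of admissibility; there is no analytic content. Everything else — the reindexing, the factorial identity $z!/(z'!\,z''!)=\prod_i\binom{p_i}{q_i}$, and the termination of all sums because $\Lambdabold_f$ is finite — is routine.
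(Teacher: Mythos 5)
Your proposal is correct and takes essentially the same route as the paper's proof: both arguments rest on the defining relation $\eta^{z'}\eta^{z''}=\sgn(z',z'';z)\,\eta^{z}$ together with the count $z!/(z'!z''!)$ of interleavings $z\in z'\diamond z''$, the only difference being that you expand the product and collect terms while the paper starts from the candidate coefficient and verifies that it reproduces the product. One small correction of emphasis: admissibility of $F'$ and $F''$ plays no role in the coefficient-matching step --- the reindexing works because the summand $F'_{z'}F''_{z''}\sgn(z',z'';z)\,\eta^{z}$ is constant as $z$ ranges over $z'\diamond z''$ for a \emph{fixed} pair $(z',z'')$, not because the distinct terms indexed by $S_{z}$ coincide for fixed $z$ --- and admissibility is needed only (as the paper checks separately) to show that the resulting coefficient lies in $\Fcal$, so that matching against the unique admissible representation is legitimate.
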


\begin{proof}
Let $(F' * F'')_{z}$ denote the right-hand side of
\eqref{e:starproddef}.
% ,
% and set $F'*F'' = \sum_z \frac{1}{z!} (F'*F'')_z \eta^z$.
It suffices to show that
\eq
\label{e:closed}
     F'*F'' \in \Fcal,
\en
and
\eq
\label{e:rightprod}
    \sum_z \frac{1}{z!} (F'*F'')_z \eta^z
    =
    \text{right-hand side of \eqref{e:star-product}}.
    % F'F'',
\en

First, by definition,
\begin{align}
    \label{e:bullet-prod}
    (F' * F'')_{z}\eta^{z}
&=
    \sum_{(z',z'') \in S_{z}}
    F'_{z'} F''_{z''} \,
    \sgn (z',z'';z)\eta^{z}
=
    \sum_{(z',z'') \in S_{z}}
    F'_{z'} \eta^{z'} F''_{z''} \eta^{z''}
.
\end{align}
Therefore,
\begin{align}
    \sum_z \frac{1}{z!} (F'*F'')_z \eta^z
&=
    \sum_{z}
    \frac{1}{z!}
    \sum_{(z',z'')\in S_{z}}
    F'_{z'}\eta^{z'} F''_{z''}\eta^{z''}
=
    \sum_{z',z''}
    F'_{z'}\eta^{z'} F''_{z''} \eta^{z''}
    \sum_{z \in z'\diamond z''} \frac{1}{z!}.
\end{align}
The number of $z$ in the set $z'\diamond z''$ is
$z!/(z'!z''!)$, because each $z$ is specified by choosing a
subsequence $(j_{1},..,j_{p'})$ of $(1,\dots ,p (z))$ and setting
$z_{j_{k}}=z'_{k}$, with the other components of $z$ then determined
by $z''$.  This gives
\begin{align}
    \sum_z \frac{1}{z!} (F'*F'')_z \eta^z
&=
    \sum_{z',z''}
    \frac{1}{z'!}\frac{1}{z''!}
    F'_{z'} \eta^{z'} F''_{z''} \eta^{z''}
,
\end{align}
which proves \refeq{rightprod}.

For $F:\vec\Lambdabold^{*}\to\C$, let $\tilde{F}_{z} =
F_{z}\eta^{z}$.  The admissibility requirement in the definition
of $\Fcal$ is equivalent to the statement that $F\in \Fcal$ if and only if
$\tilde F_{\pi z} = \tilde F_z$ for any permutation  $\pi$
of $z$.
Also, given
$(z',z'') \in S_{\pi z}$, we can define
$(\hat z', \hat z'') \in S_{z}$ in a unique way by reordering
the components of $z'$ to produce $\hat z'$ and similarly for $z''$.
Then, by \refeq{bullet-prod},
\begin{align}
    \widetilde {(F' * F'')}_{\pi z}
&=
    \sum_{(z',z'') \in S_{\pi z}}
    \tilde{F}'_{z'}  \tilde{F}''_{z''}
    =    \sum_{(z',z'') \in S_{\pi z}}
    \tilde{F}'_{\hat z'}  \tilde{F}''_{\hat z''}
    =    \sum_{(z',z'') \in S_{z}}
    \tilde{F}'_{z'}  \tilde{F}''_{ z''}
    =
    \widetilde {(F' * F'')}_{ z}
    ,
\end{align}
where the second
equality holds since $F',F''\in\Fcal$.  This proves \refeq{closed}, and completes
the proof.
\end{proof}

Given $F \in \Fcal$ and a test function $g\in\Phi$, we define
a pairing and a semi-norm by
\begin{equation}
\label{e:Tpairdef}
    \pair{ F, g }
=
    \sum_{z \in \vec\Lambdabold^{*}}
    \frac{1}{z!} F_{z} g_{z}
,
\quad\quad
    \|F\|_{T}
=
    \sup_{g \in B (\Phi)} \,
    |\langle F , g \rangle |
.
\end{equation}
The following proposition shows that the $T$ semi-norm on $\Fcal$ obeys
the product property.

\begin{prop}
\label{prop:prodT}
For all $F,G \in \Fcal$,
$\|F\star G\|_{T} \leq \|F\|_{T}\|G\|_{T}$.
\end{prop}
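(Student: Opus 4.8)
The plan is to reduce Proposition~\ref{prop:prodT} to a statement about test functions by introducing an adjoint of the $\star$-product, exactly in the spirit announced at the start of Section~\ref{sec:tphisemi}. Concretely, for a test function $g \in \Phi$ and $z \in \vec\Lambdabold^*$, define $(\Delta^\star g)_{z',z''} = \sgn(z',z'';z)\, g_z$ whenever $(z',z'') \in S_z$ for some (necessarily unique up to the split) $z$; more precisely, set $g^{(z',z'')} = \sgn(z',z'';z) g_z$ for the concatenation $z = z' \concat z''$, so that $g^{(z',z'')}$ is a function of the pair of sequences. The point of Lemma~\ref{lem:product-formula} is that it rewrites $(F\star G)_z$ as a sum over $S_z$, and so a short computation will give the adjoint identity
\begin{equation}
    \pair{F \star G, g}
    =
    \sum_{z', z''} \frac{1}{z'!\, z''!}
    F_{z'} G_{z''}\, g^{(z',z'')}
    =
    \pair{F \otimes G,\ \Delta^\star g},
\end{equation}
where $F\otimes G$ denotes the function $(z',z'') \mapsto F_{z'}G_{z''}$ on pairs of sequences, paired in the obvious way with the combinatorial factors $1/(z'!z''!)$. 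The combinatorial bookkeeping here is precisely the identity already used in the proof of Lemma~\ref{lem:product-formula}: the number of $z \in z'\diamond z''$ is $z!/(z'!z''!)$, which converts the single sum over $z$ with weight $1/z!$ into the double sum over $(z',z'')$ with weight $1/(z'!z''!)$.

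The heart of the argument is then the norm estimate on the adjoint: I claim that if $g \in B(\Phi)$ then $g^{(z',z'')}$, viewed as a product test function in the two groups of variables, has the factorisation bound needed to apply \eqref{e:plusnormass-2}. This is where the structure of the weight $w_{\alpha,z} = \prod_k w_{\alpha_k,z_k}$ is essential: since $\nabla^\alpha$ acts componentwise and the weight is multiplicative over components, the restriction of $g$ to a concatenated sequence $z'\concat z''$ automatically satisfies $|\nabla^{\alpha'}\nabla^{\alpha''} g_{z'\concat z''}| \le w_{\alpha',z'} w_{\alpha'',z''}$, i.e. $g^{(z',z'')}$ lies in the unit ball of the product test-function space (the sign $\sgn(z',z'';z)$ does not affect absolute values). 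Hence, writing the pairing as $\pair{F,g_1'}\pair{G,g_2'}$ for the appropriate "partial" test functions $g_1', g_2'$ obtained by freezing $z''$ (resp.\ $z'$), and using that each partial test function lies in $B(\Phi)$, we obtain
\begin{equation}
    |\pair{F\star G, g}|
    \le
    \sup_{z''}\Big| \sum_{z'} \tfrac{1}{z'!} F_{z'}\, g^{(z',z'')}\Big| \cdot (\text{something})
    \le \|F\|_T \|G\|_T,
\end{equation}
after organizing the double sum as an iterated pairing and bounding one factor at a time. Taking the supremum over $g \in B(\Phi)$ yields $\|F\star G\|_T \le \|F\|_T\|G\|_T$.

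The main obstacle I anticipate is not any single inequality but getting the \emph{signs and the order of the Grassmann generators} correct when splitting $\eta^z = \sgn(z',z'';z)\,\eta^{z'}\eta^{z''}$ through the iterated-pairing step: one must check that freezing the "inner" variables $z''$ and varying $z'$ really does produce a legitimate element of $B(\Phi)$ rather than something with an extra combinatorial loss, and that the $\sgn$ factors cancel against the antisymmetry of the admissible coefficients $F_{z'}$, $G_{z''}$. The cleanest route is probably to do the whole argument first on $\Hcal$ directly — where $\eta^{z'}\eta^{z''}$ is a genuine product in the algebra and the signs are automatic — and only afterwards translate back to $\Fcal$ via the isomorphism of Lemma~\ref{lem:product-formula}; that way the sign combinatorics is handled once, implicitly, by associativity of multiplication in $\Hcal$. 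The remaining estimates are then the routine multiplicativity of the weight and \eqref{e:plusnormass-2}.
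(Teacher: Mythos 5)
Your overall architecture coincides with the paper's: pass the $\star$-product through the pairing to an adjoint operation on test functions, then bound that adjoint using the multiplicativity of the weight $w_{\alpha,z}=\prod_k w_{\alpha_k,z_k}$ (the mechanism behind \eqref{e:plusnormass-2}). But there is a genuine error in your adjoint identity. By Lemma~\ref{lem:product-formula},
\begin{equation}
  \pair{F\star G,g}
  = \sum_{z',z''} F_{z'}G_{z''}\sum_{z\in z'\diamond z''}\frac{1}{z!}\,\sgn(z',z'';z)\,g_z ,
\end{equation}
where the inner sum runs over \emph{all} interleavings $z$ of $z'$ and $z''$, not only the concatenation $z'\concat z''$. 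The conversion of the weight $1/z!$ into $1/(z'!z''!)$ via the count $z!/(z'!z''!)$ of elements of $z'\diamond z''$ is legitimate only when the summand is independent of the interleaving; that is the case in the proof of Lemma~\ref{lem:product-formula}, where the summand is $F'_{z'}\eta^{z'}F''_{z''}\eta^{z''}$, but it fails here because $\sgn(z',z'';z)\,g_z$ does depend on $z$ for a general test function. Concretely, with one fermion species and $F,G$ the coefficient systems of $\psi_a,\psi_b$ with $a\neq b$, the correct pairing is $\tfrac12\big(g_{(a,b)}-g_{(b,a)}\big)$, while your formula gives $g_{(a,b)}$; these differ whenever $g$ is not antisymmetric. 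So the identity on which your estimate rests is false, and the bound does not follow as written.

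The gap is repairable in two ways. The paper's route keeps the full sum over $z\in z'\diamond z''$ in the definition of the adjoint, with the compensating factor $z'!z''!/z!$; since the number of interleavings is exactly $z!/(z'!z''!)$, one bounds the sum term by term, and each term satisfies $|\lambda_{\alpha'',z''}\lambda_{\alpha',z'}g_z|\le 1$ for $g\in B(\Phi)$ by the product structure of $w$ --- no symmetrisation is needed. Alternatively, your formula becomes correct if you first replace $g$ by $Sg$ using Proposition~\ref{prop:pairingS} (the pairing with the admissible $F\star G$ is unchanged), since for symmetrised $g$ the quantity $\sgn(z',z'';z)g_z$ is the same for every $z\in z'\diamond z''$; but you would then also need to check that $S$ is a contraction on $\Phi$, which the paper never requires. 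Finally, the ``(something)'' step must be made precise: the correct organisation is $\pair{F\star G,g}=\pair{F,G^{*}g}$ with $(G^{*}g)_{z'}=\pair{G,f_{z'}}$ for a family of test functions $f_{z'}$, and the supremum defining $\|G^{*}g\|_\Phi$ is taken over pairs $(\alpha',z')$ after applying the weighted derivative $\lambda_{\alpha',z'}$ in the $z'$ variables, not a bare supremum over $z''$.
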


\begin{proof}
Let $g\in\Phi $ and $G \in \Fcal$.
By Lemma~\ref{lem:product-formula},
\begin{align}
    \langle F\star G, g \rangle
&=
    \sum_z \frac{1}{z!} \sum_{(z',z'') \in S_z} F_{z'}G_{z''}
    \,\sgn (z',z'';z) \,g_z
\nnb
&=
    \sum_{z',z''}
    \sum_{z\in z' \diamond z''} \frac{1}{z!} F_{z'}G_{z''}
    \,\sgn (z',z'';z) \,g_z
.
\end{align}
We define $G^{*}g \in \Phi$ by
\begin{equation}
    (G^{*}g)_{z'}
    =
    \sum_{z''} \frac{1}{z''!} G_{z''}
    \sum_{z \in z' \diamond z''} \frac{z'!z''!}{z!}
    \,\sgn (z',z'';z) \, g_z,
\end{equation}
so that
\begin{align}
\lbeq{Gadj}
    \langle F\star G, g \rangle
&=
    \sum_{z'} \frac{1}{z'!} F_{z'}(G^{*} g)_{z'}
=
    \pair{F,G^{*}g}
\end{align}
and hence
\begin{equation}
    \|F\star G\|_{T}
    \le
    \|F\|_{T} \sup_{g \in B (\Phi)} \|G^{*}g\|_{\Phi}
.
\end{equation}
Thus it remains to show that
\eq
\label{e:Gstarbd}
   \|G^{*}g\|_{\Phi} \le \|G\|_{T} \quad \quad \text{for $g \in B (\Phi)$}.
\en

Given $g \in B(\Phi)$ and $z' \in \vec\Lambdabold^{*}$, we define a
test function $f_{z'} \in \Phi$ by setting its value $(f_{z'})_{z''}$ at
$z''$ to be equal to
\begin{equation}
\label{e:fzzdef}
    f_{z',z''}
=
    \sum_{z \in z' \diamond z''} \frac{z'!z''!}{z!}
    \,\sgn (z',z'';z) \,
    g_z,
\end{equation}
where $f_{z',z''}$ is a short notation for $(f_{z'})_{z''}$.
We regard this as a function of $z''$ with $z'$ fixed.  By definition,
$(G^{*}g)_{z'} = \pair{G,f_{z'}}$, and hence, by
Definition~\ref{def:gnorm-general},
\begin{equation}
    \|G^*g\|_\Phi
    =
    \sup_{(\alpha',z')\in \Acal'}
    |\lambda_{\alpha',z'} \pair{G,f_{z'}}|
    =
    \sup_{(\alpha',z')\in \Acal'}
    |\pair{G,\lambda_{\alpha',z'} f_{z'}}|,
\end{equation}
where $\Acal'$ denotes a copy of $\Acal$, and where we
have made the abbreviation
$\lambda_{\alpha',z'}= w_{\alpha',z'}^{-1}\nabla^{\alpha'}$.  Thus
we obtain
\begin{equation}
    \|G^*g\|_\Phi
    \le
    \|G\|_T
    \sup_{(\alpha',z')\in \Acal'}
    \|\lambda_{\alpha',z'} f_{z'}\|_\Phi.
\end{equation}
Thus, it is sufficient to show that for all $g \in B(\Phi)$
and $(\alpha',z')\in \Acal'$, $(\alpha'',z'') \in \Acal''$,
\begin{equation}
    |\lambda_{\alpha'',z''}\lambda_{\alpha',z'} f_{z',z''}| \le 1.
\end{equation}
In \refeq{fzzdef}, the operations $\lambda_{\alpha'',z''}\lambda_{\alpha',z'}$
can be interchanged with the summation because they are linear,
and with the factorials and sgn function
since these depend only on the length and order of the relevant sequences.
Since the number of terms in the sum over $z\in z' \diamond z''$ is equal to
$z!/(z'!z''!)$, we find after taking the absolute values inside the summation
that it suffices to show that, for each $z \in z'\diamond z''$,
\begin{equation}
\label{e:lam1}
    |\lambda_{\alpha'',z''}\lambda_{\alpha',z'} g_z| \le 1,
\end{equation}
where the derivatives within the $\lambda$ factors act on the
arguments of $g_z$ according to their permuted locations within
$z \in z'\diamond z''$.  Since \refeq{lam1} is a consequence
of $g \in B(\Phi)$ and the definition of the $\Phi$ norm, this
completes the proof.
\end{proof}

\begin{proof}[Proof of Proposition~\ref{prop:prod}.]  Let
$F =\sum_{y \in \Lambda_f^*}\frac{1}{y!}F_y \psi^y \in \Ncal$.
For boson fields $\phi,\xi$, Taylor expansion of the
coefficients $F_{y}$ about a fixed $\phi$ in powers of $\xi$
defines an algebra isomorphism
\begin{equation}
    F
\mapsto
    \sum_{(x,y) \in \vec\Lambdabold^{*}}
    \frac{1}{x!y!} F_{x,y} (\phi) \xi^{x}\psi^y
\end{equation}
of $\Ncal$ into a subalgebra (if $p_\Ncal < \infty$)
of the algebra $\Hcal$ and, in turn, $\Hcal$ is
isomorphic as an algebra to $\Fcal$.  The composition of these
isomorphisms is an isometry of the semi-normed algebras
$(\Ncal,T_{\phi})$ and $(\Fcal,T)$, so
Proposition~\ref{prop:prod} follows from Proposition~\ref{prop:prodT}.
\end{proof}

Finally, we extract and develop a detail from the proof of
Proposition~\ref{prop:prodT}, needed only in
\cite[Section~\ref{loc-sec:LTnormestimates}]{BS-rg-loc}.  Examination
of the proof of \refeq{Gadj} shows that it is also true that
$\pair{F\star G,g} = \pair{G,F^\dagger g}$ for all $F,G\in \Fcal$ and
$g \in \Phi$, where
\begin{align}
\label{e:Fdag}
    (F^{\dagger}g)_{z''}
    &=
    \sum_{z'} \frac{1}{z'!}F_{z'}
    \sum_{z \in z' \diamond z''} \frac{z'!z''!}{z!}
    \,\sgn(z',z'';z) \, g_z
.
\end{align}
By the isomorphism mentioned in the proof of
Proposition~\ref{prop:prod}, \refeq{Fdag} also defines an adjoint in
$\Ncal$, in the sense that $\pair{FG,g}_\phi = \pair{G,F^\dagger
g}_\phi$ also for $F,G\in \Ncal$.  We apply this to the case of a test
function $f_z$ which is nonzero only on sequences $z$ of fixed length
$p(z)=n$ and of fixed choice of species for each of the $n$ components
of $z$.  In this case, $z!=n!$, and given $z'$ in the sum in
\refeq{Fdag}, $z''!$ is determined by $z'!$ (and by the fixed value of
$n$).  In addition, given $z'$, it is also the case that
$\sgn(z',z'';z)$ is determined since the species in $z$ are known when
$f_z \neq 0$.  Thus there are coefficients $c_{z'}=
\frac{z''!}{z!}\sgn(z',z'';z)$ such that, for the special $f$ under
consideration,
\begin{align}
\label{e:Fdagf}
    (F^{\dagger}f)_{z''}
    &=
    \sum_{z'} c_{z'}F_{z'} \tilde f^{(z')}_{z''}
    \quad \text{with} \quad
    \tilde f^{(z')}_{z''} =     \sum_{z \in z' \diamond z''} f_z
.
\end{align}

%%%%%%%%%%%%%%%%%%%%%%%%%%%%%%%%%%%%%%%%%%%%%%%%%%%%%%%%%%%%%%%%%%%%%%%%%%%%%%%%%%%%%%%%%%%
\subsection{Exponential norm estimate}
\label{sec:ebdTphi}

In this section, we prove Proposition~\ref{prop:eK}.

Let $f(u)=\sum_{n=0}^\infty a_n u^n$
and $h(u)=\sum_{n=0}^\infty  |a_n| u^n$,
and let $\|\cdot\|$ denote any semi-norm that obeys the
product property, e.g., the $T_\phi$ semi-norm. As an immediate consequence
of the product property, for any $F$, we have
\eq
\label{e:apos}
    \|f(F)\| \leq \sum_{n=0}^\infty |a_n| \|F^n\|
    \leq \sum_{n=0}^\infty |a_n| \|F\|^n = h(\|F\|).
\en
It follows from \refeq{apos} that
\eq
\label{e:apos-e}
    \|e^{-F}\|_{T_\phi} \le e^{\|F \|_{T_\phi}}.
\en
Proposition~\ref{prop:eK} provides an
improvement to \refeq{apos-e} when
the purely bosonic part of
$F$ has positive real part.  Its proof is based
on the following lemma.

\begin{lemma}
\label{lem:fK}
Let $\|\cdot\|$ denote any semi-norm that obeys the product property.
If $\|F\| \le 1$, then
\begin{align}
    \label{e:RK3}
    \|e^{-\frac{1}{2}F^{2}}-(1+F)e^{-F}\|
    &
    \le
    \|e^{-\frac{1}{2}F^{2}}\|\,\|F^{3}\|
    .
\end{align}
\end{lemma}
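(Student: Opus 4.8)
The plan is to factor the left-hand side as $e^{-\frac12 F^{2}}\cdot F^{3}\cdot\kappa(F)$ for a suitable remainder $\kappa(F)$ with $\|\kappa(F)\|\le 1$, and then invoke the product property. I would first prove the scalar identity: since $\frac{d}{du}\bigl[(1+u)e^{-u+\frac12 u^{2}}\bigr]=u^{2}e^{-u+\frac12 u^{2}}$ by a direct computation, integrating from $0$ and substituting $t=su$ gives
\[
  (1+u)\,e^{-u+\frac12 u^{2}} = 1 + u^{3}\kappa(u),
  \qquad
  \kappa(u):=\int_{0}^{1}s^{2}\,e^{-su+\frac12 s^{2}u^{2}}\,ds,
\]
and multiplying by $e^{-\frac12 u^{2}}$ rearranges this to $e^{-\frac12 u^{2}}-(1+u)e^{-u}=-u^{3}e^{-\frac12 u^{2}}\kappa(u)$. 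As this is an identity of entire functions (equivalently of convergent power series), it transfers to $F$ in place of $u$ for any $F$ to which the functional calculus applies, so in $\Ncal$ we have $e^{-\frac12 F^{2}}-(1+F)e^{-F}=-F^{3}e^{-\frac12 F^{2}}\kappa(F)$ with $\kappa(F)=\int_{0}^{1}s^{2}e^{-sF+\frac12 s^{2}F^{2}}\,ds$ (read coefficientwise; one may equally expand $\kappa$ as a power series).

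Applying the product property three times then gives
\[
  \bigl\|e^{-\frac12 F^{2}}-(1+F)e^{-F}\bigr\|
  \le \bigl\|e^{-\frac12 F^{2}}\bigr\|\,\|F^{3}\|\,\|\kappa(F)\|,
\]
so it remains only to show $\|\kappa(F)\|\le 1$. Taking the semi-norm inside the integral and using \refeq{apos} (applied to $e^{-u}$, i.e.\ $\|e^{-G}\|\le e^{\|G\|}$) together with the triangle inequality, the product property, and the hypothesis $\|F\|\le 1$,
\[
  \|\kappa(F)\| \le \int_{0}^{1}s^{2}\,\bigl\|e^{-(sF-\frac12 s^{2}F^{2})}\bigr\|\,ds
  \le \int_{0}^{1}s^{2}\,e^{\,s\|F\|+\frac12 s^{2}\|F\|^{2}}\,ds
  \le \int_{0}^{1}s^{2}\,e^{\,s+\frac12 s^{2}}\,ds.
\]

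Finally I would evaluate this last integral exactly: $(s-1)e^{\,s+\frac12 s^{2}}$ is an antiderivative of $s^{2}e^{\,s+\frac12 s^{2}}$, whence $\int_{0}^{1}s^{2}e^{\,s+\frac12 s^{2}}\,ds=\bigl[(s-1)e^{\,s+\frac12 s^{2}}\bigr]_{0}^{1}=0-(-1)=1$. This gives $\|\kappa(F)\|\le 1$ and hence the lemma. I expect the main obstacle — and the reason the inequality holds with constant exactly $1$ and no slack — to be precisely this sharp evaluation: any coarser estimate of $\|\kappa(F)\|$ that discards the oscillation in $e^{-su+\frac12 s^{2}u^{2}}$, or that separates off the factor $\|e^{-\frac12 F^{2}}\|$, produces a constant strictly larger than $1$; everything else is routine bookkeeping with the product property and \refeq{apos}.
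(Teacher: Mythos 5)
Your proof is correct, and at its core it follows the same route as the paper: both arguments factor the difference as $e^{-\frac12 F^2}$ times $F^3$ times a cofactor of semi-norm at most $1$, and both rest on the derivative identity $\frac{d}{dz}\bigl[(z-1)e^{z+\frac12 z^2}\bigr]=z^2e^{z+\frac12 z^2}$. The only divergence is in how the cofactor is bounded. The paper sets $f(z)=1+(z-1)e^{z+\frac12 z^2}=z^3g(z)$, notes that $g$ has non-negative Taylor coefficients, and applies \refeq{apos} to get $\|g(-F)\|\le g(\|F\|)\le g(1)=f(1)=1$; you use the integral form of the Taylor remainder and evaluate $\int_0^1 s^2e^{s+\frac12 s^2}\,ds=1$ exactly. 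These are literally the same number, since your $\kappa(u)$ equals $g(-u)$ and your integral is $g(1)$, so neither route has more slack than the other. The one thing your version uses that the paper's does not is the inequality $\|\int_0^1\kappa_s\,ds\|\le\int_0^1\|\kappa_s\|\,ds$ for an algebra-valued integral: this is immediate for the $T_\phi$ semi-norm (pair with a test function and pull the $s$-integral outside), but it is an extra hypothesis beyond ``any semi-norm with the product property'' as the lemma is stated, and your fallback of expanding $\kappa$ as a power series does not by itself rescue the estimate, because you would then need to control $\sum_n|c_n|\,\|F\|^n$ and hence to know the signs of the coefficients $c_n$ of $\kappa$. That sign information ($c_n=(-1)^nb_n$ with $b_n\ge 0$) is exactly what the paper extracts from $f'(z)=z^2e^{z+\frac12 z^2}$, so if you want the lemma at its stated level of generality you should either add the integral--norm inequality as a hypothesis or finish by that coefficient argument, which collapses your proof into the paper's.
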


\begin{proof}
Let
\eq
\label{e:fK7}
    R
=
    e^{-\frac{1}{2}F^{2}}-(1+F)e^{-F}.
\en
Let
\eq
    f (z)
=
    1+(z-1)e^{z+\frac{1}{2}z^{2}}.
\en
Then $R=e^{-\frac 12 F^2}f(-F)$.
By definition, $f (0)=0$ and
$
    f' (z)
=
    z^{2}
    e^{z+\frac{1}{2}z^{2}}.
$
Thus $f'(z)$ has a power series with non-negative
coefficients, and hence so does $f(z)$.  Also, $f (z)= z^{3}g
(z)$, for some $g (z)=\sum b_{n}z^{n}$ with $b_{n}\ge 0$.   In
addition, $g(1)=f(1)=1$. Therefore, by \refeq{apos},
\eq
    \|f (-F) \|
=
    \|F^{3}g(-F)\|
\le
    \|F^{3}\|\|g(-F) \|
    \leq
    \|F^{3}\|\,g(\|F\|) .
\en
If $\|F\|\le 1$, this simplifies to
\eq
\label{e:f-K}
    \|f (-F) \|
\le
    \|F^{3}\|.
\en
This gives
\eq
\label{e:fK1}
    \|R\|
=
    \big\|
    e^{-\frac{1}{2}F^{2}} f (-F)
    \big\|
\le
    \|e^{-\frac{1}{2}F^{2}}\|
    \|F^{3}\|,
\en
which is \refeq{RK3}.
\end{proof}

\begin{proof}[Proof of Proposition~\ref{prop:eK}.]
Let $F\in \Ncal$ and let $F_\varnothing(\phi)$ be the
purely bosonic part of $F$.  We will prove that
\eq
    \|e^{-F  }\|_{T_\phi}
    \le
    e^{-2{\rm Re} F_\varnothing(\phi)+\|F  \|_{T_\phi}}
    .
\en

We first assume that $|F_\varnothing(\phi)|$ is sufficiently small
that $|1-F_\varnothing(\phi)| - |F_\varnothing(\phi)| \ge 0$,
and we write $F_\varnothing(\phi)=z=x+iy$.
We show that this implies that
\eq
\label{e:eKF0}
    |1-z| - |z| \le 1 - 2x,
\en
as follows.
By hypothesis, \refeq{eKF0} is equivalent to the
inequality obtained by squaring both sides, and algebra reduces the
latter to
\eq
    x(1-x)+y^2 \le |z| \, |1-z|.
\en
This certainly holds if the left-hand side is negative, and otherwise
it suffices to show that the inequality is valid if both sides are
squared, and the latter reduces to
\eq
    2x(1-x) \le (1-x)^2+x^2,
\en
which does hold.  This completes the proof of \refeq{eKF0} when
$|1-F_\varnothing(\phi)| - |F_\varnothing(\phi)| \ge 0$.

The $T_\phi$ semi-norm is defined via the pairing given in
\eqref{e:Kgpairdef}.  Let $g$ be any test
function of norm at most $1$.
By separating out the null contribution to the sum
over $z$ we have
\begin{align}
    |\langle 1-F , g \rangle_\phi|
    &\le
    |(1-F_\varnothing(\phi))g_\varnothing| +
    \sum_{r\not = 0}
    \big|
    \sum_{z\in \vec\Lambdabold^{(r)}}
    \frac{1}{z!} F_{z}(\phi) g_{z}
    \big|
    \nnb
    &=
    (|1-F_\varnothing(\phi)| -|F_{\varnothing}(\phi)|)|g_\varnothing | +
    \sum_{r \ge 0}
    \big|
    \sum_{z\in \vec\Lambdabold^{(r)}}
    \frac{1}{z!} F_{z}(\phi) g_{z}
    \big|.
\end{align}
We take the supremum over test functions $g$ of unit norm.  The final
term becomes $\|F\|_{T_\phi}$, so
\begin{equation}
    \label{e:explinbis}
    \|1-F \|_{T_\phi}
    \le
    |1-F_\varnothing(\phi)| -|F_{\varnothing}(\phi)| + \|F\|_{T_\phi}.
\end{equation}

For the rest of the proof, we drop the $T_{\phi}$ subscript.
Given $\phi$, we choose $N$ sufficiently large that
$|1-\frac{1}{N} F_\varnothing(\phi)| - \frac{1}{N}|F_\varnothing(\phi)| \ge 0$.
By \refeq{explinbis} and \refeq{eKF0},
\eq
    \left\|1-\frac{1}{N}F \right\| \le
    1 - \frac{2}{N}{\rm Re} F_\varnothing(\phi) +
    \frac{1}{N}\|F\|.
\en
By the
product property,
\eq
    \left\|\big(1-\frac{1}{N}F \big)^{N}\right\| \le
    \big(
    1 - \frac{2}{N}{\rm Re} F_{\varnothing}(\phi) +
    \frac{1}{N}\|F\|
    \big)^{N} \le
    e^{-2{\rm Re} F_\varnothing(\phi)+\|F\|}.
\en
It suffices now to show that the limit $N \to \infty$ can be taken
inside the semi-norm on the left-hand side.  For this we define $A =
e^{F/N}(1-\frac{1}{N}F)$.
By \eqref{e:apos} with $f(z)=e^z$ and with $f (z) = e^{z}-1$, we have
$\|e^{-\frac 12 F^2/N^2}\| = O(1)$ and
$\|e^{-\frac 12 F^2/N^2}-1\|=O(N^{-2})$ as $N \to \infty$.  Therefore,
by \eqref{e:RK3} with $F$ replaced by
$-F/N$,
\begin{align}
    \|A-1\|
    &\le
    \|e^{F/N}(1-\frac{1}{N}F)- e^{-\frac{1}{2}F^{2}/N^{2}}\|
    +
    \|e^{-\frac 12 F^2/N^2}-1\|
    \nnb &
    =
    O (N^{-3}) + O(N^{-2})
    =
    O (N^{-2}).
\end{align}
Now let $f (z) = (1-z^{N})
(1-z)^{-1} = \sum_{n=0}^{N-1} z^{n}$.  Then, by \eqref{e:apos},
\begin{align}
    \big\| \big(1-\frac{1}{N}F \big)^{N} - e^{-F} \big\|
    &=
    \|e^{-F}
    (
    A^{N} - 1
    )
    \|
    \nnb &
    \le
    \|e^{-F}\|\,
    \|A-1\|\, f (\|A\|)
    =O(N^{-2})f(\|A\|),
\end{align}
and the right-hand side is $O(N^{-1})$ since $f(1+O(N^{-2}))=O(N)$.
This completes the proof.
\end{proof}

\subsection{Polynomial norm estimate}
\label{sec-Tphiestimates}

In this section, we prove Proposition~\ref{prop:T0K}.  We begin with
some definitions and a preliminary lemma which will be useful also in
Sections~\ref{sec:normchange}--\ref{sec:compsp0}.

For $z \in \vec\Lambdabold^{*}$, let $B_z$ denote the set of pairs
$(z',z_b'')\in \vec\Lambdabold^{*}\times \vec\Lambdabold_b^{*}$ such
that $z'\concat z_b''=z$.  For $s \in [0,1]$, $g\in\Phi$, $\xi
\in \R^{\Lambdabold_b}$ and $z \in \vec\Lambdabold^{*}$, we define a
new test function $\sigma^{*} (s) g \in \Phi$ by setting $(\sigma^{*}
(s) g)_{z} = 0$ if the length of $z$ exceeds $p_\Ncal$, and otherwise
\begin{equation}
\label{e:sigmastardef}
    (\sigma^{*}_\xi (s) g)_{z}
=
    \sum_{(z',z_b'')\in B_z} \frac{z!}{z'!z_b''!}
    s^{z_b''} \xi^{z_b''} g_{z'}
.
\end{equation}
We write $\sigma^{* (m)}_\xi g$ to denote
the $m^{\rm th}$ derivative of $\sigma_\xi^* (s) g$ at $s=0$.

\begin{lemma}
\label{lem:Stm} For $s,t \in [0,1]$, $g\in\Phi$, $P \in \Ncal$ a
polynomial of degree at most $p_\Ncal$, and $\phi,\xi \in
\R^{\Lambdabold_b}$,
\begin{equation}
\label{e:sigstar}
    \pair{P,g}_{t\phi + s\xi}
=
    \pair{P,\sigma_\xi^{*} (s) g}_{t\phi}
.
\end{equation}
If $g \in \Phi^{(p)}$ and $m+p \le p_\Ncal$, then $\sigma_\xi^{*(m)}g\in
\Phi^{(m+p)}$, and, for any $F\in \Ncal$,
\begin{equation}
\label{e:sigstarmid}
    \frac{d^m}{dt^m}
    \pair{F,g}_{t\phi}
    =
    \pair{ F, \sigma_\phi^{*(m)}g }_{t\phi}.
\end{equation}
For all $p$ and for $g \in \Phi $,
\begin{equation}
\label{e:sigstar0}
    \|\sigma^{*}_\xi (1) g\|_{\Phi^{(p)}}
\le
    \left(1 + \|\xi\|_{\Phi}\right)^{p}
    \|g\|_{\Phi}
,
\end{equation}
and
\begin{equation}
\label{e:sigstarm}
    \| \sigma_\xi^{*(m)} g\|_{\Phi^{(m+p)}}
\le
    \frac{(m+p)!}{p!}\|\xi\|_{\Phi}^{m}
    \|g\|_{\Phi}
.
\end{equation}
\end{lemma}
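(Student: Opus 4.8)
The plan is to prove \refeq{sigstar} first, as the core exact identity, and then to read off the mapping property and \refeq{sigstarmid} from the explicit formula \refeq{sigmastardef} together with \refeq{sigstar}, and finally to derive the two estimates \refeq{sigstar0} and \refeq{sigstarm} directly from \refeq{sigmastardef}. I expect the bookkeeping in the first step to be the only real obstacle.

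\emph{Step 1: the identity \refeq{sigstar}.} Since $P$ is a polynomial of degree at most $p_\Ncal$, for each fermionic $y$ the coefficient $P_y$ is an ordinary polynomial in $\phi$ and $P_{x,y}=\partial^x P_y$, so its Taylor expansion about $t\phi$ terminates and reads
\[
    P_{x,y}(t\phi+s\xi)
    =
    \sum_{x''\in\vec\Lambdabold_b^{*}}\frac{1}{x''!}\,s^{x''}\xi^{x''}\,P_{x\concat x'', y}(t\phi),
\]
where $1/x''!$ accounts for the multinomial count, within each species, incurred by summing over sequences rather than multi-indices. Inserting this into $\pair{P,g}_{t\phi+s\xi}=\sum_{x,y}\frac{1}{x!y!}P_{x,y}(t\phi+s\xi)g_{x,y}$ and reorganising the triple sum over $(x,y,x'')$ as a sum over the full sequence $z$ (boson part $x\concat x''$, fermion part $y$) and over the boson decompositions $(z',z_b'')\in B_z$ — with $z'=(x,y)$ and $z_b''=x''$, so that $\frac{1}{x!y!x''!}=\frac{1}{z!}\cdot\frac{z!}{z'!\,z_b''!}$ — turns the right-hand side into $\sum_z\frac{1}{z!}P_z(t\phi)(\sigma^*_\xi(s)g)_z=\pair{P,\sigma^*_\xi(s)g}_{t\phi}$. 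The delicate point here is keeping the species-wise factorial conventions of Section~\ref{sec:seq} straight in the passage from iterated partial derivatives to $x\concat x''$; this is the main obstacle, and everything after it is lighter.

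\emph{Step 2: \refeq{sigstarmid} and the mapping property.} Differentiating \refeq{sigmastardef} $m$ times at $s=0$ keeps only the terms with $p(z_b'')=m$, so $(\sigma^{*(m)}_\xi g)_z=m!\sum_{(z',z_b'')\in B_z,\;p(z_b'')=m}\frac{z!}{z'!\,z_b''!}\xi^{z_b''}g_{z'}$; if $g\in\Phi^{(p)}$ then $g_{z'}=0$ unless $p(z')=p$, which forces $p(z)=m+p$, whence $\sigma^{*(m)}_\xi g\in\Phi^{(m+p)}$ (the truncation at $p_\Ncal$ being harmless since $m+p\le p_\Ncal$). For the derivative identity with a general $F\in\Ncal$, I would note that $\pair{F,g}_{t\phi}=\sum_{p(z)=p}\frac{1}{z!}F_z(t\phi)g_z$ is a finite sum of $C^m$ functions of $t$ (as $F_y$ has at least $p_\Ncal\ge m+p$ continuous derivatives), and that the chain rule together with $\partial_u F_{x,y}=F_{x\concat(u),y}$ gives $\frac{d^m}{dt^m}F_z(t\phi)=\sum_{x''\in\vec\Lambdabold_b^{(m)}}\frac{m!}{x''!}\phi^{x''}F_{z\concat x''}(t\phi)$; substituting and regrouping exactly as in Step 1 (now with $\xi=\phi$) yields $\frac{d^m}{dt^m}\pair{F,g}_{t\phi}=\pair{F,\sigma^{*(m)}_\phi g}_{t\phi}$. (Alternatively: set $\xi=\phi$ in \refeq{sigstar}, differentiate $m$ times in $s$ at $0$ to get the identity for polynomials, then observe that at any $t_0$ both sides of \refeq{sigstarmid} depend only on the coefficients $F_z(t_0\phi)$ with $p(z)\le m+p$, so replacing $F$ by its degree-$(m+p)$ Taylor polynomial at $t_0\phi$ reduces to the polynomial case.)

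\emph{Step 3: the estimates \refeq{sigstar0} and \refeq{sigstarm}.} Both follow directly from \refeq{sigmastardef}, without \refeq{sigstar}. Fix $(\alpha,z)$ of length $p$ (resp.\ $m+p$). For a decomposition $z=z'\concat z_b''$ the entries of $z$ split between the two factors of $\xi^{z_b''}g_{z'}=(\prod_{w\in z_b''}\xi_w)\,g_{z'}$, so $\nabla^\alpha$ factors accordingly; regarding $\xi$ as a test function supported on length-$1$ sequences, the defining inequalities of $\|g\|_\Phi$ and $\|\xi\|_\Phi$ give $w_{\alpha,z}^{-1}|\nabla^\alpha(\xi^{z_b''}g_{z'})|\le\|\xi\|_\Phi^{p(z_b'')}\|g\|_\Phi$. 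It then remains to sum the coefficients $\frac{z!}{z'!\,z_b''!}$ over $B_z$: this product equals $\prod_i\binom{p_i}{k_i}$ over the boson species (fermion species contribute $1$), where $p_i$ and $k_i$ are the numbers of species-$i$ entries of $z$ and of $z'$. Summing over all decompositions gives $\prod_i(1+\|\xi\|_\Phi)^{p_i}=(1+\|\xi\|_\Phi)^{\sum_i p_i}\le(1+\|\xi\|_\Phi)^p$, which is \refeq{sigstar0}; restricting to $p(z_b'')=m$ and using the Vandermonde convolution gives $m!\binom{\sum_i p_i}{m}\le\frac{(m+p)!}{p!}$, which is \refeq{sigstarm}. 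No genuine obstacle arises here beyond organising the binomial bookkeeping.
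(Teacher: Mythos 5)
Your proposal is correct and follows essentially the same route as the paper's proof: Taylor-expanding $P_{z'}(t\phi+s\xi)$ in powers of $s\xi$ and regrouping the sum over $(z',z_b'')$ via the factorial identity $\frac{1}{z'!\,z_b''!}=\frac{1}{z!}\frac{z!}{z'!\,z_b''!}$ to obtain \refeq{sigstar}, differentiating \refeq{sigmastardef} at $s=0$ for the mapping property and \refeq{sigstarmid} (your reduction of general $F$ to the polynomial case is exactly the paper's remark), and using the product-test-function bound \refeq{plusnormass} plus binomial bookkeeping for \refeq{sigstar0} and \refeq{sigstarm}. No gaps.
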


\begin{proof}
By definition, for $g\in\Phi$
and for a polynomial $P$ of degree $p_\Ncal$,
\begin{align}
    \pair{P,g}_{t\phi + s\xi}
&=
    \sum_{z'} \frac{1}{z'!} P_{z'} \left(t\phi + s\xi \right)
    g_{z'}
=
    \sum_{z',z_b''} \frac{1}{z'!z_b''!} P_{z'\concat z_b''} (t\phi)
    s^{z_b''} \xi^{z_b''} g_{z'}
\nnb
&=
    \sum_{z}\frac{1}{z!} P_{z} (t\phi)
    \sum_{( z',z_b'') \in B (z)} \frac{z!}{z'!z_b''!}
    s^{z_b''} \xi^{z_b''} g_{z'}
=
    \pair{P,\sigma^{*}_\xi (s) g}_{t\phi}
,
\end{align}
which proves \eqref{e:sigstar}.  If $g \in \Phi^{(p)}$ then
differentiation of \eqref{e:sigmastardef} gives
\begin{equation}
\label{e:sigmastardiff}
    (\sigma^{*(m)}_\xi  g)_{z}
=
    \1_{z'\concat z_b''=z} \frac{(m+p)!}{p!}
    \xi^{z_b''} g_{z'}
,
\end{equation}
so $\sigma^{*(m)}_\xi g\in \Phi^{(m+p)}$.  Also, when $g \in \Phi^{(p)}$,
we may regard $F$ in \refeq{sigstarmid} as a polynomial and thus by
\refeq{sigstar} we obtain \eqref{e:sigstarmid} via differentiation
with respect to $s$ (with $\xi=\phi$).

By the triangle inequality and \eqref{e:plusnormass}, for $p \le
p_{\Ncal}$,
\begin{equation}
    \|\sigma^{*}_\xi (1) g\|_{\Phi^{(p)}}
\le
    \sumtwo{p',p_b'':}{p'+p_b''=p}
    \frac{p!}{p'! p_b''!}
    \|\xi\|_{\Phi}^{p_b''}
    \|g\|_{\Phi}
=
    \left(1 + \|\xi\|_{\Phi}\right)^{p}
    \|g\|_{\Phi}
.
\end{equation}
Since the left-hand side is zero for $p>p_{\Ncal}$, this proves
\eqref{e:sigstar0}.  For \eqref{e:sigstarm}, we only consider
the case $m+p \le p_\Ncal$ because otherwise the left-hand side is
zero. Also we can assume that $g \in\Phi^{(p)}$ because no other part
of $g$ can contribute to the left-hand side.  If $g \in\Phi^{(p)}$
and $m+p \le p_\Ncal$, then from \eqref{e:sigmastardiff} and
\eqref{e:plusnormass} we have
\begin{equation}
    \| \sigma^{*(m)}_\xi g\|_{\Phi^{(m+p)}}
\le
    \frac{(m+p)!}{p!}
    \|\xi\|_{\Phi}^{m}
    \| g \|_{\Phi}
.
\end{equation}
This proves \eqref{e:sigstarm}, and completes the proof.
\end{proof}

\begin{rk}
\label{rk:pair-deriv}
It follows from
Lemma~\ref{lem:Stm} that for $F \in \Ncal$, $g \in \Phi^{(p)}$,
and for $m+p \le p_\Ncal$,
\begin{equation}
\label{e:shift-phi}
    \left|
    \frac{d^{m}}{dt^{m}}
    \pair{F,g}_{T_{\phi+t\xi} }
    \right|
\le
    \frac{(m+p)!}{m!}
    \|F\|_{T_{\phi+t\xi} (\Phi)}
    \|\xi\|_{\Phi}^{m}
    \|g\|_{\Phi}
.
\end{equation}
To see this, note that as in the proof of Lemma~\ref{lem:Stm},
\eq
    \frac{d^{m}}{ds^{m}} \Big|_{0}
    \pair{F,g}_{T_{\phi+s\xi} (\Phi)}
    =
    \langle{F,\sigma_\xi^{*(m)}g\rangle}_{t\phi}.
\en
With \eqref{e:sigstarm}, this gives
\begin{align}
    \left|
    \frac{d^{m}}{ds^{m}} \Big|_{0}
    \pair{F,g}_{T_{\phi+s\xi} (\Phi)}
    \right|
&\le
    \frac{(m+p)!}{m!}
    \|F\|_{T_{\phi} }
    \|\xi\|_{\Phi}^{m}
    \|g\|_{\Phi}
,
\end{align}
and then \refeq{shift-phi} follows
by replacing $\phi$ with $\phi + t \xi$.
\end{rk}

For $F \in \Ncal$ and $t \ge 0$, we define $\tau_t F \in \Ncal$ by
replacing the fields $(\phi ,\psi)$ in $F$ by $(t\phi ,t\psi)$.  For a
positive integer $A$, $t \in [0,1]$ and $F \in \Ncal$, we define the
truncated Taylor expansion for $\tau_{t}F$ by
\begin{equation}
\label{e:sigmaA}
    \tau_{t}^{(\le A)}F
=
    \sum_{n=0}^{A}\frac{t^{n}}{n!}\tau^{(n)}_{0}F
,
\end{equation}
where $\tau_{t}^{(n)}F$ is the $n^{\rm th}$ derivative of $\tau_{t}F$
with respect to $t$.
The following lemma gives the result of Proposition~\ref{prop:T0K}.

\begin{lemma}
\label{lem:Tphi-pol-bound} For $F \in \Ncal$ and $A \le p_\Ncal$,
let $P=\tau_{1}^{(\le A)}F$. Then
\begin{align}
&
    \label{e:SsleA}
    \|P\|_{T_{\phi}}
\le
    \|F\|_{T_{0}}
    \big(1+ \|\phi\|_{\Phi}\big)^{A}
    ,
\end{align}
and if $F$ is a polynomial of degree $A$ then
\begin{equation}
\label{e:Fpoly}
    \|F\|_{T_{\phi}}
\le
     \|F\|_{T_{0}}\big(1+ \|\phi\|_{\Phi}\big)^{A}
.
\end{equation}
\end{lemma}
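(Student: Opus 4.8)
The plan is to work from the pairing that defines the $T_\phi$ semi-norm and use the shift identity of Lemma~\ref{lem:Stm} to move the $\phi$-dependence onto the test function, where the polynomial degree bound keeps it under control. First observe that $P=\tau_1^{(\le A)}F$ is precisely the degree-$\le A$ Taylor polynomial of $F$ in the combined boson--fermion field: unwinding \eqref{e:sigmaA} gives $P(\phi)=\sum_{x,y:\,p(x)+q(y)\le A}\frac{1}{x!y!}F_{x,y}(0)\,\phi^x\psi^y$, which is well defined since $F$ has at least $p_\Ncal\ge A$ continuous derivatives. Thus $P$ is a polynomial of degree at most $A\le p_\Ncal$, with $P_z(0)=F_z(0)$ whenever the sequence $z$ has length at most $A$ and $P_z\equiv 0$ whenever $z$ has length greater than $A$; in particular Lemma~\ref{lem:Stm} applies to $P$.

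Now fix $g\in B(\Phi)$. Applying the shift identity \eqref{e:sigstar} with base point $0$ and shift $\phi$ (i.e.\ with $t=0$, $s=1$, and the shift variable taken to be $\phi$) gives $\pair{P,g}_\phi=\pair{P,\sigma_\phi^{*}(1)g}_0$. Since $P_z\equiv 0$ for sequences of length exceeding $A$, only the components of $\sigma_\phi^{*}(1)g$ of length at most $A$ contribute to this pairing, and on those components the coefficients of $P$ agree with those of $F$. Hence, writing $\tilde g$ for the restriction of $\sigma_\phi^{*}(1)g$ to sequences of length at most $A$, we obtain $\pair{P,g}_\phi=\pair{F,\tilde g}_0$ and therefore $|\pair{P,g}_\phi|\le\|F\|_{T_0}\,\|\tilde g\|_{\Phi}$. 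It remains to bound $\|\tilde g\|_{\Phi}=\sup_{r\le A}\|(\sigma_\phi^{*}(1)g)^{(r)}\|_{\Phi^{(r)}}$; by \eqref{e:sigstar0} each term is at most $(1+\|\phi\|_{\Phi})^r\|g\|_{\Phi}$, and since $1+\|\phi\|_{\Phi}\ge 1$ and $r\le A$ this is at most $(1+\|\phi\|_{\Phi})^A\|g\|_{\Phi}\le(1+\|\phi\|_{\Phi})^A$.

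Combining these two estimates and taking the supremum over $g\in B(\Phi)$ yields $\|P\|_{T_\phi}\le\|F\|_{T_0}(1+\|\phi\|_{\Phi})^A$, which is \eqref{e:SsleA}. For \eqref{e:Fpoly}, observe that if $F$ is a polynomial of degree $A$ then $\tau_tF$ is a polynomial in $t$ of degree at most $A$, so its degree-$\le A$ Taylor polynomial in $t$ is exact; hence $P=\tau_1^{(\le A)}F=\tau_1F=F$, and \eqref{e:Fpoly} is the special case $P=F$ of \eqref{e:SsleA}. The one substantive point, and the only place the hypothesis $A\le p_\Ncal$ does more than guarantee well-definedness, is the truncation step: because $P$ has degree at most $A$, the sum over test-function components entering the pairing stops at length $A$, which is exactly what caps the factor $(1+\|\phi\|_{\Phi})^r$ at $r=A$ rather than letting it diverge with $r$; everything else is bookkeeping with the definitions and Lemma~\ref{lem:Stm}.
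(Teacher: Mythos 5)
Your proof is correct and follows essentially the same route as the paper: both move the $\phi$-dependence onto the test function via the shift identity $\pair{P,g}_\phi=\pair{P,\sigma_\phi^{*}(1)g}_0$ of Lemma~\ref{lem:Stm}, exploit the fact that $P$ pairs only with components of length at most $A$, and invoke \eqref{e:sigstar0} to obtain the factor $(1+\|\phi\|_{\Phi})^A$. The only cosmetic difference is that you replace $P$ by $F$ inside the truncated pairing, whereas the paper keeps $P$ and uses the inequality $\|P\|_{T_0}\le\|F\|_{T_0}$; these are interchangeable bits of bookkeeping.
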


\begin{proof}
The second claim is a consequence of the first because, in this case,
$F=\tau_{1}^{(\le A)}F$ by the uniqueness of Taylor expansions.  To
prove \eqref{e:SsleA}, we apply Lemma~\ref{lem:Stm} with $\xi
=\phi$, $t=0$ and $s=1$ to obtain
\begin{equation}
    \left|\pair{P ,g}_{\phi}\right|
=
    \left|\pair{P,\sigma^{*}_\phi (1) g}_{0}\right|
\le
   \|P\|_{T_{0}}\,\|\sigma^{*}_\phi (1) g\|_{\Phi}
.
\end{equation}
Since $\|P\|_{T_{0}}$ is a truncation of the sum of positive terms
that constitute $\|F\|_{T_{0}}$, it is the case that $\|P\|_{T_{0}}
\le \|F\|_{T_{0}}$.  Also, we need only consider the case where
$\sigma^{*}_\phi (1) g$ depends on at most $A$ variables, since otherwise
its pairing with $P$ vanishes.  It then follows from
Lemma~\ref{lem:Stm} with $\xi =\phi$ that
\begin{equation}
    \left|\pair{P ,g}_{\phi}\right|
\le
    \|F\|_{T_{0}}\,\left(1 + \|\phi\|_{\Phi}\right)^{A}
    \|g\|_{\Phi}
.
\end{equation}
Taking the supremum now over $g\in B(\Phi)$, we obtain \eqref{e:SsleA}
and the proof is complete.
\end{proof}

\subsection{Estimate with change of norm}
\label{sec:normchange}

The following lemma gives the result of
Proposition~\ref{prop:Tphi-bound}.

\begin{lemma}
\label{lem:Tphi-bound-bis} Let $A<p_\Ncal$ be a non-negative integer.
For $F \in\Ncal$, let $P=\tau_{1}^{(\le A)}F$.  Then
\begin{align}
    \|F\|_{T_{\phi}'}
    & \leq
    \left(1 + \|\phi\|_{\Phi'}\right)^{A+1}
    \left( \|P\|_{T_{0}'}
    +
    \rho^{(A+1)} \sup_{t \in [0,1]}
    \|F\|_{T_{t\phi}}  \right)
    \nnb
    & \leq
    \left(1 + \|\phi\|_{\Phi'}\right)^{A+1}
    \left( \|F\|_{T_{0}'}
    +
    \rho^{(A+1)} \sup_{t \in [0,1]}
    \|F\|_{T_{t\phi}}  \right).
\label{e:Tphicor}
\end{align}
\end{lemma}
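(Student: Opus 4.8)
The plan is to prove the first inequality of \refeq{Tphicor}; the second then follows because $\|P\|_{T_0'}\le\|F\|_{T_0'}$, since restricting any test function to sequences of length at most $A$ does not increase its $\Phi'$ norm and replaces its pairing with $F$ by its pairing with $P=\tau_{1}^{(\le A)}F$ (whose coefficients agree with those of $F$ in total degree $\le A$ and vanish above it). By \refeq{Tphi-r} applied to $\Phi'$, it suffices to bound $|\langle F,g\rangle_\phi|$ for a fixed $r\ge 0$ and $g\in B(\Phi'^{(r)})$, and I will treat $r\ge A+1$ and $r\le A$ separately.

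\textbf{The case $r\ge A+1$.} Here nothing is needed beyond the definitions: $|\langle F,g\rangle_\phi|\le\|F\|_{T_\phi}\,\|g\|_{\Phi^{(r)}}\le\tfrac12\rho^{(A+1)}\|F\|_{T_\phi}$, where the last step is the definition \refeq{rhodef1} of $\rho^{(A+1)}$ together with $r\ge A+1$ and $\|g\|_{\Phi'^{(r)}}\le 1$. Since $(1+\|\phi\|_{\Phi'})^{A+1}\ge 1$ and $\|P\|_{T_0'}\ge 0$, this is bounded by the right-hand side of \refeq{Tphicor}.

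\textbf{The main case $r\le A$.} Put $n=A-r\ge 0$ and apply Taylor's theorem with integral remainder to $h(t):=\langle F,g\rangle_{t\phi}$ on $[0,1]$, expanded to order $n$, so that $h(1)=\sum_{m=0}^{n}\frac1{m!}h^{(m)}(0)+R$ with $R=\frac1{n!}\int_0^1(1-t)^n h^{(n+1)}(t)\,dt$. By Lemma~\ref{lem:Stm} one has $h^{(m)}(t)=\langle F,\sigma_\phi^{*(m)}g\rangle_{t\phi}$ with $\sigma_\phi^{*(m)}g\in\Phi^{(r+m)}$, the hypothesis $m+r\le A+1\le p_\Ncal$ being in force throughout (here $A<p_\Ncal$ is used). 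For $m\le n$ the shifted test function $\sigma_\phi^{*(m)}g$ has length $r+m\le A$, so it pairs identically with $F$ and with the degree-$\le A$ polynomial $P$; summing these terms and using the identities \refeq{sigstar}--\refeq{sigstarmid} for $P$ identifies $\sum_{m=0}^{n}\frac1{m!}h^{(m)}(0)$ with $\langle P,g\rangle_\phi$, which by Proposition~\ref{prop:T0K} is at most $\|P\|_{T_0'}(1+\|\phi\|_{\Phi'})^{A}\le\|P\|_{T_0'}(1+\|\phi\|_{\Phi'})^{A+1}$.

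\textbf{The remainder.} The test function $\sigma_\phi^{*(n+1)}g$ appearing in $R$ lies in $\Phi^{(A+1)}$, i.e.\ has length exactly $A+1$. Estimating its primed norm by \refeq{sigstarm} (applied in the $\Phi'$ norm) gives $\|\sigma_\phi^{*(n+1)}g\|_{\Phi'^{(A+1)}}\le\frac{(A+1)!}{r!}\|\phi\|_{\Phi'}^{n+1}$, and then the definition \refeq{rhodef1} of $\rho^{(A+1)}$, applicable because the length $A+1$ is $\ge A+1$, yields $\|\sigma_\phi^{*(n+1)}g\|_{\Phi^{(A+1)}}\le\tfrac12\rho^{(A+1)}\frac{(A+1)!}{r!}\|\phi\|_{\Phi'}^{n+1}$. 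Hence, bounding $\|F\|_{T_{t\phi}}\le\sup_{t\in[0,1]}\|F\|_{T_{t\phi}}$,
\[
|R|\le \frac{1}{2\,(n+1)!}\,\frac{(A+1)!}{r!}\,\|\phi\|_{\Phi'}^{A+1-r}\,\rho^{(A+1)}\sup_{t\in[0,1]}\|F\|_{T_{t\phi}}
= \tfrac12\binom{A+1}{r}\|\phi\|_{\Phi'}^{A+1-r}\,\rho^{(A+1)}\sup_{t\in[0,1]}\|F\|_{T_{t\phi}},
\]
which is at most $(1+\|\phi\|_{\Phi'})^{A+1}\rho^{(A+1)}\sup_{t\in[0,1]}\|F\|_{T_{t\phi}}$ by the binomial theorem. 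Adding the polynomial and remainder bounds and taking the supremum over $g$ and $r$ gives \refeq{Tphicor}.

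\textbf{Where the difficulty is.} The computations—Taylor's theorem, the $\sigma^*$ identities of Lemma~\ref{lem:Stm}, and the factorial bookkeeping that collapses to $\binom{A+1}{r}$ and then to $(1+\|\phi\|_{\Phi'})^{A+1}$—are routine. The step that requires care is choosing the Taylor order so that the lengths of the shifted test functions straddle the cutoff $A$ exactly right: every sub-cutoff order must produce a test function of length $\le A$ so that $F$ and $P$ are interchangeable there, while the single remainder term must be carried entirely by test functions of length $A+1$, which is precisely the length at which the small factor $\rho^{(A+1)}$ can be extracted. Taking $n=A-r$ is what makes both requirements hold simultaneously.
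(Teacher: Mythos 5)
Your proof is correct and follows essentially the same route as the paper's: the same split of test functions at length $A+1$ versus length $\le A$, the same Taylor expansion in $t$ of the pairing via the $\sigma^{*}$ operators of Lemma~\ref{lem:Stm}, the same extraction of $\rho^{(A+1)}$ at length exactly $A+1$, and the same binomial collapse to $(1+\|\phi\|_{\Phi'})^{A+1}$. The only differences are cosmetic: the paper subtracts $P$ first and applies the Lagrange-form remainder bound to $R=F-P$ (whose low-order derivatives vanish at $t=0$), whereas you expand $F$ directly with an integral remainder and identify the truncated Taylor sum with $\pair{P,g}_{\phi}$ — the same observation in a different order.
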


\begin{proof} The second estimate follows from the first and
$\|P\|_{T_{0}'} \le \|F\|_{T_{0}'}$.  To prove the first inequality
let $R= F - P$. By the triangle inequality and
Lemma~\ref{lem:Tphi-pol-bound} it is sufficient to prove that
\begin{align}
\label{e:tautmGm}
    \|R\|_{T'_{\phi}}
\le
    \rho^{(A+1)}\,
    \left(1 + \|\phi\|_{\Phi'}\right)^{A+1}
    \sup_{t \in [0,1]}
    \|F\|_{T_{t \phi}}
.
\end{align}
For this, it suffices to show that for a test function $g \in\Phi$ we
have
\begin{align}
\label{e:tautmGmpair}
    |\pair{R,g}_\phi|
\le
    \rho^{(A+1)}\,
    \left(1 + \|\phi\|_{\Phi'}\right)^{A+1}
    \sup_{t \in [0,1]}
    \|F\|_{T_{t \phi}} \|g\|_{\Phi'}
.
\end{align}
We consider separately
the cases (i) $g_{z}=0$ for $z$ with $p=p (z) \le A$, and
(ii) $g_{z}=0$ except when $p=p(z) = 0,1,\dots ,A$.  Any $g$
can be decomposed into these two cases using
$g = g\1_{p > A}+ \sum_{r \le A}g\1_{p =r}$, and
\begin{equation}
    |
    \pair{R,g}_{\phi}
    |
\le
    |\pair{R,g\1_{p > A}}_{\phi}|
    +
    \sum_{r \le A}|\pair{R,g\1_{p =r}}_{\phi}|.
\end{equation}

For case (i), we simply note from \eqref{e:rhodef1} that
\begin{align}
    \left|
    \pair{R,g}_{\phi}
    \right|
&=
    \left|
    \pair{F,g}_{\phi}
    \right|
\le
    \|F\|_{T_{\phi}}\|g\|_{\Phi}
\le
    \|F\|_{T_{\phi}}\frac 12 \rho^{(A+1)}
    \|g\|_{\Phi'}
.
\end{align}
Note that the above right-hand side is at most half the right-hand
side of \eqref{e:tautmGmpair}.

For the more substantial case (ii), fix $g\in \Phi$ with $g_{z}=0$
supported on sequences of length exactly $p$ with some $p\in
\{0,1,\dots ,A \}$.  Let $f (t) = \pair{R,g}_{t\phi}$. By the Taylor
remainder formula, for any $m \le A+1$,
\begin{equation}
    \left| \pair{R,g}_{\phi}\right|
\le
    \frac{1}{m!} \sup_{t \in [0,1]} \left| f^{(m)} (t)  \right|
.
\end{equation}
By Lemma~\ref{lem:Stm} with $\xi =\phi$,
\begin{equation}
\label{e:fRg}
   f^{(m)} (t)
=
   \pair{R,\sigma_\phi^{*(m)}g}_{t\phi}
.
\end{equation}
Let $m=A+1-p$. Then we can
replace $R=F-P$ by $F$ in \eqref{e:fRg} because $\sigma_\phi^{*(m)}g$ is
supported on sequences of length $m+p=A+1$
by Lemma~\ref{lem:Stm}, whereas $P$ is a polynomial of
degree $A$. Therefore,
\begin{equation}
\label{e:ii-1}
    \left| \pair{R,g}_{\phi} \right|
\le
    \frac{1}{m!}\sup_{t \in [0,1]}
    \|F\|_{T_{t\phi}}
    \;
    \| \sigma_\phi^{*(m)}g\|_{\Phi^{(A+1)}}
.
\end{equation}
Since $\sigma_\phi^{*(m)}g$ is supported on sequences of length $A+1$, by
\eqref{e:rhodef1} we have
\begin{equation}
\label{e:ii-2}
    \| \sigma_\phi^{*(m)}g\|_{\Phi^{(A+1)}}
\le
    \frac 12 \rho^{(A+1)}\| \sigma_\phi^{*(m)}g\|_{\Phi'^{(A+1)}}
.
\end{equation}
It follows from \eqref{e:ii-1}--\eqref{e:ii-2}
and Lemma~\ref{lem:Stm} that
\begin{align}
    \left| \pair{R,g}_{\phi} \right|
&\le
    \frac 12 \rho^{(A+1)}
    \sum_{p=0}^A
    \binom{A+1}{p}
     \|\phi\|_{\Phi'}^{A+1-p}
     \|g\|_{\Phi'}
     \sup_{t \in [0,1]}
    \|F\|_{T_{t\phi}}
\nnb &\le
    \frac 12 \rho^{(A+1)}
    \left(
    1+ \|\phi\|_{\Phi'} \right)^{A+1}
     \|g\|_{\Phi'}
     \sup_{t \in [0,1]}
    \|F\|_{T_{t\phi}}
.
\end{align}
Combined with the estimate for case (i), this gives
\eqref{e:tautmGmpair} and completes the proof.
\end{proof}

\subsection{Contractive bound on \texorpdfstring{$\theta$}{theta}}
\label{sec:compsp0}

In this section, we prove Proposition~\ref{prop:derivs-of-tau-bis}.

Recall from the discussion above \eqref{e:phi-extended} that there is
a bijection between a subset of $\Lambdabold$ and $\Lambdabold'$,
written $x \mapsto x'$.  Recall from the discussion above
Proposition~\ref{prop:derivs-of-tau-bis} that species in $\Lambdabold$
and species in $\Lambdabold'$ are distinct, and are ordered in such a
way that a species from $\Lambdabold'$ occurs immediately following
its counterpart in $\Lambdabold$.  The \emph{forget} function
$f:\Lambdabold \sqcup \Lambdabold' \to \Lambdabold$ is defined by
setting $f (x')=x$ when $x' \in \Lambdabold'$ and $f (x)=x$ when $x
\in \Lambdabold$.  We extend $f$ to a map from
$(\overrightarrow{\Lambdabold \sqcup \Lambdabold'})^{*}$ to
$\vec\Lambdabold^{*}$ by letting $f$ act componentwise on sequences.
We define a map $\theta^{*}:
\Phi(\Lambdabold \sqcup \Lambdabold') \to \Phi (\Lambdabold)$ by
setting
\begin{equation}
\label{e:thetastar}
    (\theta^{*}g)_{z}
=
    \sumtwo{v \in (\overrightarrow{\Lambdabold \sqcup \Lambdabold'})^{*}:}
    {f (v) = z}
    \frac{z!}{v!}g_{v}.
\end{equation}
By definition, the $v!$ appearing in the above equation is equal
to $u!u'!$, where $u$ and $u'$ are respectively the subsequences of
$v$ drawn from $\Lambdabold$ and $\Lambdabold'$.

\begin{lemma}
\label{lem:thetaadj}
For $F \in \Ncal(\Lambdabold)$, $g\in \Phi(\Lambdabold \sqcup \Lambdabold')$,
$\phi \in \R^{\Lambdabold_b}$ and
$\xi \in \R^{\Lambdabold_b'}$,
\begin{align}
\label{e:theta-dual}
    \pair{\theta F,g}_{\phi \sqcup \xi}
&=
    \pair{F,\theta^{*}g}_{\phi + \xi}
.
\end{align}
\end{lemma}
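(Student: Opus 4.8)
The plan is to reduce \refeq{theta-dual} to a coefficient-by-coefficient identity, which is then checked by treating the bosonic and fermionic parts separately. First I would substitute the definition \refeq{thetastar} of $\theta^{*}$ into the right-hand side of \refeq{theta-dual} and interchange the two summations:
\begin{align}
    \pair{F,\theta^{*}g}_{\phi+\xi}
    & =
    \sum_{z\in\vec\Lambdabold^{*}}\frac{1}{z!}\,F_{z}(\phi+\xi)
    \sumtwo{v \in (\overrightarrow{\Lambdabold \sqcup \Lambdabold'})^{*}:}{f (v) = z}
    \frac{z!}{v!}\,g_{v}
    \nnb & =
    \sum_{v \in (\overrightarrow{\Lambdabold \sqcup \Lambdabold'})^{*}}
    \frac{1}{v!}\,F_{f(v)}(\phi+\xi)\,g_{v},
\end{align}
where in the last step the pairs $(z,v)$ with $f(v)=z$ are reindexed by $v$ alone. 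Since the left-hand side of \refeq{theta-dual} equals $\sum_{v}\frac{1}{v!}(\theta F)_{v}(\phi\sqcup\xi)\,g_{v}$ by definition of the pairing, it suffices to prove the identity
\begin{equation}
    (\theta F)_{v}(\phi\sqcup\xi) = F_{f(v)}(\phi+\xi)
    \qquad\text{for every } v\in(\overrightarrow{\Lambdabold\sqcup\Lambdabold'})^{*}.
\end{equation}
Both sides are linear in $F$, so I would reduce to the case $F = G\,\psi^{y_{0}}$, where $G \in \Rcal(\Lambdabold_{b})$ is purely bosonic and $y_{0}=(y_{1},\dots,y_{q})$ is a fixed sequence of fermion labels.

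For such an $F$, write $v=(v_{b},v_{f})$ for the decomposition of $v$ into its boson and fermion subsequences. Since $\theta$ is an algebra homomorphism, $\theta F=(\theta G)\,(\theta\psi^{y_{0}})$ with $\theta\psi^{y_{0}}=\prod_{j=1}^{q}(\psi_{y_{j}}+\psi_{y_{j}'})$; because $\theta G$ involves no fermion generators and $\theta\psi^{y_{0}}$ has no boson-field dependence, the coefficient $(\theta F)_{v}$ (in the sense of \refeq{Fxyphi}) factorises as $(\theta F)_{v}(\phi\sqcup\xi)=(\theta G)_{v_{b}}(\phi\sqcup\xi)\,(\theta\psi^{y_{0}})_{v_{f}}$, and likewise $F_{f(v)}(\phi+\xi)=G_{f(v_{b})}(\phi+\xi)\,(\psi^{y_{0}})_{f(v_{f})}$. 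The claim thus splits into a bosonic identity $(\theta G)_{v_{b}}(\phi\sqcup\xi)=G_{f(v_{b})}(\phi+\xi)$ and a fermionic identity $(\theta\psi^{y_{0}})_{v_{f}}=(\psi^{y_{0}})_{f(v_{f})}$. The bosonic identity is a chain-rule computation: by Definition~\ref{def:theta-new} the coefficient function of $\theta G$ is $(\phi',\xi')\mapsto G(\phi'+\xi')$ with $(\phi'+\xi')_{x}=\phi'_{x}+\xi'_{x'(x)}$, so $\partial/\partial\phi'_{u}$ acts on it as $(\partial_{u}G)(\phi'+\xi')$ and $\partial/\partial\xi'_{u'}$ as $(\partial_{f(u')}G)(\phi'+\xi')$; differentiating successively in the labels listed in $v_{b}$ and setting $(\phi',\xi')=(\phi,\xi)$ produces, via \refeq{Fxyphi}, exactly $G_{f(v_{b})}$ evaluated at $\phi+\xi$ (external components, for which no prime exists, cause no difficulty since $\xi_{x'(x)}$ is then set to zero).

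The fermionic identity is the combinatorial core, and I expect the sign bookkeeping there to be the main obstacle. Expanding $\prod_{j=1}^{q}(\psi_{y_{j}}+\psi_{y_{j}'})$ produces, for each choice of whether each of the $q$ factors contributes its unprimed or its primed generator, a monomial whose index sequence $w\in(\Lambdabold_{f}\sqcup\Lambdabold_{f}')^{*}$ satisfies $f(w)=y_{0}$. Permuting the generators so that the species occur in order --- i.e.\ into a sequence in $(\overrightarrow{\Lambdabold_{f}\sqcup\Lambdabold_{f}'})^{*}$, with each primed species placed immediately after its unprimed counterpart as fixed above Proposition~\ref{prop:derivs-of-tau-bis} --- introduces a sign equal to $\sgn$ of the fermionic part of that permutation. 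The point to verify is that this is precisely the sign carried by the standard-form coefficient $(\psi^{y_{0}})_{z}$ under the same permutation; this uses that $f$ commutes with permutations applied componentwise, together with $(\psi^{y_{0}})_{\sigma z}=\sgn(\sigma_{f})(\psi^{y_{0}})_{z}$, the antisymmetry also enjoyed by $(\theta\psi^{y_{0}})_{v_{f}}$. Collecting the resulting contributions and accounting for the factorials $v_{f}!$ yields $(\theta\psi^{y_{0}})_{v_{f}}=(\psi^{y_{0}})_{f(v_{f})}$, which together with the bosonic identity gives the coefficient identity, and hence \refeq{theta-dual}.
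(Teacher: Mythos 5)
Your proposal is correct and follows essentially the same route as the paper: both arguments reduce \refeq{theta-dual} to the coefficient identity $(\theta F)_{v}=F_{f(v)}(\phi+\xi)$ and then resum over the fibres of the forget map $f$, the only difference being that the paper verifies that identity by Taylor-expanding $F_{z}$ and reordering species all at once, whereas you reduce by linearity to monomials $G\,\psi^{y_{0}}$ and check the bosonic (chain-rule) and fermionic (sign) parts separately. Both verifications rest on the same two facts, so this is a presentational rather than a substantive difference.
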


\begin{proof}
First, we compute the coefficients $(\theta F)_v$ for
$v \in (\overrightarrow{\Lambdabold \sqcup \Lambdabold'})^{*}$,
which is what is relevant for the pairing of $\theta F$ with $g$.
By Definition~\ref{def:theta-new},
\begin{equation}
    \theta F
=
    \sum_{z_{f} \in \vec\Lambdabold_{f}^{*}}
    \frac{1}{z_{f}!}F_{z_{f}} (\phi+\xi) ( \psi + \psi')^{z_{f}}
.
\end{equation}
We expand $F_{z_{f}} ((\phi+\xi)+ (\hat\phi+\hat\xi))$ in
a power series in $\hat\phi+\hat\xi$ to obtain
\begin{equation}
    \theta F
=
    \sum_{z \in \vec\Lambdabold^{*}}
    \frac{1}{z!}F_{z} (\phi+\xi) (\hat\phi+\hat\xi)^{z_b} ( \psi + \psi')^{z_{f}}
.
\end{equation}
Now we expand the binomials on the right-hand side and reorder
the species within both the bosonic and fermionic products.
We reorder the subscript on $F_z$ in exactly the same way; then
no sign change occurs.  From this, we can read off the coefficients
\eq
    (\theta F)_v = F_{f(v)}(\phi+\xi).
\en
We abbreviate the right-hand side as $F_{f(v)}=F_{f(v)}(\phi+\xi)$.
Then
\begin{equation}
    \pair{\theta F,g}_{\phi \sqcup \xi}
    =
    \sum_{v \in (\overrightarrow{\Lambdabold \sqcup \Lambdabold'})^{*}}
    \frac{1}{v!}
    F_{f(v)} g_v
    =
    \sum_{z\in \vec\Lambdabold^{*}}
    \frac{1}{z!}
    F_{z}
    \sum_{v:f (v) = z}
    \frac{z!}{v!}
    g_{v}
    =
    \pair{F,\theta^{*}g}_{\phi + \xi},
\end{equation}
and the proof is complete.
\end{proof}

\begin{lemma}
\label{lem:thetastar} The map $\theta^{*} : \Phi(w\sqcup w') \to \Phi
(w+w')$ is a contraction, namely, for $g \in \Phi(w\sqcup w')$,
\begin{align}
\label{e:theta-bound11}
    \| \theta^{*}g \|_{\Phi(w+w')}
&\le
    \|g\|_{\Phi(w\sqcup w')}
    .
\end{align}
\end{lemma}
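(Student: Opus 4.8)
The plan is to bound the $\Phi(w+w')$ norm of $\theta^{*}g$ directly against $\|g\|_{\Phi(w\sqcup w')}$, term by term, using the explicit formula \eqref{e:thetastar} for $\theta^{*}$ (which is the adjoint of $\theta$ by Lemma~\ref{lem:thetaadj}) together with the definition of the $\Phi$ norms in Definition~\ref{def:gnorm-general}; this follows the adjoint-operator pattern already used for Proposition~\ref{prop:prodT} and its relatives. Fix $g\in\Phi(w\sqcup w')$, an integer $r\ge 0$, a sequence $z=(z_{1},\dots ,z_{r})\in\vec\Lambdabold^{(r)}$, and $\alpha=(\alpha_{1},\dots ,\alpha_{r})\in A^{r}$. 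It suffices to prove
\[
    |\nabla^{\alpha}(\theta^{*}g)_{z}|
    \le
    \|g\|_{\Phi(w\sqcup w')}\,(w+w')_{\alpha,z}
    =
    \|g\|_{\Phi(w\sqcup w')}\prod_{k=1}^{r}\bigl(w(\alpha_{k},z_{k})+w'(\alpha_{k},z_{k}')\bigr)
\]
and then to take the supremum over $(\alpha,z)\in\Acal$; the conventions $0^{-1}=\infty$, $\infty^{-1}=0$ are harmless, since if a factor $w(\alpha_{k},z_{k})+w'(\alpha_{k},z_{k}')$ is $\infty$ the bound is vacuous, and if one vanishes then, as will be clear below, every term contributing to the left-hand side vanishes.

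First I would describe the fibre $f^{-1}(z)$. Because $f$ only forgets primes and the sequences in $(\overrightarrow{\Lambdabold\sqcup\Lambdabold'})^{*}$ are species-ordered, a preimage $v$ of $z$ is obtained, separately for each original species $i$, by carrying over an initial segment of the species-$i$ subsequence of $z$ unchanged and the complementary final segment as primed entries. Thus $f^{-1}(z)$ is finite, and species by species the coefficient $z!/v!$ appearing in \eqref{e:thetastar} is exactly the binomial coefficient for that split. Since $\nabla^{\alpha}$ is linear and $z!/v!$ depends only on the lengths of these subsequences, $\nabla^{\alpha}$ commutes past the sum defining $\theta^{*}g$, giving
\[
    \nabla^{\alpha}(\theta^{*}g)_{z}
    =
    \sum_{v:\,f(v)=z}\frac{z!}{v!}\,(\nabla^{\alpha}g)_{v},
\]
where the operator $\nabla^{\alpha_{k}}$ acts on the $k$-th slot of $v$, a primed entry $z_{k}'$ being identified with $z_{k}$ via $x\mapsto x'$. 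Bounding each summand through the definition of $\|g\|_{\Phi(w\sqcup w')}$ gives $|(\nabla^{\alpha}g)_{v}|\le\|g\|_{\Phi(w\sqcup w')}\,(w\sqcup w')_{\alpha,v}$, where $(w\sqcup w')_{\alpha,v}=\prod_{k=1}^{r}(w\sqcup w')(\alpha_{k},v_{k})$ has a factor $w(\alpha_{k},z_{k})$ at each unprimed slot and $w'(\alpha_{k},z_{k}')$ at each primed slot.

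Substituting this bound reduces the whole statement to the combinatorial identity
\[
    \sum_{v:\,f(v)=z}\frac{z!}{v!}\,(w\sqcup w')_{\alpha,v}
    =
    \prod_{k=1}^{r}\bigl(w(\alpha_{k},z_{k})+w'(\alpha_{k},z_{k}')\bigr),
\]
which I would prove by expanding the right-hand product one species at a time: for a fixed original species $i$, multiplying out the factors over the species-$i$ entries reproduces, by the multinomial theorem, precisely the initial-segment/final-segment splits of that subsequence, each weighted by the matching coefficient $z!/v!$ --- in effect a repeated application of the product inequality \eqref{e:plusnormass}. Carrying out this species-ordered bookkeeping cleanly (keeping straight the ordering of the combined species, the identification $x\mapsto x'$, and the species-wise factorials) is the one step that requires genuine care; the rest is routine. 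Taking the supremum over $(\alpha,z)$ then yields \eqref{e:theta-bound11} and completes the proof.
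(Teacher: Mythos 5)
Your argument is correct and is essentially the paper's own proof: the paper likewise commutes $\nabla^{\alpha}$ past the sum in \eqref{e:thetastar}, bounds $|\nabla^{\alpha}g_{v}|$ by $\|g\|_{\Phi(w\sqcup w')}(w\sqcup w')_{\alpha,v}$, and identifies $\sum_{v:f(v)=z}\frac{z!}{v!}(w\sqcup w')_{\alpha,v}$ with $(w+w')_{\alpha,z}$ via the binomial theorem, with $z!/v!$ counting the desegregations of the species-segregated $v$. Your species-by-species description of the fibre $f^{-1}(z)$ as initial-segment/final-segment splits, with $z!/v!$ the corresponding binomial coefficient, is exactly the paper's justification of that final step.
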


\begin{proof}
In the following,
$v\in(\overrightarrow{\Lambdabold \sqcup \Lambdabold'})^{*}$ and
$z\in\vec\Lambdabold^{*}$.
By \refeq{thetastar},
\begin{align}
    \label{e:theta-bound6}
    \left|
    (w + w')_{\alpha ,z}^{-1} (\nabla^{\alpha} \theta^{*}g)_{z}
    \right|
&\le
    (w + w')_{\alpha ,z}^{-1}
    \sum_{v: f (  v) = z}
    \frac{z!}{v!}
    \left| \nabla^{\alpha} g_{v} \right|
\nnb
&\le
    \|g\|_{\Phi(w\sqcup w')}(w + w')_{\alpha ,z}^{-1}
    \sum_{v: f (  v) = z}
    \frac{z!}{v!}
    (w\sqcup w')_{\alpha ,v}
    .
\end{align}
The final sum equals $(w + w')_{\alpha ,z}$ by the binomial
theorem; to see this we recall that $v$ has species segregated so
that in particular primed and unprimed variables are not interleaved, and
the binomial coefficient $z!/v!$ accounts for the number of ways
to desegregate these variables.
Then \refeq{theta-bound11} follows
by taking the supremum over
$(\alpha,z) \in \Acal$.
\end{proof}

\begin{proof}[Proof of Proposition~\ref{prop:derivs-of-tau-bis}]
Let $g \in B (\Phi (w\sqcup w'))$.
By Lemma~\ref{lem:thetaadj},
\begin{equation}
    \left|
    \pair{\theta F,g}_{\phi \sqcup \xi}
    \right|
    =
    \left|
    \pair{ F,\theta^* g}_{\phi + \xi}
    \right|
\le
    \|F \|_{T_{\phi + \xi}(w+w')}
    \|\theta^{*}g\|_{\Phi (w+ w')}
.
\end{equation}
Taking the supremum over $g \in B (\Phi (w\sqcup w'))$ and applying
Lemma~\ref{lem:thetastar}, we have
\begin{equation}
    \|\theta F\|_{T_{\phi \sqcup \xi} (w\sqcup w')}
\le
    \|F\|_{T_{\phi +\xi}(w+w')}
.
\end{equation}
This proves \eqref{e:theta-bd1}.
\end{proof}

\section{Integration norm estimates}
\label{sec:integration}

In this section, we prove Propositions~\ref{prop:Etau-bound},
\ref{prop:EK} and \ref{prop:EG2}.

\subsection{Laplacian norm estimates}
\label{sec:heat}

In this section, we prove Proposition~\ref{prop:Etau-bound}.
For this, it suffices to prove the following
lemma, which slightly improves \refeq{DCK} by reducing the
factor $A^2$ to $A(A-1)$ on its right-hand side.

\begin{lemma}
\label{lem:Etau-boundzz}
If $F \in \Ncal$ is a polynomial of degree at most $A$, with $A \le p_\Ncal$,
then
\begin{equation}
\label{e:DCKzz}
    \frac 12 \|\Delta_{\Cbf} F\|_{T_{\phi}}
    \le
    \binom{A}{2}\,\|\Cbf\|_{\Phi}\,\|F\|_{T_{\phi}}.
\end{equation}
\end{lemma}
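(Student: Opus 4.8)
The strategy is the one already announced in the excerpt's introduction to Section~\ref{sec:tphisemi}: express the $T_\phi$ semi-norm via its defining pairing $\pair{\cdot,\cdot}_\phi$ from \refeq{Kgpairdef}, move the Laplacian onto the test function by an adjoint operation, and then bound the resulting operation on test functions. Concretely, write $F(\phi)=\sum_{z}\frac{1}{z!}F_z(\phi)\phi^z$ in the polynomial form of Definition~\ref{def:Npoly} (legitimate since $F$ is a polynomial of degree $\le A$). Since $\Delta_{\Cbf}$ is a second-order constant-coefficient differential operator in the $\phi_u$ and $\psi_u$, and since by \refeq{LapC} it acts as a sum over pairs $(u,v)$ of $\Cbf_{b;u,v}\partial_{\phi_u}\partial_{\phi_v}$ plus the fermionic analogue, each term lowers the total degree by exactly $2$. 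So $\Delta_{\Cbf}F$ is a polynomial of degree at most $A-2$.

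The key computation is to identify, for a test function $g$, an adjoint test function $\Delta_{\Cbf}^*g$ such that $\pair{\Delta_{\Cbf}F,g}_\phi = \pair{F,\Delta_{\Cbf}^*g}_\phi$. Since $\Cbf$ itself is being regarded as a test function in $\Phi$ (concentrated on length-$2$ sequences, as described just before Proposition~\ref{prop:Etau-bound}), the natural candidate is essentially ``concatenation with $\Cbf$'': $(\Delta_{\Cbf}^*g)_z = c_z \sum (\text{terms pairing two slots of an extended sequence } z\concat(u,v) \text{ against } \Cbf_{uv})$, with combinatorial factors coming from the $1/z!$ normalisations and from how many ways a length-$(r+2)$ sequence restricts to a given length-$r$ sequence. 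The count of such ways is exactly what produces the $\binom{A}{2}$-type factor: each term of $\Delta_{\Cbf}F$ corresponds to choosing an unordered pair of derivative slots out of at most $A$, and for a polynomial of degree $\le A$ the number of such pairs is at most $\binom{A}{2}$. One then estimates $\|\Delta_{\Cbf}^*g\|_\Phi \le \binom{A}{2}\,\|\Cbf\|_\Phi\,\|g\|_\Phi$ using the submultiplicativity of the $\Phi$ norm under the relevant concatenation, i.e.\ the inequalities \refeq{plusnormass}--\refeq{plusnormass-2}: appending the two-slot test function $\Cbf$ contributes a factor $\|\Cbf\|_\Phi$, and the combinatorial multiplicity contributes $\binom{A}{2}$. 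Taking the supremum over $g\in B(\Phi)$ gives \refeq{DCKzz}, and then \refeq{DCK} follows trivially with $A^2\ge A(A-1)=2\binom{A}{2}$. The bound \refeq{eDC} on $e^{t\Delta_{\Cbf}}$ is then immediate from the series expansion and the product/linearity of the norm, as the remark after Proposition~\ref{prop:Etau-bound} already notes.

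The main obstacle will be bookkeeping the combinatorial factors correctly: getting the precise constant $\binom{A}{2}$ rather than something larger requires being careful that, although the sum over pairs $(u,v)$ in \refeq{LapC} ranges over all of $\Lambdabold_b$ (and is ordered), the action on the monomial $\phi^z$ only produces nonzero contributions when both indices appear among the $z_k$, and the resulting coefficient of the reduced monomial is a sum of at most $\binom{p(z)}{2}\le\binom{A}{2}$ terms $\Cbf_{z_i,z_j}$, each bounded in absolute value by $\|\Cbf\|_\Phi$ after the appropriate restriction of the difference operators. One must also handle the fermionic part, where antisymmetry and the anticommuting derivatives $i_u$ introduce signs; these signs are harmless because the $T_\phi$ semi-norm is insensitive to them (it is built from absolute values of pairings), but one should state this cleanly. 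A secondary subtlety is that the derivative operators $\nabla^{\alpha_k}$ acting on the extra $\Cbf$-slots must be accounted for in the norm $\|\Cbf\|_\Phi$, which is exactly why $\Cbf$ is given its $\Phi$ norm (with difference operators) rather than a sup norm; invoking Definition~\ref{def:gnorm-general} and the weight submultiplicativity $w_{\alpha,z}=\prod_k w_{\alpha_k,z_k}$ closes this gap. Once these points are pinned down, the proof is a short adjoint-plus-counting argument in the established style.
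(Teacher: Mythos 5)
Your proposal is correct and follows essentially the same route as the paper: the paper defines the adjoint $(\Cbf^{*}g)_{v}=\1_{u\concat z=v}\frac{v!}{u!z!}\Cbf_{u}g_{z}$ (with $u$ the first two coordinates of $v$), verifies $\frac12\pair{\Delta_{\Cbf}F,g}_{\phi}=\pair{F,\Cbf^{*}g}_{\phi}$, bounds the combinatorial factor by $\binom{A}{2}$, and applies \refeq{plusnormass} to get $\|\Cbf^{*}g\|_{\Phi}\le\binom{A}{2}\|\Cbf\|_{\Phi}\|g\|_{\Phi}$ before taking the supremum over $g\in B(\Phi)$. This is exactly your adjoint-plus-counting argument, including the identification of $\|\Cbf\|_{\Phi}$ as the right quantity to absorb the difference operators acting on the two extra slots.
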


\begin{proof}
For $g \in \Phi$ and $v \in \vec\Lambdabold^*$, let
\begin{equation}
\label{e:Cstarg}
    (\Cbf^{*}g)_{v}
=
    \1_{u\concat z=v}
    \frac{v!}{u!z!} \Cbf_{u} g_{z}
\end{equation}
if the length of $v$ is at most $A$, and otherwise $(\Cbf^{*}g)_{v}=0$.
Here $u$ denotes the first two coordinates of $v$ and $z$ denotes
the others; in particular $(\Cbf^{*}g)_{v} =0$ if the length of $v$ is
less than $2$.
Then, by the definition of the Laplacian in \refeq{LapC},
\begin{align}
    \frac{1}{2} \pair{\Delta_{\Cbf} F,g}_{\phi}
&=
    \frac{1}{2}
    \sum_{z} \frac{1}{z!} (\Delta_{\Cbf} F(\phi))_{z} g_{z}
=
    \sum_{u,z} \frac{1}{u!z!}  \Cbf_{u} F_{u \concat z}(\phi) g_{z}
\nnb &
=
    \sum_{v}\frac{1}{v!} F_{v}(\phi)
    (\Cbf^{*}g)_{v}
=
    \pair{F,\Cbf^{*}g}_{\phi}
.
\label{e:LapCpair}
\end{align}
Since $F$ is a polynomial of degree at most $A$, $F_v=0$ as soon
as the length of $v$ exceeds $A$;
the fact that $A \le p_\Ncal$ has been used in the last equality.

The binomial coefficient in \refeq{Cstarg} is at most $\binom{A}{2}$.
With \refeq{plusnormass}, this gives
\begin{align}
    \|\Cbf^{*}g\|_{\Phi}
&\le
    \binom{A}{2} \|\Cbf\|_{\Phi} \|g \|_{\Phi}
\end{align}
and hence
\begin{align}
    \frac{1}{2} \left| \pair{\Delta_{\Cbf} F,g}_{\phi}  \right|
&\le
    \|F\|_{T_{\phi}}
    \|\Cbf^{*}g\|_{\Phi}
\le
    \|F\|_{T_{\phi}}
    \binom{A}{2} \|\Cbf\|_{\Phi} \|g \|_{\Phi}
,
\end{align}
and \refeq{DCKzz} follows by taking the supremum over $g \in B (\Phi)$.
\end{proof}

\subsection{The main integration estimate}

In this section, we prove Proposition~\ref{prop:EK}.
For this, we adopt the conjugate fermion fields
setting described in Section~\ref{sec:cff},
with fields $\psi, \bar\psi$.
As noted below the statement of Proposition~\ref{prop:EK}, it suffices
to prove the bound \refeq{EKz}.
The proof is based on the following lemma,
which is known as \emph{Gram's inequality}.
A proof of Lemma~\ref{lem:Gramineq}
can be found in \cite[Lemma~1.33]{FKT02}.

\begin{lemma}
\label{lem:Gramineq}
Let $H$ be a Hilbert space with inner product $\langle \cdot, \cdot \rangle$.
If $u_i,v_i \in H$ for $i=1,\ldots, n$, then
\eq
    \left| \,\det \left( \langle u_i,v_j\rangle \right)_{1 \le i,j \le n} \right|
    \le
    \prod_{i=1}^n \langle u_i, u_i \rangle^{1/2}\langle v_i, v_i \rangle^{1/2}.
\en
\end{lemma}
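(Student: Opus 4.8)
The plan is to give a short self-contained proof of this Gram determinant inequality (one may of course simply cite \cite[Lemma~1.33]{FKT02}). Since only the $2n$ vectors $u_1,\dots,u_n,v_1,\dots,v_n$ enter the statement, I would first reduce to finite dimensions by replacing $H$ with the span $V$ of these vectors, of dimension $d\le 2n$, and fixing an orthonormal basis $e_1,\dots,e_d$ of $V$. Writing $u_i=\sum_k a_{ki}e_k$ and $v_j=\sum_k b_{kj}e_k$ assembles two $d\times n$ coordinate matrices $A=(a_{ki})$ and $B=(b_{kj})$, and with the standard convention $\langle u_i,v_j\rangle=\sum_k \overline{a_{ki}}\,b_{kj}$ the matrix in the statement is exactly $M=A^*B$ (the opposite convention merely conjugates $M$, which is immaterial since we bound $|\det M|$). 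The degenerate cases are harmless: if some $u_i=0$ then the $i$-th row of $M$ vanishes and both sides are $0$, and likewise for $v_i$.

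The heart of the argument is to reduce this ``rectangular'' case $u\neq v$ to the bound for an ordinary (square) Gram matrix. I would do this via the Cauchy--Binet formula applied to $\det(A^*B)$, which gives $\det M=\sum_S \overline{\det A_S}\,\det B_S$, where $S$ ranges over the $n$-element subsets of $\{1,\dots,d\}$ and $A_S,B_S$ denote the $n\times n$ submatrices formed by the rows of $A,B$ indexed by $S$. Applying the Cauchy--Schwarz inequality to this finite sum yields
\[
    |\det M| \le \Big(\sum_S |\det A_S|^2\Big)^{1/2}\Big(\sum_S |\det B_S|^2\Big)^{1/2}.
\]
A second application of Cauchy--Binet identifies each factor: $\sum_S|\det A_S|^2=\det(A^*A)$, and $A^*A=(\langle u_i,u_j\rangle)_{i,j}$ is precisely the Gram matrix of the $u_i$ (and similarly $B^*B$ for the $v_j$).

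It then remains to bound a genuine Gram determinant by the product of its diagonal entries, $\det(\langle u_i,u_j\rangle)\le\prod_i\langle u_i,u_i\rangle$. This is Hadamard's inequality for positive-semidefinite matrices, which I would justify by Gram--Schmidt: the Gram determinant equals the squared volume $\prod_i\|u_i^\perp\|^2$, where $u_i^\perp$ is the component of $u_i$ orthogonal to $\mathrm{span}(u_1,\dots,u_{i-1})$, and $\|u_i^\perp\|\le\|u_i\|$. Combining the two factors gives $|\det M|\le(\prod_i\|u_i\|^2)^{1/2}(\prod_i\|v_i\|^2)^{1/2}=\prod_i\langle u_i,u_i\rangle^{1/2}\langle v_i,v_i\rangle^{1/2}$, as required. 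The only genuine obstacle is this last Gram/Hadamard step; everything before it is the bookkeeping of writing $M=A^*B$ and invoking Cauchy--Binet twice with Cauchy--Schwarz in between.

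As an alternative that avoids the combinatorial sum, I note that one can instead use the thin QR (Gram--Schmidt) factorisations $A=Q_AR_A$ and $B=Q_BR_B$, with $Q_A,Q_B$ having orthonormal columns and $R_A,R_B$ upper triangular satisfying $|(R_A)_{ii}|=\|u_i^\perp\|\le\|u_i\|$ and likewise for $R_B$. Then $\det M=\overline{\det R_A}\,\det(Q_A^*Q_B)\,\det R_B$, and since $Q_A,Q_B$ are isometries the matrix $Q_A^*Q_B$ is a contraction, with all singular values at most $1$, so $|\det(Q_A^*Q_B)|\le 1$. This gives $|\det M|\le|\det R_A|\,|\det R_B|\le\prod_i\|u_i\|\,\|v_i\|$, the same conclusion, with the contraction estimate now playing the role that Hadamard's inequality plays in the first approach.
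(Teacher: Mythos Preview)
Your proof is correct. Both routes you outline---Cauchy--Binet followed by Cauchy--Schwarz and Hadamard, or the thin QR factorisation with the contraction bound on $Q_A^*Q_B$---are valid and standard. The reduction to finite dimensions is sound, the degenerate cases are handled, and the case $n>d$ (where $\det M=0$ automatically since $A^*B$ has rank at most $d$) is covered implicitly by the Cauchy--Binet sum being empty.

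As for comparison with the paper: there is nothing to compare. The paper does not prove this lemma; it simply states it and cites \cite[Lemma~1.33]{FKT02} for a proof. Your write-up therefore supplies a self-contained argument where the paper gives only a reference, and you already note that one could instead just cite FKT02. Either of your two approaches would serve as a perfectly good replacement for the bare citation.
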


Recall that $C_f$ can be interpreted as a test function as described
above the statement of Proposition~\ref{prop:EK}.
Let $E$ be the test function defined
by $E_z = \Ebold_{\Cbf_f} \psi^z$
for $z \in \vec\Lambdabold_f^*$,
with the convention $E_\varnothing =1$.  For $z \in \vec\Lambdabold^*
\setminus \vec\Lambdabold_f^*$ we set $E_z=0$.

\begin{lemma}
\label{lem:ell2}
If
$\|C_f\|_{\Phi} \le 1$ then $\|E\|_\Phi \le 1$.
\end{lemma}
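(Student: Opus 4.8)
The plan is to reduce the bound $\|E\|_\Phi \le 1$ to Gram's inequality (Lemma~\ref{lem:Gramineq}) applied to the explicit determinant formula \refeq{JF} for fermionic expectations in the conjugate-field setting. First I would recall from Section~\ref{sec:cff} that the only nonzero values of $E_z$ occur when $z$ consists of equal numbers $p$ of conjugate generators $\bar\psi$ and non-conjugate generators $\psi$; writing $z$ (after the permutation-symmetrisation built into the pairing, cf.\ Proposition~\ref{prop:pairingS}) in the form corresponding to $F=\bar\psi_{k_1}\psi_{l_1}\cdots\bar\psi_{k_p}\psi_{l_p}$, the value $E_z = J_F$ is the $p\times p$ determinant $\det C_{f;k_1,\ldots,k_p;l_1,\ldots,l_p}$ whose $(r,s)$ entry is $C_{f;k_r,l_s}$. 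The $\Phi$ norm of $E$ is the supremum over $(\alpha,z)\in\Acal$ of $w_{\alpha,z}^{-1}|\nabla^\alpha E_z|$, where the multi-difference operator $\nabla^\alpha = \nabla^{\alpha_1}_{z_1}\cdots\nabla^{\alpha_p}_{z_p}$ acts on the $p$ arguments of $E_z$, i.e.\ on the row/column indices $k_1,\ldots,k_p,l_1,\ldots,l_p$ of the determinant.

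The key step is to represent each matrix entry $C_{f;k_r,l_s}$, after applying the appropriate finite differences, as an inner product in a Hilbert space, so that Gram's inequality applies. Concretely, since $C_f$ viewed as a test function has $\|C_f\|_\Phi \le 1$, for each pair of multi-indices $(\alpha_r,\alpha_s')$ there is a factorisation of the finite-difference of the kernel: one writes $\nabla^{\alpha_r}_{k}\nabla^{\alpha_s'}_{l} C_{f;k,l} = \langle u^{(\alpha_r)}_k, v^{(\alpha_s')}_l\rangle_H$ for suitable vectors in a Hilbert space $H$ (this is the standard ``Gram representation'' of a positive-type kernel — here $C_f = A_f^{-1}$ with $A_f$ symmetric invertible — together with the observation that finite differences act on one index at a time). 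Expanding the determinant $\nabla^\alpha E_z$ by multilinearity in its rows and columns, each term is a determinant of a matrix of such inner products, to which Lemma~\ref{lem:Gramineq} gives the bound $\prod_r \langle u^{(\alpha_r)}_{k_r}, u^{(\alpha_r)}_{k_r}\rangle^{1/2} \langle v^{(\alpha_s')}_{l_s}, v^{(\alpha_s')}_{l_s}\rangle^{1/2}$. Each such factor is exactly a diagonal finite-difference of $C_f$, hence bounded by the corresponding weight factor $w_{\alpha_r,k_r}$ (resp.\ $w_{\alpha_s',l_s}$) because $\|C_f\|_\Phi\le1$. Multiplying these bounds reconstructs the product weight $w_{\alpha,z}$, giving $|\nabla^\alpha E_z|\le w_{\alpha,z}$, i.e.\ $\|E\|_\Phi\le1$.

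The main obstacle I anticipate is bookkeeping rather than conceptual: one must be careful that the finite-difference operators $\nabla^{\alpha_k}$, which act on the distinct spatial arguments $z_k$ of the test function $E_z$, correspond cleanly to finite differences of the individual row/column indices of the determinant, and that the Gram factorisation of $C_f$ survives application of these differences (it does, since $\nabla^{\alpha_r}_k$ is a finite linear combination of translates acting on a single index, so $\nabla^{\alpha_r}_k\langle u_k,v_l\rangle = \langle \nabla^{\alpha_r}u_k, v_l\rangle$ with the modified vector $\nabla^{\alpha_r}u_k$ still in $H$). A secondary point is to verify that the diagonal Gram norm $\langle \nabla^{\alpha_r}u_{k_r},\nabla^{\alpha_r}u_{k_r}\rangle$ equals a finite-difference of the diagonal of $C_f$ of the right order and is therefore controlled by $w_{\alpha_r,k_r}$; this uses the symmetry of $C_f$ and the hypothesis $\|C_f\|_\Phi\le1$ precisely as stated. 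Once these identifications are made, the estimate is a direct application of Gram's inequality term by term in the determinant expansion.
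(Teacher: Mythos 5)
Your proposal is correct and follows essentially the same route as the paper: move the finite differences inside the determinant by multilinearity, realise each differenced entry $\nabla^{\alpha}_u\nabla^{\alpha'}_v C_{f;u,v}$ as a Gram inner product $\langle \lambda^\dagger\delta_u,\lambda^\dagger\delta_v\rangle$ in the Hilbert space with kernel $C_f$, apply Lemma~\ref{lem:Gramineq}, and identify the diagonal factors as double differences of $C_f$ on the diagonal, controlled by $\|C_f\|_{\Phi}\le 1$. The only point to make explicit in a write-up (shared with the paper's own proof) is that the bilinear form built from $C_f$ is genuinely an inner product, i.e.\ that $C_f$ is of positive type.
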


\begin{proof}
To simplify the notation, we drop the subscript $f$ from $C_f$.
By \refeq{JF}, we may assume that $\psi^z$ has the form
$\psib_{x_1}\psi_{y_1}\cdots\psib_{x_p}\psi_{y_p}$, in which case
\eq
\label{e:Edet}
    E_z  = \det C_{x,y}.
\en
Let $\lambda_{\alpha,z} = (\prod_{i=1}^p\lambda_{\alpha_i',x_i})
( \prod_{i=1}^p\lambda_{\alpha_i'',y_i})$ with
$\lambda_{\alpha_i',x_i}= w_{\alpha_i',x_i}^{-1}\nabla^{\alpha_i'}$
and
$\lambda_{\alpha_i'',y_i}=
w_{\alpha_i'',y_i}^{-1}\nabla^{\alpha_i''}$,
with $\nabla^{\alpha_i'}$ acting on the $x_i$ variable and $\nabla^{\alpha_i''}$
acting on the $y_i$ variable.
It suffices to prove that
\eq
\label{e:ECbd}
    |\lambda_{\alpha,z} E_{z} |
    \leq
    \prod_{i=1}^p
    \left(\lambda_{\alpha_i',u}\lambda_{\alpha_i',v} C_{u,v}|_{u=v=x}
    \right)^{1/2}
    \left(\lambda_{\alpha_i'',u}\lambda_{\alpha_i'',v} C_{u,v}|_{u=v=y_i}
    \right)^{1/2},
\en
since \refeq{ECbd} implies the inequality
\eq
    \|E \|_{\Phi } \le \sup_{p \ge 1}
    \|C\|_{\Phi}^p .
\en

By \refeq{Edet} and the fact the determinant is linear in rows and columns,
\eq
\label{e:ECbd1}
    |\lambda_{\alpha,z} E_{z}|
    =
    |\lambda_{\alpha,z} \det C_{x,y}|
    =
     |\det(\lambda_{\alpha,z}  C_{x,y})|.
\en
We rewrite the determinant as follows.
Let $V$ be the vector space of all functions $f:\Lambda
\rightarrow \Cbold$.
Given functions $h,k\in V$, we define
\eq
    (h,k) = \sum_{x\in\Lambda} h_x  k_x.
\en
Then we define $f_i,g_i\in V$ by
\eq
    (\lambda_{\alpha_i',x_i } k)_{x_i}
    = ( \delta_{x_i}, \lambda_{\alpha_i',x_i } k)
    = (\lambda_{\alpha_i',x_i }^\dagger  \delta_{x_i}, k)
    = (f_i,k)
\en
and
\eq
    (\lambda_{\alpha_i'',y_i }h )_{y_i} = ( \lambda_{\alpha_i'',y_i } h, \delta_{y_i})
    = (h,\lambda_{\alpha_i'',y_i }^\dagger  \delta_{y_i})
    = (h,g_i).
\en
We define an inner product on $V$ by
\eq
    \pair{f,g}
=
    \sum_{x \in \Lambda} f_x C_{x,y}\bar{g}_y.
\en
By definition, for $i,j \in \{1,\ldots, p\}$,
\eq
    \lambda_{\alpha_i',x_i} \lambda_{\alpha_j'',y_j} C_{x_i,y_{j}}
    =
    \pair{f_{i},g_{j}}  ,
\en
and thus $\det (\lambda_{\alpha,z}  C_{x,y} ) = \det(\pair{f_{i},g_{j}})$.
By Lemma~\ref{lem:Gramineq},
\eq
     |\det (\lambda_{\alpha,z}  C_{x,y} ) |
=
     |\det\left( \pair{f_{i},g_{j}}\right) |
\le
    \prod_{i=1}^{n} \pair{f_{i}, f_i}^{1/2} \pair{g_i,g_i}^{1/2}.
\en
For the right-hand side, we use
\eq
    \pair{f_{i}, f_i}
=
    \pair{
    \lambda_{\alpha_i',x_i }^\dagger \delta_{x_i},
    \lambda_{\alpha_i',x_i }^\dagger  \delta_{x_i}
    }
=
    \lambda_{\alpha_i',u }
    \lambda_{\alpha_i',v }
    C_{u,v}\vert_{u=v=x_i}
    ,
\en
and similarly for $\pair{g_i,g_i}$.
With \refeq{ECbd1}, this proves \refeq{ECbd} and completes the proof.
\end{proof}

Proposition~\ref{prop:EK} is a consequence of the following lemma
(with $h=1$),
which establishes \refeq{EKz}.

\begin{lemma}
\label{lem:EKzz}
In the conjugate fermion field setting of
Section~\ref{sec:cff},
suppose that the covariance satisfies $\|C_f\|_{\Phi(w')}\le 1$.
If $F \in \Ncal (\Lambdabold \sqcup \Lambdabold')$
and $h: \R^{\Lambdabold'_{b}} \to \C$, then
\eq
\label{e:EKzzz}
    \| \Ex_{\Cbf} h F  \|_{T_\phi (w)}  \le
    \Ex_{\Cbf_b}
    \left[|h(\xi)|\,
    \|F  \|_{\Ttimes_{\phi\sqcup\xi} (w\sqcup w')}
    \right]
    .
\en
\end{lemma}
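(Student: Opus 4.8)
The plan is to peel off the two Gaussian integrations separately, performing the genuine estimate only on the fermionic one; the bosonic integration survives on the right-hand side of \refeq{EKzzz} and is handled by the elementary inequality $|\int f|\le\int|f|$ behind \refeq{intav}. Since $\Ex_{\Cbf}=\Ex_{\Cbf_b}\Ex_{\Cbf_f}$ and $h$ depends only on the boson field $\xi$ indexed by $\Lambdabold_b'$, we have $\Ex_{\Cbf}hF=\Ex_{\Cbf_b}\big(h(\xi)\,\Ex_{\Cbf_f}F\big)$, with $\Ex_{\Cbf_f}F\in\Ncal(\Lambdabold\sqcup\Lambdabold_b')$. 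Fix a test function $g$ in the unit ball of $\Phi(w)$. The pairing $\pair{\,\cdot\,,g}_\phi$ extracts only $\psi^y$-coefficients and $\phi$-derivatives for indices lying in $\Lambdabold$, and these operations commute with the Gaussian integral $\Ex_{\Cbf_b}$ over $\xi$ (differentiation under the integral sign, justified by smoothness of the coefficients together with the integrability that is implicit in the finiteness of the right-hand side of \refeq{EKzzz}), so
\begin{equation*}
    \pair{\Ex_{\Cbf}hF,g}_\phi=\Ex_{\Cbf_b}\Big[h(\xi)\,\pair{\Ex_{\Cbf_f}F,g}_{\phi\sqcup\xi}\Big].
\end{equation*}
It therefore suffices to prove the purely fermionic bound $\big|\pair{\Ex_{\Cbf_f}F,g}_{\phi\sqcup\xi}\big|\le\|F\|_{\Ttimes_{\phi\sqcup\xi}(w\sqcup w')}$ for every $g$ in the unit ball of $\Phi(w)$.

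For the fermionic bound I would use an adjoint construction, in the spirit of Sections~\ref{sec:pfprod} and~\ref{sec:compsp0}. Using the algebra isomorphism of Section~\ref{sec:pfprod}, write $F$ as a sum of monomials $\psi^y\psi^{y'}$ in which $\psi^y$ involves only the fermion generators indexed by $\Lambdabold_f$ and $\psi^{y'}$ only those indexed by $\Lambdabold_f'$, with coefficients that are functions of the full boson field $\phi\sqcup\xi$. Because $S_f$, and hence $e^{-S_f}$, has even fermionic degree and so commutes with $\psi^y$, the Grassmann integration $\Ex_{\Cbf_f}$ acts only on the $\psi^{y'}$ factor, replacing it (with no sign) by the scalar it integrates to, and assembling these scalars over all $y'$ gives precisely the test function $E$ of Lemma~\ref{lem:ell2}. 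Substituting this into the definition of the pairing and observing that no $\xi$-derivative ever occurs (since $g$ has no $\Lambdabold_b'$ component), one obtains $\pair{\Ex_{\Cbf_f}F,g}_{\phi\sqcup\xi}=\pair{F,\hat g}_{\phi\sqcup\xi}$, where $\hat g$ is the test function on $(\overrightarrow{\Lambdabold\sqcup\Lambdabold'})^{*}$ given by $\hat g_{(x,x',y,y')}=g_{(x,y)}E_{y'}$ if $x'$ is the empty sequence, and $\hat g_{(x,x',y,y')}=0$ otherwise. Declaring the $\Lambdabold$-species primed and the $\Lambdabold'$-species doubly primed, $\hat g$ is the product in the sense of \refeq{plusnormass-2} of $g$ (extended by zero off the $\Lambdabold$-species) and $E$ (extended by zero off the $\Lambdabold_f'$-species); hence \refeq{plusnormass-2} gives $\|\hat g\|_{\Phi(w\sqcup w')}\le\|g\|_{\Phi(w)}\,\|E\|_{\Phi(w')}$, where I use that $w\sqcup w'$ restricts to $w$ on the $\Lambdabold$-species and to $w'$ on the $\Lambdabold_f'$-species. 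By Lemma~\ref{lem:ell2}, the hypothesis $\|C_f\|_{\Phi(w')}\le1$ yields $\|E\|_{\Phi(w')}\le1$, so $\|\hat g\|_{\Phi(w\sqcup w')}\le1$ whenever $g\in B(\Phi(w))$, and the fermionic bound follows from the definition of the $\Ttimes$ semi-norm in Definition~\ref{def:Tphi-norm}.

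Combining the two ingredients, for every $g$ in the unit ball of $\Phi(w)$,
\begin{equation*}
    \big|\pair{\Ex_{\Cbf}hF,g}_\phi\big|\le\Ex_{\Cbf_b}\Big[|h(\xi)|\,\|F\|_{\Ttimes_{\phi\sqcup\xi}(w\sqcup w')}\Big],
\end{equation*}
and taking the supremum over $g$ gives \refeq{EKzzz}. I expect the only real obstacle to be the bookkeeping in the identity $\pair{\Ex_{\Cbf_f}F,g}_{\phi\sqcup\xi}=\pair{F,\hat g}_{\phi\sqcup\xi}$: one must keep track of the species orderings and the $1/v!$ normalisations, confirm that $\Ex_{\Cbf_f}$ replaces $\psi^{y'}$ by the corresponding value of $E$ without introducing signs (which holds because $e^{-S_f}$ has even degree), and verify that the weight $w\sqcup w'$ factorises so that \refeq{plusnormass-2} applies. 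The genuinely analytic input, Gram's inequality, is already packaged in Lemma~\ref{lem:ell2}, so nothing further is required there.
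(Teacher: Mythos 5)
Your proposal is correct and follows essentially the same route as the paper's proof: the adjoint test function $\hat g$ you construct is precisely the paper's $E^{*}g$ (with the sign bookkeeping the paper records as $\sgn(z_f,y\concat y')$), the bound $\|\hat g\|_{\Phi(w\sqcup w')}\le \|g\|_{\Phi(w)}\|E\|_{\Phi(w')}$ via \refeq{plusnormass-2} is the same step, and the analytic input is the same Gram-inequality estimate $\|E\|_{\Phi}\le 1$ from Lemma~\ref{lem:ell2}. The only cosmetic difference is that you first reduce to a purely fermionic pairing bound before introducing the adjoint, whereas the paper carries the bosonic expectation along throughout.
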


\begin{proof}
By definition, we can write $F = \sum_{z_f \in
(\overrightarrow{\Lambdabold_{f}\sqcup
\Lambdabold_{f}'})^{*}}\frac{1}{z_f!}  F_{z_f}\psi^{z_f}$, with
$F_{z_f}=F_{z_f}(\phi\sqcup\xi)$.
Given $z_{f}$, let $y$ be the
subsequence of $z_{f}$ such that $y \in \Lambdabold_{f}^{*}$, and let
$y'$ be the complementary subsequence of components of $z_{f}$ in
$\Lambdabold_{f}'$. The operator $\Ebold_{\Cbf}$ acts only on the
$\xi$ and $\psi'$ variables.  In particular,
\begin{align}
    \Ex_{\Cbf_f} \psi^{z_{f}}
&=
    \sgn (z_{f},y\concat y')
    \Ex_{\Cbf_f} \psi^{y\concat y'}
\nnb
&=
    \sgn (z_{f},y\concat y')
    \psi^{y}\Ex_{\Cbf_f} \psi^{y'}
=
    \sgn (z_{f},y\concat y')E_{y'}
    \psi^{y}
,
\end{align}
where $\sgn (z_{f},y\concat y')$ denotes the sign of
the permutation that maps $z_f$ to $y\concat y'$, and
$E$ denotes the test function of
Lemma~\ref{lem:ell2}.
Therefore, by \refeq{ECbf},
\begin{align}
    \Ex_{\Cbf} h F
&=
    \sum_{z_{f}\in (\overrightarrow{\Lambdabold_{f}\sqcup \Lambdabold_{f}'})^{*}}
    \frac{1}{z_{f}!}
    (\Ex_{\Cbf_b} h  F_{z_{f}}  )
    \sgn (z_{f},y\concat y')E_{y'}\psi^{y}
.
\end{align}
For $g \in \Phi (w)$, we define $E^{*}g \in \Phi (w\sqcup w')$ by
\begin{equation}
    (E^{*}g)_{z}
=
    \begin{cases}
    \sgn (z_{f},y\concat y')E_{y'}g_{z_{b}\concat y }
    &z_{b} \in \Lambdabold_{b}^{*}\\
    0
    &z_{b} \not \in \Lambdabold_{b}^{*}
    \end{cases}
\end{equation}
for $z \in (\overrightarrow{\Lambdabold \sqcup \Lambdabold'})^{*}$.
Then
\begin{align}
    \pair{\Ex_{\Cbf} h  F,g}_{\phi}
&=
    \sum_{z\in (\overrightarrow{\Lambdabold\sqcup \Lambdabold_{f}'})^{*}}
    \frac{1}{z!}
    (\Ex_{\Cbf_b} h  F_{z}  )
    (E^{*}g)_{z}
=
    \Ex_{\Cbf_b}
    [h(\xi) \pair{F, E^{*}g}_{\phi\sqcup\xi}]
,
\end{align}
and hence
\begin{align}
\label{e:EFgbd}
    \left| \pair{\Ex_{\Cbf_b} h  F,g}_{\phi}  \right|
&\le
    \Ex_{\Cbf_b} [|h(\xi)|\, |\pair{  F, E^{*}g}_{\phi\sqcup\xi}|]
\nnb &
\le
    \left(\Ex_{\Cbf_b} [|h(\xi)|\,\|F\|_{T_{\phi\sqcup\xi} (w\sqcup w')} ] \right)
    \|E^{*}g\|_{\Phi (w\sqcup w')}
.
\end{align}
Derivative operators $\nabla^\alpha$ do not act on the sgn function,
so we may apply \refeq{plusnormass-2} and then
Lemma~\ref{lem:ell2} to conclude that $\|E^{*}g\|_{\Phi
(w\sqcup w')} \le \|g\|_{\Phi (w)}$.  Then
\refeq{EKzzz} follows
by taking the supremum over $g \in B (\Phi(w))$ in \refeq{EFgbd},
and the proof is complete.
\end{proof}

Since $(E^{*}g)_z$ vanishes by definition whenever $z$ contains
an entry in $\Lambdabold_b'$, the above proof shows that
\refeq{EKzzz} could be strengthened by replacing the semi-norm on the
right-hand side by the smaller semi-norm which does not involve
derivatives with respect to the boson fluctuation field $\xi$.

\subsection{Expectation of the fluctuation-field regulator}
\label{sec:ffr}

The main result of this section is Lemma~\ref{lem:EG2zz}, which
immediately gives Proposition~\ref{prop:EG2}.
In preparation for Lemma~\ref{lem:EG2zz}, we prove three preliminary lemmas.
The first of these is proved in \cite[Lemma~6.28]{Bryd09}, and
a precursor of the second is \cite[Lemma~B.2]{BGM04}.

\begin{lemma}
\label{lem:integrability2}
Let $(\xi_{a})_{a \in \Acal}$
be a finite set of Gaussian
random variables with covariance $C$.  Suppose that the
largest eigenvalue of $C$ is less than $\frac 12$.
Let $(\xi,\xi) = \sum_{a \in \Acal} \xi_{a}^{2}$.  Then
\begin{equation}
    \Ebold \, e^{\frac{1}{2}(\xi,\xi)}
    \le
    e^{\sum_{a \in \Acal} C (a ,a )}.
\end{equation}
\end{lemma}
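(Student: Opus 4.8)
\textbf{Proof plan for Lemma~\ref{lem:integrability2}.}
The plan is to reduce the estimate to a diagonal computation by diagonalising the covariance, and then to bound the resulting one-dimensional Gaussian integrals by a crude but sufficient inequality. First I would pass to an orthonormal eigenbasis of $C$: write $C = O^{T} D O$ with $O$ orthogonal and $D = \mathrm{diag}(\mu_{1},\ldots,\mu_{n})$ the eigenvalues, and set $\zeta = O\xi$, so that $(\zeta_{a})_{a}$ are \emph{independent} mean-zero Gaussians with $\Ex \zeta_{a}^{2} = \mu_{a}$. Since $O$ is orthogonal, $(\xi,\xi) = \sum_{a}\xi_{a}^{2} = \sum_{a}\zeta_{a}^{2} = (\zeta,\zeta)$, so the left-hand side becomes $\Ex\, e^{\frac12\sum_{a}\zeta_{a}^{2}} = \prod_{a} \Ex\, e^{\frac12 \zeta_{a}^{2}}$ by independence.

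Next I would compute each factor. For a single mean-zero Gaussian $\zeta$ with variance $\mu < \tfrac12$, the standard identity gives $\Ex\, e^{\frac12\zeta^{2}} = (1-\mu)^{-1/2}$; the hypothesis that the largest eigenvalue of $C$ is less than $\tfrac12$ guarantees $\mu_{a} < \tfrac12$ for every $a$, so every factor is finite and the integral converges. Then it remains to show $(1-\mu_{a})^{-1/2} \le e^{\mu_{a}}$, i.e. $-\tfrac12\log(1-\mu) \le \mu$ for $0 \le \mu < \tfrac12$. This follows from $-\log(1-\mu) = \sum_{k\ge 1}\mu^{k}/k \le \sum_{k\ge 1}\mu^{k-1}\cdot\mu \le 2\mu$ for $\mu \le \tfrac12$ (bounding the geometric series by $2$), hence $-\tfrac12\log(1-\mu)\le\mu$. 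Taking the product over $a$ gives
\begin{equation}
    \Ex\, e^{\frac12(\xi,\xi)} = \prod_{a\in\Acal}(1-\mu_{a})^{-1/2} \le \prod_{a\in\Acal} e^{\mu_{a}} = e^{\sum_{a}\mu_{a}} = e^{\mathrm{tr}\, C} = e^{\sum_{a\in\Acal} C(a,a)},
\end{equation}
using that the trace is basis-independent so $\sum_{a}\mu_{a} = \sum_{a}C(a,a)$.

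I do not anticipate a serious obstacle here: the only points requiring any care are checking the convergence condition (handled by the eigenvalue hypothesis) and the elementary inequality $(1-\mu)^{-1/2}\le e^{\mu}$ on $[0,\tfrac12)$. If one prefers to avoid diagonalisation, an alternative is to work directly with the moment generating functional of the Gaussian field applied to the quadratic form $\frac12(\xi,\xi)$, using that $\Ex\, e^{\frac12(\xi, M\xi)} = \det(I - CM)^{-1/2}$ for symmetric $M$ with $CM$ having spectral radius below $1$, and then specialise to $M = I$; but the eigenbasis argument is the cleanest and makes the trace identity transparent.
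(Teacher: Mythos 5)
Your proposal is correct, and it takes a genuinely different route from the paper. The paper proves the bound by a Gaussian interpolation argument: it differentiates $\ln \Ebold\, e^{\frac{t}{2}(\xi,\xi)}$ in $t$, identifies the derivative as $\frac12 \mathrm{Trace}\, (A-t)^{-1}$ with $A=C^{-1}$, bounds each resolvent eigenvalue by $(\lambda^{-1}-t)^{-1}=\lambda(1-t\lambda)^{-1}\le 2\lambda$ using $\lambda\le\frac12$, and integrates over $t\in(0,1)$. You instead diagonalise $C$ outright, factor the expectation into independent one-dimensional integrals, evaluate each as $(1-\mu_a)^{-1/2}$, and use the elementary inequality $-\frac12\log(1-\mu)\le\mu$ on $[0,\frac12)$ --- which is exactly what the paper's bound becomes after integrating in $t$, so the two arguments rest on the same spectral estimate but package it differently. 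Your version is arguably more self-contained (no interchange of $\frac{d}{dt}$ with the expectation to justify, and no need to invert $C$: the paper's proof sets $A=C^{-1}$ and so tacitly assumes $C$ nonsingular, whereas your argument handles zero eigenvalues with no extra work since $0\le\mu_a<\frac12$ covers them). The paper's interpolation method has the advantage of generalising to situations where an explicit diagonalisation is unavailable or where one wants intermediate covariances $C_t$, but for this lemma both yield the same exact identity $\Ebold\, e^{\frac12(\xi,\xi)}=\det(I-C)^{-1/2}$ followed by the same one-variable inequality. One small presentational quibble: with $C=O^TDO$ and $\zeta=O\xi$ the covariance of $\zeta$ is $OCO^T=D$ as you intend, but it is worth stating that computation explicitly rather than asserting independence.
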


\begin{proof}
Let $t \in (0,1)$.
It suffices to show that
\begin{equation}
    \frac{d}{dt }
    \ln \Ebold \, e^{\frac{t }{2} (\xi,\xi)}
    \le
    \sum_{a \in \Acal} C (a ,a),
\end{equation}
since the desired inequality then follows by integration over $t \in (0,1)$.

Let $A$ be the inverse of the matrix $C$. The eigenvalues
of $A$ are at least $2$ by the hypothesis on $C$, so the inverse
matrix $C_{t}= (A-t)^{-1}$ exists.  Let $\Ebold_t$
denote the Gaussian expectation with covariance $C_t$. Then
\begin{align}
    \begin{split}
    \frac{d}{dt }
    \ln \Ebold \, e^{\frac{t }{2} (\xi,\xi)}
    &=
    \frac{1}{2}
    \Ebold_t (\xi,\xi)
    =
    \frac{1}{2}
    \sum_{a \in \Acal} C_{t} (a ,a)
    = \frac 12 {\rm Trace}\, C_t
    = \frac 12 \sum_\lambda (\lambda^{-1} - t)^{-1},
    \end{split}
\end{align}
where the sum over $\lambda$ runs over the eigenvalues of $C$
(with multiplicity).
Since each $\lambda$ is at most $\frac 12$ by hypothesis,
$(\lambda^{-1}-t)^{-1} = \lambda (1-t \lambda)^{-1}
\leq 2 \lambda$, and hence
\begin{equation}
    \frac{d}{dt }
    \ln \Ebold \, e^{\frac{t }{2} (\xi,\xi)}
    \le
    \sum_{\lambda}
    \lambda
    = \mathrm{Trace}\, C
    = \sum_{a \in \Acal} C (a ,a),
\end{equation}
which completes the proof.
\end{proof}

\begin{lemma}[Lattice Sobolev inequality]
\label{lem:sobolev2}
Let $f:B \rightarrow \C$, where $B\in \Bcal$ is a block
of side length $R$. Let $\nabla_R=R\nabla$.  Then for any $x \in B$,
\begin{equation}
    \label{e:sobolev2}
    |f (x)|^{2}
    \le
    2^{3d+2} R^{-d}
    \sum_{y \in B}
    \sum_{|\alpha|_\infty \le 1}
    | \nabla_R^{\alpha }f (y)|^{2}.
\end{equation}
\end{lemma}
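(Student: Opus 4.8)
The plan is to reduce the inequality to the one-dimensional case via telescoping (the discrete fundamental theorem of calculus) together with averaging, and then to obtain the $d$-dimensional bound by iterating the one-dimensional estimate once over each of the $d$ coordinate directions, spending one power of $\nabla_R$ at each step.

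First I would treat $d=1$. Let $I$ be a set of $R$ consecutive integers and $h:I\to\C$. For $x,y\in I$, telescoping the forward differences $\nabla^{+}h(t)=h(t+1)-h(t)$ along the segment from $y$ to $x$ gives $|h(x)|\le|h(y)|+\sum_{t}|\nabla^{+}h(t)|$, the sum running over the (at most $R$) indices $t$ with $t,t+1\in I$. Averaging over $y\in I$, applying the Cauchy--Schwarz inequality separately to $R^{-1}\sum_{y}|h(y)|$ and to $\sum_{t}|\nabla^{+}h(t)|$, writing $\nabla^{+}=R^{-1}\nabla_R^{+}$, and using $(a+b)^{2}\le 2a^{2}+2b^{2}$, I obtain
\begin{equation}
    |h(x)|^{2}\le 2R^{-1}\Big(\sum_{t\in I}|h(t)|^{2}+\sum_{t\in I}|\nabla_R^{+}h(t)|^{2}\Big),
\end{equation}
with the convention that the $\nabla_R^{+}h(t)$ term is omitted when $t+1\notin I$.

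Next I would pass to general $d$ by iterating this bound. Writing $B=I_{1}\times\cdots\times I_{d}$ and fixing $x=(x_{1},\dots,x_{d})\in B$, I apply the one-dimensional estimate first in the variable $x_{1}$ to $f$, then in the variable $x_{2}$ to each function $x_{2}\mapsto(\nabla_{R,1}^{\beta_{1}}f)(t_{1},x_{2},\dots,x_{d})$ produced at the previous step (here $\nabla_{R,k}=R\nabla^{e_{k}}$ and $\beta_{1}\in\{0,1\}$), and so on through all $d$ directions, arriving at
\begin{equation}
    |f(x)|^{2}\le(2R^{-1})^{d}\sum_{t_{1}\in I_{1}}\cdots\sum_{t_{d}\in I_{d}}\ \sum_{\beta\in\{0,1\}^{d}}\big|(\nabla_{R,1}^{\beta_{1}}\cdots\nabla_{R,d}^{\beta_{d}}f)(t_{1},\dots,t_{d})\big|^{2}.
\end{equation}
Each multi-index $\alpha$ occurring on the right has $|\alpha|_{\infty}\le1$, so this is at most $2^{d}R^{-d}\sum_{y\in B}\sum_{|\alpha|_{\infty}\le1}|\nabla_R^{\alpha}f(y)|^{2}$, and since $2^{d}\le 2^{3d+2}$ this yields \eqref{e:sobolev2}. (This also makes transparent why the statement sums over $|\alpha|_{\infty}\le1$ rather than $|\alpha|_{1}\le1$: it is precisely the mixed differences $\nabla_{R,1}^{\beta_{1}}\cdots\nabla_{R,d}^{\beta_{d}}f$ that the iteration produces.)

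The only delicate point is the bookkeeping of which difference quotients are defined. At the $k$-th stage the sum over the new index $t_{k}$ must be restricted, when $\beta_{k}=1$, to values with $t_{k}+1$ still in the relevant interval, so that no iterated difference reaches outside $B$; once this restriction is in force the stages line up cleanly and every difference quotient above is evaluated on a stencil contained in $B$. Everything else is a routine use of Cauchy--Schwarz and $(a+b)^{2}\le 2a^{2}+2b^{2}$, and the resulting constant $2^{d}$ is far below the claimed $2^{3d+2}$, so the argument is nowhere near tight.
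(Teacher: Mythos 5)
Your proof is correct, and it takes a genuinely different route from the paper's. The paper does not iterate a one-dimensional inequality: it multiplies $f$ by $x_{1}\dotsb x_{d}$ so that the product vanishes on the lower faces of $B$, telescopes the full mixed difference $\nabla^{e_{1}}\dotsb\nabla^{e_{d}}$ of that product in a single step, applies Cauchy--Schwarz once, and then undoes the multiplication via the discrete product rule $\nabla^{e_i}(y_i h) = y_i\nabla^{e_i}h + \nabla^{e_i}h + h$ (which is what generates the sum over $\alpha\in\{0,1\}^{d}$ and the factors of $2$). That argument only covers points $x$ with every coordinate $x_i\ge R/2$, so the paper finishes with reflections through the bisecting hyperplanes, which is why backward differences, and hence the index set $|\alpha|_\infty\le 1$ over all $2d$ signed directions, appear in the statement. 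Your dimension-by-dimension iteration replaces the vanishing-at-a-corner trick by averaging over the base point $y$, which removes the need for both the coordinate multiplication and the reflection step, uses only forward differences (a subset of the $|\alpha|_\infty\le 1$ terms, which is harmless since all summands are nonnegative), and yields the sharper constant $2^{d}$ in place of $2^{3d+2}$; the price is the bookkeeping of domains of definition of the iterated differences near the upper faces, which you correctly flag and resolve by restricting the sums. Both proofs are elementary and of comparable length; yours is more modular, while the paper's is a single application of the $d$-dimensional discrete fundamental theorem of calculus.
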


\begin{proof}
We can choose coordinates on $B$ such that $B=\{0,1,\dotsc ,R-1 \}^{d}$.
Let $g:B\rightarrow \Rbold$ be any function that vanishes on
$\cup_{i=1}^{d}\{(x_{1},\dotsc ,x_{d}) \in B: x_{i} = 0 \}$. Then
we have the telescoping sum
\begin{equation}
    g (x)
    =
    \sum_{y:y_{i}< x_{i}\, \forall i} \nabla^{e_1}\dotsb \nabla^{e_d} \,g (y) .
\end{equation}
Therefore, by the Cauchy--Schwarz inequality,
\begin{equation}
    |g (x)|
    \le
    \sum_{y \in B} |\nabla^{e_1}\dotsb \nabla^{e_d} \,g (y)|
    \le
    \big(
    |B|\sum_{y \in B} |\nabla^{e_1}\dotsb \nabla^{e_d} \,g (y)|^{2}
    \big)^{1/2}.
\end{equation}
We apply this to $g (x) = x_{1}\dotsb x_{d}f (x)$, for points $x \in B$ with
each coordinate $x_{i}\ge R/2$.  This gives
\begin{equation}
    |f(x)|
    \le
    \left(\frac 2R \right)^{d}
    |x_{1}\dotsb x_{d}f(x)|
    \le
    2^{d}
    \big(
    |B|^{-1}
    \sum_{y \in B} |\nabla^{e_1}\dotsb \nabla^{e_d}y_{1}\dotsb y_{d}f (y)|^{2}
    \big)^{1/2}.
\end{equation}
We evaluate the derivatives using $ \nabla^{e_i}
y_{i} h (y) = y_{i}  \nabla^{e_i} h (y) + \nabla^{e_i} h (y)+ h (y)$.
Since $y_{i}  \le R$,
\begin{equation}
    |f(x)|^{2}
    \le
    2^{2d}
    |B|^{-1}\sum_{y \in B}
    \big(
    \sum_{\alpha \in \{0,1\}^{d}}
    2
    | \nabla_R^{\alpha} f (y)|
    \big)^{2}
    \le
    2^{2d+2}
    |B|^{-1}\sum_{y \in B}
    2^{d}\sum_{\alpha \in \{0,1\}^{d}}
    | \nabla_R^{\alpha} f (y)|^{2}.
\end{equation}
Since this holds for all functions $f$ we can change variables by
reflections through hyperplanes bisecting $B$ so as to remove the
assumption that every coordinate $x_i$ obeys $x_{i}\ge R/2$.  These reflections
turn forward derivatives into backward derivatives, and we obtain
\eqref{e:sobolev2} by noticing that the absolute value of a backward
derivative equals the absolute value of a forward derivative at a neighbouring
point.
\end{proof}

Recall the definition of $G(X,\phi)$ in Definition~\ref{def:ffregulator},
for $X\in \Pcal$ a polymer as in Definition~\ref{def:blocks}.

\begin{lemma}
\label{lem:Gtxi}
For $X  \subset \Lambda$, $t \ge 0$, and $\phi \in \C^{\Lambda}$,
\eq
    G^t(X,\phi)
    \le
    \exp\left[
    \frac{1}{2}
    \sum_{y \in X^{\Box}}
    \sum_{|\alpha|_{1} \le d+p_{\Phi}}
    | \xi (y,\alpha )|^{2}
    \right],
\en
where $\xi (y,\alpha) =
c t^{1/2} R^{-d/2} \ell^{-1} \nabla_R^{\alpha }\phi (y)$
for some constant $c$ depending only on $d$.
\end{lemma}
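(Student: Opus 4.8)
The plan is to unwind the definition of $G^t(X,\phi)$ and absorb all the constants and derivative sums into a Gaussian-type expression of the stated form. First I would use the definition \refeq{GPhidef} together with \refeq{GXYfluct} (or directly the polymer form \refeq{GPhidef2}, after noting that for general $X\subset\Lambda$ the product over $x\in X$ with the $|B_x|^{-1}$ weights is dominated by the product over blocks $B\subset X^\Box$) to write
\begin{equation}
    G^t(X,\phi)
    =
    \exp\Big( t\sum_{x\in X}|B_x|^{-1}\|\phi\|_{\Phi(B_x^\Box,\ell)}^2 \Big)
    \le
    \exp\Big( t\sum_{B\in\Bcal(X^\Box)}\|\phi\|_{\Phi(B^\Box,\ell)}^2 \Big).
\end{equation}
The next step is to bound each local norm $\|\phi\|_{\Phi(B^\Box,\ell)}$ by a finite sum of squared rescaled finite-difference derivatives. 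Here I would invoke the lattice Sobolev inequality of Lemma~\ref{lem:sobolev2}: the $\Phi(\ell)$ norm over $B^\Box$ is, by Example~\ref{ex:h}, a supremum over $y\in B^\Box$ and $|\alpha|_1\le p_\Phi$ of $\ell^{-1}R^{|\alpha|_1}|\nabla^\alpha\phi(y)|$; applying \refeq{sobolev2} to the function $\nabla^\alpha\phi$ (which introduces the extra $d$ derivatives, hence the range $|\alpha|_1\le d+p_\Phi$ in the statement), each such supremum is bounded by $2^{3d+2}R^{-d}\ell^{-2}$ times a sum over $y$ and over $|\beta|_\infty\le 1$ of $|\nabla_R^{\alpha+\beta}\phi(y)|^2$.

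Summing over the finitely many blocks $B$ making up $X^\Box$ and over the finitely many multi-indices $\alpha$ with $|\alpha|_1\le p_\Phi$ then yields
\begin{equation}
    t\sum_{B\in\Bcal(X^\Box)}\|\phi\|_{\Phi(B^\Box,\ell)}^2
    \le
    C\, t\, R^{-d}\ell^{-2}
    \sum_{y\in X^\Box}\sum_{|\alpha|_1\le d+p_\Phi}|\nabla_R^\alpha\phi(y)|^2
\end{equation}
for a constant $C=C(d)$ coming from the Sobolev constant, the number of relevant multi-indices, and the bounded overlap of the sets $B^\Box$. Setting $\xi(y,\alpha)=c\,t^{1/2}R^{-d/2}\ell^{-1}\nabla_R^\alpha\phi(y)$ with $c=\sqrt{2C}$ gives exactly the claimed bound, since then the right-hand side above equals $\tfrac12\sum_{y\in X^\Box}\sum_{|\alpha|_1\le d+p_\Phi}|\xi(y,\alpha)|^2$.

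The only real bookkeeping issue — and the step I would be most careful with — is the passage from the $|B_x|^{-1}$-weighted product in \refeq{GPhidef} for arbitrary $X$ to a clean sum over blocks in $X^\Box$, and then keeping track of how many times each lattice point $y$ is counted when different small-set neighbourhoods $B^\Box$ overlap; this bounded-overlap combinatorics is what determines that the final constant depends only on $d$. Everything else is a direct substitution of Lemma~\ref{lem:sobolev2} and collecting constants, so I do not expect any genuine obstacle beyond that counting.
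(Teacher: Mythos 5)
Your proposal is correct and follows essentially the same route as the paper: both unwind the definition of $G^t$, apply the lattice Sobolev inequality of Lemma~\ref{lem:sobolev2} to $\nabla_R^\alpha\phi$ on $B^\Box$ to trade the local sup-norm for an $\ell^2$ sum carrying $d$ extra derivatives, and then use the bounded overlap of the neighbourhoods $B^\Box$ to collect a constant depending only on $d$. The only cosmetic difference is that the paper keeps the $|B_x|^{-1}$-weighted sum over $x\in X$ and uses that $y\in B_x^\Box$ forces $x\in B_y^\Box$, whereas you first pass to a sum over blocks; if you do that, restrict to blocks meeting $X$ (rather than all of $\Bcal(X^\Box)$) so that the final sum stays inside $X^\Box$ as claimed.
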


\begin{proof}
By definition,
\eq
\label{e:GXbfluct}
    G^t (X,\phi)
    =
    \exp \left[ t\sum_{x \in X}
    |B_{x}|^{-1}\|\phi\|_{\Phi (B_{x}^{\Box},\ell)}^2\right],
\en
so it suffices to show that
\begin{align}
    t \sum_{x \in X} |B_{x}|^{-1}\|\phi\|^2_{\Phi(B^\Box_{x},\ell)}
    &\le
    \frac{1}{2}
    \sum_{y \in X^{\Box}}
    \sum_{|\alpha|_{1} \le d+p_{\Phi}}
    | \xi (y,\alpha )|^{2}.
\end{align}
Throughout the proof, $c$ denotes a $d$-dependent constant whose value
may change from line to line.  Note that for $B\in \Bcal$,
$B^\Box$ is a cube (since connectivity of blocks can be via corners)
whose side length is a $d$-dependent multiple of $R$.  We first apply
Lemma~\ref{lem:sobolev2} with $f(x)=\nabla_R^\alpha \phi(x)$ and $B$
replaced by $B^\Box$ to obtain, for $x \in B^\Box$,
\begin{align}
    |\nabla_R^\alpha \phi(x)|^2
    &\le
    cR^{-d}
    \sum_{y \in B^\Box}
    \sum_{|\alpha'|_{\infty} \le 1}
    |  \nabla_R^{\alpha + \alpha' }\phi (y)|^{2}.
\end{align}
From this, we obtain
\begin{align}
    \|\phi\|_{\Phi(B^\Box,\ell)}^2
    &\le
    \max_{|\alpha|_{1} \le p_{\Phi} ,x \in B^\Box}
     |\ell^{-1} \nabla_R^{\alpha}\phi (x)|^{2}
     \le
    cR^{-d}
    \sum_{y \in B^\Box}
    \sum_{|\alpha|_{1} \le d+p_{\Phi}}
    | \ell^{-1}\nabla_R^{\alpha }\phi (y)|^{2}.
\end{align}
If $y \in B_{x}^{\Box}$ then $x \in B_{y}^{\Box}$ and
$|B_{y}^{\Box}|/|B|$ is bounded by a geometric constant.  With a
larger value of $c$, this gives
\begin{align}
    t \sum_{x \in X} |B_{x}|^{-1}\|\phi\|^2_{\Phi(B_{x}^\Box,\ell)}
    &\le
    ctR^{-d}
    \sum_{y \in X^\Box}
    \sum_{|\alpha|_{1} \le d+p_{\Phi}}
    | \ell^{-1}\nabla_R^{\alpha }\phi (y)|^{2}
    \nnb & =
    \frac{1}{2}
    \sum_{y \in X^{\Box}}
    \sum_{|\alpha|_{1} \le d+p_{\Phi}}
    | \xi (y,\alpha )|^{2},
\end{align}
and the proof is complete.
\end{proof}

Now we restate, and prove, Proposition~\ref{prop:EG2} as the following
lemma.  Recall that the $\Phi^+(\ell)$ norm is the
$\Phi(\ell)$ norm with $p_\Phi$ increased to $p_\Phi +d$.

\begin{lemma}
\label{lem:EG2zz} Let $t \ge 0$, $\Econstg >1$, and let $X
\subset \Lambda$.  There exists a (small) positive constant
$c (\Econstg)$, which is independent of $R$,
such that if $\|\Cbf_b\|_{\Phi^+(\ell)}\le
c (\Econstg )t^{-1}$, then
\begin{equation}
\label{e:EG2zz}
    0 \leq \Ex_{\Cbf_b}  G^t(X,\phi)  \le \Econstg^{R^{-d}|X|}.
\end{equation}
\end{lemma}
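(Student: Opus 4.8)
The plan is to combine the two preliminary lemmas that precede the statement; it is enough to treat the case $X\in\Pcal$, for which $G^t(X,\phi)=\exp\bigl[t\sum_{B\in\Bcal(X)}\|\phi\|_{\Phi(B^\Box,\ell)}^2\bigr]$. First I would apply Lemma~\ref{lem:Gtxi} to dominate the regulator by the exponential of a quadratic form in the field, $G^t(X,\phi)\le\exp\bigl[\tfrac12\sum_{y\in X^\Box}\sum_{|\alpha|_1\le d+p_\Phi}|\xi(y,\alpha)|^2\bigr]$ with $\xi(y,\alpha)=ct^{1/2}R^{-d/2}\ell^{-1}\nabla_R^\alpha\phi(y)$. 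Each $\xi(y,\alpha)$ is a fixed linear functional of $\phi$, so under $\Ex_{\Cbf_b}$ the real and imaginary parts of the collection $\bigl(\xi(y,\alpha)\bigr)_{y\in X^\Box,\ |\alpha|_1\le d+p_\Phi}$ form a finite centred Gaussian family; write $\hat C$ for its covariance matrix. Then $0\le\Ex_{\Cbf_b}G^t(X,\phi)\le\Ex_{\Cbf_b}e^{\frac12(\xi,\xi)}$ by positivity of $G^t$, and Lemma~\ref{lem:integrability2} is built exactly for the right-hand side: as soon as the largest eigenvalue of $\hat C$ is below $\tfrac12$ it yields $\Ex_{\Cbf_b}e^{\frac12(\xi,\xi)}\le\exp[\mathrm{Tr}\,\hat C]$.

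It then remains to estimate $\mathrm{Tr}\,\hat C$ and to check the eigenvalue condition. The entries of $\hat C$ have the form $O(1)\,tR^{-d}\ell^{-2}\,\nabla_{R,u}^\alpha\nabla_{R,v}^{\alpha'}\Cbf_b(u,v)\big|_{u=y,v=y'}$, and since $|\alpha|_1,|\alpha'|_1\le d+p_\Phi$ the definition of the $\Phi^+(\ell)$ norm gives $|\nabla_{R,u}^\alpha\nabla_{R,v}^{\alpha'}\Cbf_b(u,v)|\le\ell^2\|\Cbf_b\|_{\Phi^+(\ell)}$ — this is precisely why the hypothesis is phrased with the enlarged norm $\Phi^+(\ell)$, the $d$ extra derivatives being exactly those introduced by the lattice Sobolev inequality (Lemma~\ref{lem:sobolev2}) in the proof of Lemma~\ref{lem:Gtxi}. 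Hence every entry of $\hat C$ is at most $O(1)\,tR^{-d}\|\Cbf_b\|_{\Phi^+(\ell)}$, so $\mathrm{Tr}\,\hat C\le|X^\Box|\cdot N_{d,p_\Phi}\cdot O(1)\,tR^{-d}\|\Cbf_b\|_{\Phi^+(\ell)}$, where $N_{d,p_\Phi}=\#\{\alpha:|\alpha|_1\le d+p_\Phi\}$. Since $X^\Box$ is a union of small sets, its block count exceeds that of $X$ by only a geometric factor, so $|X^\Box|\le c_d|X|$ and thus $\mathrm{Tr}\,\hat C\le C_{d,p_\Phi}\,t\|\Cbf_b\|_{\Phi^+(\ell)}\,R^{-d}|X|$. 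Taking $c(\Econstg)$ small enough that $C_{d,p_\Phi}\,c(\Econstg)\le\log\Econstg$ then turns $\exp[\mathrm{Tr}\,\hat C]$ into $\Econstg^{R^{-d}|X|}$, which is the asserted estimate; the lower bound $0\le\Ex_{\Cbf_b}G^t(X,\phi)$ is automatic once finiteness is known.

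The eigenvalue condition is the step that needs real care, because $\|\Cbf_b\|_{\Phi^+(\ell)}$ by itself does not control it. By a Gershgorin/row-sum bound the largest eigenvalue of $\hat C$ is at most $\max_{(y,\alpha)}\sum_{(y',\alpha')}|\hat C((y,\alpha),(y',\alpha'))|$; the sum over $\alpha'$ is harmless, but the sum over $y'\in X^\Box$ can only be absorbed because the covariances to which the estimate is applied are localised on scale $R$ — the fluctuation covariances of the finite-range decompositions underlying the method have range $O(R)$ — so $\nabla_{R,u}^\alpha\nabla_{R,v}^{\alpha'}\Cbf_b(u,v)$ vanishes unless $|u-v|=O(R)$; each row of $\hat C$ then has only $O(R^d)$ nonzero entries, the row sum is at most $O(R^d)\cdot O(1)\,tR^{-d}\|\Cbf_b\|_{\Phi^+(\ell)}=O\bigl(t\|\Cbf_b\|_{\Phi^+(\ell)}\bigr)$ uniformly in $R$ and in $X$, and a final shrinking of $c(\Econstg)$ forces it below $\tfrac12$. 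With the eigenvalue bound in hand Lemma~\ref{lem:integrability2} applies and also delivers the finiteness of $\Ex_{\Cbf_b}G^t(X,\phi)$, closing the argument.
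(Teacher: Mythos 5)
Your argument is correct and is essentially the paper's own proof: Lemma~\ref{lem:Gtxi} to dominate $G^t$ by $e^{\frac12(\xi,\xi)}$, Lemma~\ref{lem:integrability2} for the Gaussian exponential moment, the diagonal/entrywise bound $O(1)\,tR^{-d}\|\Cbf_b\|_{\Phi^+(\ell)}$ for the covariance of the variables $\xi(y,\alpha)$, and $|X^\Box|\le c_d|X|$ to convert the trace into $\alpha_G^{R^{-d}|X|}$. The one point where you are more explicit than the paper is the eigenvalue condition: the paper compresses it into ``$\lambda_{\rm max}\le\|Q\|_1\le cR^d\|Q\|_\infty$'' via Young's inequality for the convolution operator, and that middle inequality, exactly as your Gershgorin row-sum version, tacitly uses that the fluctuation covariance is supported on scale $R$ (the finite-range property of the decompositions used in the applications), so your reading of that step is the intended one.
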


\begin{proof}
By Lemma~\ref{lem:Gtxi},
\begin{equation}
    \Ex_{\Cbf_b}  G^t(X,\phi)
    \le
    \Ex_{\Cbf_b}
    \exp\left[
    \frac{1}{2}
    \sum_{y \in X^{\Box}}
    \sum_{|\alpha|_{1} \le d+p_{\Phi}}
    | \xi (y,\alpha )|^{2}
    \right].
\end{equation}
The variables $\xi (x,\alpha)$ are Gaussian and we
denote their covariance by $Q$.
The largest eigenvalue $\lambda_{\rm max}$ of $Q$ is at most
the norm of $Q$ considered a convolution operator on $l^2(X^\Box)$.
Therefore, using Young's inequality we obtain
\eq
\label{e:YI}
    \lambda_{\rm max} \leq \sup_{f : \|f\|_2 \leq 1} \|Q*f\|_2
    \leq \|Q\|_1
    \leq
    cR^{d} \|Q\|_\infty.
\en
Since $Q$ is a positive-definite function, its maximum value occurs
on the diagonal, and obeys
\eq
    \|Q\|_\infty \leq ct R^{-d}
    \max_{|\alpha|_{1} \le d+p_{\Phi},\, x\in X^\Box}
    |\ell^{-2}\nabla_{R}^{2\alpha}\Cbf_{b;x,x}|
    \le
    c t R^{-d}  \|\Cbf_b\|_{\Phi^+ },
\en
so
\begin{equation}
    \lambda_{\rm max} \leq
    ct  \|\Cbf_b\|_{\Phi^+} .
\end{equation}
This will be less than $\frac 12$ if
$\|\Cbf_b\|_{\Phi^+} \le c (d)t^{-1}$ with
$c (d)$ sufficiently small.
We may therefore apply Lemma~\ref{lem:integrability2} with $\xi_{a}$ replaced by
$\xi (x,\alpha)$. This gives
\begin{equation}
    \Ex_{\Cbf_b}   G^t(X,\phi)
    \le
    e^{\sum_{y\in X^{\Box}}\sum_{|\alpha|_{1}
    \le d+p_{\Phi}} \text{Var} (\xi (y,\alpha))}.
\end{equation}
Since $\text{Var} (\xi (y,\alpha)) \le c t R^{-d} \|\Cbf_b\|_{\Phi^+
}$ this gives
\begin{equation}
    \Ex_{\Cbf_b}
    G^t(X,\phi)
    \le
    e^{ct \|\Cbf_b\|_{\Phi^+} R^{-d}|X^{\Box}|},
\end{equation}
and the desired result follows since $|X^{\Box}| \le a |X|$
for some $a=a(d)$.
\end{proof}

%%%%%%%%%%%%%%%%%%%%%%%%%%%%%%%%%%%%%%%%%%%%%%%%%%%%%%%%%%%%%%%%%%%%%%
%%%%%%%%%%%%%%%%%%%%%%%%%%%%%%%%%%%%%%%%%%%%%%%%%%%%%%%%%%%%%%%%%%%%%%
\section*{Acknowledgements}

The work of both authors was supported in part by NSERC of Canada.
DB gratefully acknowledges the support and hospitality of
the Institute for Advanced Study at Princeton and of Eurandom during part
of this work.
GS gratefully acknowledges the support and hospitality of
the Institut Henri Poincar\'e, and of the Kyoto
University Global COE Program in Mathematics,
during stays in Paris and Kyoto where part of this work was done.
We thank Beno\^it Laslier for many helpful comments,
and an anonymous referee for numerous pertinent suggestions.

\bibliography{../../bibdef/bib}
%\bibliography{references-my-papers}
\bibliographystyle{plain}

\end{document}